\newtheorem{theorem}{Theorem}[section]
\newtheorem{lemma}[theorem]{Lemma}
\newtheorem{proposition}[theorem]{Proposition}
\newtheorem{example}{Example}[section]
\DeclareMathAlphabet{\altmathcal}{OMS}{cmsy}{m}{n}
\newcommand\stam[1]{}
\newcommand{\A}{\altmathcal {A}}
\newcommand{\B}{\altmathcal {B}}
\newcommand{\C}{\altmathcal {C}}
\newcommand{\D}{\altmathcal {D}}
\newcommand{\E}{\altmathcal {E}}
\renewcommand{\S}{\altmathcal {S}}
\newcommand{\T}{\altmathcal {T}}
\newcommand{\zug}[1]{\langle #1 \rangle}
\title{Canonicity in GFG and Transition-Based Automata}
\author{Bader Abu Radi and Orna Kupferman
\institute{School of Computer Science and Engineering, The Hebrew University, Israel}
\email{bader.aburadi@gmail.com \quad orna@cs.huji.ac.il}
}
\begin{document}
\maketitle

\begin{abstract}
	
	Minimization of deterministic automata on finite words results in a {\em canonical\/} automaton. For deterministic automata on infinite words, no canonical minimal automaton exists, and a language may have different minimal deterministic B\"uchi (DBW) or co-B\"uchi (DCW) automata.

	In recent years, researchers have studied  {\em good-for-games\/} (GFG) automata -- nondeterministic automata that can resolve their nondeterministic choices in a way that only depends on the past. Several applications of automata in formal methods, most notably synthesis, that are traditionally based on deterministic automata, can instead be based on GFG automata.

	The {\em minimization\/} problem for DBW and DCW is NP-complete, and it stays NP-complete for GFG B\"uchi and co-B\"uchi automata. On the other hand, minimization of GFG co-B\"uchi automata with {\em transition-based\/} acceptance (GFG-tNCWs) can be solved in polynomial time. In these automata, acceptance is defined by a set $\alpha$ of transitions, and a run is accepting if it traverses transitions in $\alpha$ only finitely often. This raises the question of canonicity of minimal deterministic and GFG automata with transition-based acceptance.

	In this paper we study this problem. We start with GFG-tNCWs and show that the safe components (that is, these obtained by restricting the transitions to these not in $\alpha$) of all minimal GFG-tNCWs are isomorphic, and that by saturating the automaton with transitions in $\alpha$ we get isomorphism among all minimal GFG-tNCWs. Thus, a canonical form for  minimal GFG-tNCWs can be obtained in polynomial time. We continue to DCWs with transition-based acceptance (tDCWs), and their dual tDBWs. We show that here, while no canonical form for minimal automata exists, restricting attention to the safe components is useful, and implies that the only minimal tDCWs that have no canonical form are these for which the transition to the GFG model results in strictly smaller automaton, which do have a canonical minimal form. 
	
\end{abstract}

\section{Introduction}
\label{intro}

Automata theory is one of the longest established areas in computer science. A classical problem in automata theory is {\em minimization}: generation of an equivalent automaton with a minimal number of states. For deterministic automata on finite words, a minimization algorithm, based on the Myhill-Nerode right congruence \cite{Myh57,Ner58}, generates in polynomial time a canonical minimal deterministic automaton \cite{Hop71}. Essentially, the canonical automaton, a.k.a. the {\em quotient automaton}, is obtained by merging equivalent states.

A prime application of automata theory is specification, verification, and synthesis of reactive systems  \cite{VW94,Kup15}. Since we care about the on-going behaviors of nonterminating  systems, the automata run on infinite words and define $\omega$-regular languages. Acceptance in such automata is determined according to the set of states that are visited infinitely often during the run. In B\"uchi automata \cite{Buc62} (NBW and DBW, for  nondeterministic and deterministic B\"uchi word automata, respectively), the acceptance condition is a subset $\alpha$ of states, and a run is accepting iff it visits $\alpha$ infinitely often. Dually, in co-B\"uchi automata (NCW and DCW), a run is accepting iff it visits $\alpha$ only finitely often.

For $\omega$-regular languages, no canonical minimal deterministic automaton exists, and a language may have different minimal DBWs or DCWs. 
Consider for example the DCWs $\A_1$ and $\A_2$ appearing in Figure~\ref{2min dcws}. Both are minimal DCWs for the language $L=(a+b)^* \cdot (a^\omega + b^\omega)$ (``only finitely many $a$'s or only finitely many $b$'s''; it is easier to see this by considering the dual DBWs, for ``infinitely many $a$'s and infinitely many $b$'s"). 

\begin{figure}[htb]	
	
	\begin{center}
		
		\includegraphics[width=.8\textwidth]{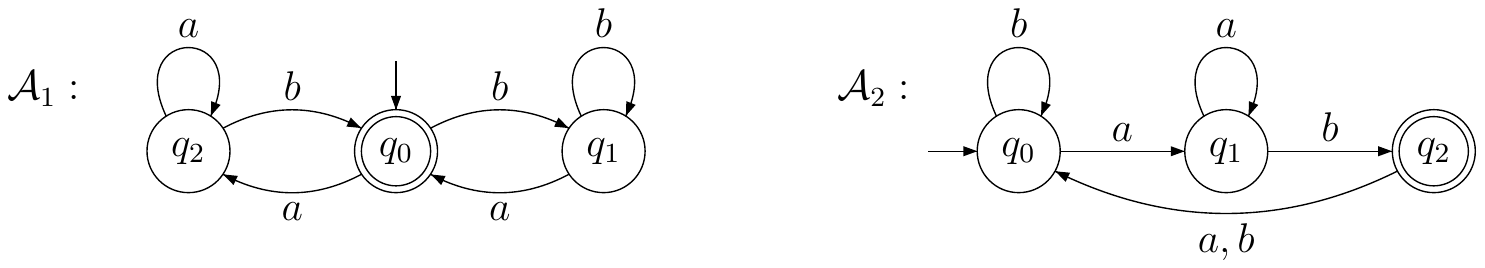}
		
		\caption{The DCWs $\A_1$ and $\A_2$.}
		
		\label{2min dcws}
		
	\end{center}
	
\end{figure}

	\vspace{-6mm}
Since all the states of $\A_1$ and $\A_2$ recognize the language $L$ and may serve as initial states, Figure~\ref{2min dcws} actually presents six different DCWs for $L$, and more three-state DCWs for $L$ exist. The DCWs $\A_1$ and $\A_2$, are however ``more different" than variants of $\A_1$ obtained by changing the initial state: they have a different structure, or more formally, there is no isomorphism between their graphs.

In some applications of automata on infinite words, such as model checking, algorithms can proceed with nondeterministic automata. In other applications, such as synthesis and control, they cannot. The algorithms for these applications involve solving a game that is played on an arena that is based on the automaton. The difficulty in using nondeterministic automata in such game-based algorithms lies in the fact that when a player resolves nondeterminism, her choices should accommodate all possible futures. 

A study of nondeterministic automata that can resolve their nondeterministic choices in a way that only depends on the past started in \cite{KSV06}, where the setting is modeled by means of tree automata for derived languages. It then continued by means of  {\em good for games\/}  (GFG) automata \cite{HP06}.\footnote{GFGness is also used in \cite{Col09} in the framework of cost functions under the name ``history-determinism''.} A nondeterministic automaton $\A$ over an alphabet $\Sigma$ is GFG if there is a strategy $g$ that maps each finite word $u \in \Sigma^*$ to the transition to be taken after $u$ is read; and following $g$ results in accepting all the words in the language of $\A$. Note that a state $q$ of $\A$ may be reachable via different words, and $g$ may suggest different transitions from $q$ after different words are read. Still, $g$ depends only on the past, namely on the word read so far. Obviously, there exist GFG automata: deterministic ones, or nondeterministic ones that are {\em determinizable by pruning\/} (DBP); that is, ones that just add transitions on top of a deterministic automaton. In fact, the GFG automata constructed in~\cite{HP06} are DBP.\footnote{As explained in~\cite{HP06}, the fact that the GFG automata constructed there are DBP does not contradict their usefulness in practice, as their transition relation is simpler than the one of the embodied deterministic automaton and it can be defined symbolically.}

In terms of expressive power, it is shown in~\cite{KSV06,NW98} that GFG automata with an acceptance condition $\gamma$ (e.g., B\"uchi) are as expressive as deterministic $\gamma$ automata. The picture in terms of succinctness is diverse. For automata on finite words, GFG automata are always DBP~\cite{KSV06,Mor03}. For automata on infinite words, in particular NBWs and NCWs, GFG automata need not be DBP~\cite{BKKS13}. 
Moreover, the best known determinization construction for GFG-NBWs is quadratic, whereas determinization of GFG-NCWs has an exponential blow-up lower bound~\cite{KS15}. Thus, GFG automata on infinite words are more succinct (possibly even exponentially) than deterministic ones.\footnote{We note that some of the succinctness results are known only for GFG automata with \emph{transition-based} acceptance.} 
Further research studies characterization, typeness, complementation, and further constructions and decision procedures for GFG automata \cite{KS15,BKS17,BK18}, as well as an extension of the GFG setting to pushdown $\omega$-automata \cite{LZ20}.

Recall that for automata on finite words, a minimal deterministic automaton can be obtained by merging equivalent states. For general DBWs (and hence, also DCWs, as the two dualize each other), merging equivalent states fails, and minimization is NP-complete~\cite{Sch10}.
Proving NP-hardness, Schewe used a reduction from the vertex-cover problem \cite{Sch10}. Essentially, the choice of a vertex cover in a given graph $G$ is reduced to a choice of a set of states that should be duplicated in a DBW induced by $G$. The duplication is needed for the definition of the acceptance condition, and is not needed when the DBW is defined with a {\em transition-based\/} acceptance condition.  In such automata, the acceptance condition is given by a subset $\alpha$ of the transitions, and a run is required to traverse transitions in $\alpha$ infinitely often (in B\"uchi automata, denoted tNBW), or finitely often (in co-B\"uchi automata, denoted tNCW). Thus, while minimization is NP-complete for DBW and DCW, its complexity is open for tDBWs and tDCWs.
Beyond the theoretical interest, there is recently growing use of transition-based automata in practical applications, with evidences they offer a simpler translation of LTL formulas to  automata and enable simpler constructions and decision procedures  \cite{GL02,DLFMRX16,SEJK16,LKH17}.

In \cite{AK19}, we described a polynomial-time algorithm for the minimization of GFG-tNCWs. Consider a GFG-tNCW $\A$. Our algorithm is based on an analysis of the {\em safe components\/} of $\A$, namely its strongly connected components obtained by removing transitions in $\alpha$. Note that every accepting run of $\A$ eventually reaches and stays forever in a safe component. We showed that a minimal GFG-tNCW equivalent to $\A$ can be obtained by defining an order on the safe components, and applying the quotient construction on a GFG-tNCW obtained by restricting attention to states that belong to components that form a frontier in this order. 
Considering GFG-tNCWs rather than DBWs involves two modifications of the original question: a transition to GFG rather than deterministic automata, and a transition to transition-based rather than state-based acceptance. A natural question that arises is whether both modifications are crucial for efficiency. 
It was shown recently \cite{Sch20} that the NP-completeness proof of Schewe for DBW minimization can be generalized to GFG-NBWs and GFG-NCWs.
This suggests that the consideration of transition-based acceptance has been crucial, and makes the study of tDBW and tDCW very appealing.

Minimization and its complexity are tightly related to the canonicity question. Recall that $\omega$-regular languages do not have a unique minimal DBW or DCW. In this paper we study canonicity for GFG and transition-based automata. We start with GFG-tNCWs and show that all minimal GFG-tNCWs are {\em safe isomorphic}, namely their safe components are isomorphic\footnote{In our results, we assume the GFG-tNCWs are {\em nice}: they satisfy some syntactic and semantic properties that can be easily obtained from every GFG-tNCW.}. More formally, if $\A_1$ and $\A_2$ are minimal GFG-tNCWs for the same language, then there exists a bijection between the state spaces of $\A_1$ and $\A_2$ that induces a bijection between their $\bar{\alpha}$-transitions (these not in $\alpha$).  We then show that by saturating the GFG-tNCW with $\alpha$-transitions we get isomorphism among all minimal automata. 
We suggest two possible saturations.
One adds as many $\alpha$-transitions as possible, and the second does so in a way that preserves $\alpha$-homogeneity, thus for every state $q$ and letter $\sigma$, all the transitions labeled $\sigma$ from $q$ are $\alpha$-transitions or are all 
$\bar{\alpha}$-transitions. Since the minimization algorithm of \cite{AK19} generates minimal $\alpha$-homogenous GFG-tNCWs, it follows that both forms of canonical minimal GFG-tNCW can be obtained in polynomial time. 

We then show that, as has been the case with minimization, GFGness is not a sufficient condition for canonicity, raising the question of canonicity in tDCWs. Note that unlike the GFG-tNCW setting, here dualization of the acceptance condition complements the language of an automaton, and thus our results apply also to canonicity of tDBWs. We start with some bad news, showing that as has been the case with DCWs and DBW, minimal tDCWs and tDBWs need not be isomorphic. Moreover, being deterministic, we cannot saturate their transitions and make them isomorphic. On the positive side, safe isomorphism is helpful also in the tDCW setting: Consider an $\omega$-regular language $L$. Recall that the minimal  GFG-tNCW for $L$ may be smaller than a minimal tDCW for $L$ \cite{KS15}. We say that $L$ is {\em tDCW-positive\/} if this is not the case. We prove that all minimal tDCWs for a tDCW-positive $\omega$-regular language are safe isomorphic. Note that for such languages, we also know how to generate a minimal tDCW in polynomial time. For $\omega$-regular languages that are not tDCW-positive, safe isomorphism is left open. For such languages, however, we care more about minimal GFG-tNCWs, which do have a canonical form.  Also, all natural $\omega$-regular languages are tDCW-positive, and in fact the existence of $\omega$-regular languages that are not tDCW-positive has been open for quite a while \cite{BKKS13}. Accordingly, we view our results as good news about canonicity in deterministic automata with transition-based acceptance. 


\section{Preliminaries}
\label{prelim}

For a finite nonempty alphabet $\Sigma$, an infinite {\em word\/} $w = \sigma_1 \cdot \sigma_2 \cdots \in \Sigma^\omega$ is an infinite sequence of letters from $\Sigma$. 
A {\em language\/} $L\subseteq \Sigma^\omega$ is a set of words. We denote the empty word by $\epsilon$, and the set of finite words over $\Sigma$ by $\Sigma^*$. For $i\geq 0$, we use $w[1, i]$ to denote the (possibly empty) prefix $\sigma_1\cdot \sigma_2 \cdots  \sigma_i$ of $w$ and use $w[i+1, \infty]$ to denote its suffix $\sigma_{i+1} \cdot  \sigma_{i+2} \cdots$.

A \emph{nondeterministic automaton} over infinite words is $\A = \langle \Sigma, Q, q_0, \delta, \alpha  \rangle$, where $\Sigma$ is an alphabet, $Q$ is a finite set of \emph{states}, $q_0\in Q$ is an \emph{initial state}, $\delta: Q\times \Sigma \to 2^Q\setminus \emptyset$ is a \emph{transition function}, and $\alpha$ is an \emph{acceptance condition}, to be defined below. For states $q$ and $s$ and a letter $\sigma \in \Sigma$, we say that $s$ is a $\sigma$-successor of $q$ if $s \in \delta(q,\sigma)$.  
The \emph{size} of $\A$, denoted $|\A|$, is defined as its number of states, thus, $|\A| = |Q|$. 
Note that $\A$ is {\em total}, in the sense that it has at least one successor for each state and letter, and that $\A$ may be \emph{nondeterministic}, as the transition function may specify several successors for each state and letter.
If $|\delta(q, \sigma)| = 1$ for every state $q\in Q$ and letter $\sigma \in \Sigma$, then $\A$ is \emph{deterministic}.

When $\A$ runs on an input word, it starts in the initial state and proceeds according to the transition function. Formally, a \emph{run}  of $\A$ on $w = \sigma_1 \cdot \sigma_2 \cdots \in \Sigma^\omega$ is an infinite sequence of states $r = r_0,r_1,r_2,\ldots \in Q^\omega$, such that $r_0 = q_0$, and for all $i \geq 0$, we have that $r_{i+1} \in \delta(r_i, \sigma_{i+1})$. We sometimes extend $\delta$ to sets of states and finite words. Then, $\delta: 2^Q\times \Sigma^* \to 2^Q$ is such that for every $S \in 2^Q$, finite word $u\in \Sigma^*$, and letter $\sigma\in \Sigma$, we have that $\delta(S, \epsilon) = S$, $\delta(S, \sigma) = \bigcup_{s\in S}\delta(s, \sigma)$, and $\delta(S, u \cdot \sigma) = \delta(\delta(S, u), \sigma)$. Thus, $\delta(S, u)$ is the set of states that $\A$ may reach when it reads $u$ from some state in $S$. 

The transition function $\delta$ induces a transition relation $\Delta \subseteq Q\times \Sigma \times Q$, where for every two states $q,s\in Q$ and letter $\sigma\in \Sigma$, we have that $\langle q, \sigma, s \rangle \in \Delta$ iff $s\in \delta(q, \sigma)$. 
We sometimes view the run $r = r_0,r_1,r_2,\ldots$ on $w = \sigma_1 \cdot \sigma_2 \cdots$ as an infinite sequence of successive transitions $\zug{r_0,\sigma_1,r_1}, \zug{r_1,\sigma_2,r_2},\ldots \in \Delta^\omega$.
The acceptance condition $\alpha$ determines which runs are ``good''. We consider here \emph{transition-based} automata, in which $\alpha$ refers to the set of transitions that are traversed infinitely often during the run; specifically, $\alpha\subseteq \Delta$. We use the terms {\em $\alpha$-transitions\/} and  {\em $\bar{\alpha}$-transitions\/} to refer to  transitions in $\alpha$ and in $\Delta \setminus \alpha$, respectively. We also refer to restrictions $\delta^\alpha$ and  $\delta^{\bar{\alpha}}$ of $\delta$, where for all $q,s \in Q$ and $\sigma \in \Sigma$, we have that $s \in \delta^\alpha(q, \sigma)$ iff $\zug{q,\sigma,s} \in \alpha$, and $s \in \delta^{\bar{\alpha}}(q, \sigma)$ iff $\zug{q,\sigma,s} \in \Delta \setminus \alpha$.
For a run $r \in \Delta^\omega$, let ${\it inf}(r)\subseteq \Delta$ be the set of transitions that $r$ traverses infinitely often. Thus, 
${\it inf}(r) = \{  \langle q, \sigma, s\rangle \in \Delta: q = r_i, \sigma = \sigma_{i+1} \text{ and } s = r_{i+1} \text{ for infinitely many $i$'s}   \}$. In {\em co-B\"uchi\/} automata, a run $r$ is \emph{accepting} iff ${\it inf}(r)\cap \alpha = \emptyset$, thus if $r$ traverses transitions in $\alpha$ only finitely often. A run that is not accepting is \emph{rejecting}.  A word $w$ is accepted by $\A$ if there is an accepting run of $\A$ on $w$. The language of $\A$, denoted $L(\A)$, is the set of words that $\A$ accepts. Two automata are \emph{equivalent} if their languages are equivalent. We use tNCW and tDCW to abbreviate nondeterministic and deterministic transition-based co-B\"uchi automata over infinite words, respectively.

We continue to definitions and notations that are relevant to our study. See Section~\ref{app glos} for a glossary. 
For an automaton $\A = \langle \Sigma, Q, q_0, \delta, \alpha \rangle$, and a state $q\in Q$, we define $\A^q$ to be the automaton obtained from $\A$ by setting the initial state to be $q$. Thus, $\A^q = \langle \Sigma, Q, q, \delta, \alpha \rangle$.
We say that two states $q,s\in Q$ are \emph{equivalent}, denoted $q \sim_{\A} s$, if $L(\A^q) = L(\A^s)$. 
The automaton $\A$ is \emph{semantically deterministic} if different nondeterministic choices lead to equivalent states. Thus, for every state $q\in Q$ and letter $\sigma \in \Sigma$, all the $\sigma$-successors of $q$ are equivalent: for every two states $s, s'\in Q$ such that  $\langle q, \sigma, s\rangle$ and $\langle q, \sigma, s'\rangle$ are in $\Delta$, we have that $s \sim_{\A} s'$. 
The following proposition follows immediately from the definitions.

\begin{proposition}\label{pruned-corollary}
	Consider a semantically deterministic automaton $\A$, states $q,s \in Q$, letter $\sigma\in \Sigma$, and transitions $\langle q, \sigma, q'\rangle,\langle s, \sigma, s'\rangle \in \Delta$. If $q \sim_{\A} s$, then $q' \sim_{\A} s'$. 
\end{proposition}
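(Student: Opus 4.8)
The plan is to reduce the statement to a single cleaner fact: in a semantically deterministic automaton, the language of a successor state is determined by the language of its predecessor together with the letter read. Precisely, I claim that for every state $q$, letter $\sigma$, and $\sigma$-successor $q'$ of $q$,
\[ L(\A^{q'}) = \sigma^{-1}L(\A^q) := \{ w' \in \Sigma^\omega : \sigma \cdot w' \in L(\A^q) \}. \]
Granting this, the proposition follows at once: $q \sim_{\A} s$ means $L(\A^q) = L(\A^s)$, so the right-hand sides obtained from $(q,\sigma,q')$ and from $(s,\sigma,s')$ are literally the same set, whence $L(\A^{q'}) = L(\A^{s'})$, i.e.\ $q' \sim_{\A} s'$.

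To prove the claimed identity I would establish the two inclusions separately. For $L(\A^{q'}) \subseteq \sigma^{-1}L(\A^q)$, take an accepting run $r'$ of $\A^{q'}$ on a word $w'$ and prepend the transition $\zug{q,\sigma,q'}$; the resulting sequence is a run $r$ of $\A^q$ on $\sigma \cdot w'$, and since ${\it inf}(r) = {\it inf}(r')$ --- prepending a single transition cannot change which transitions occur infinitely often --- the run $r$ is accepting, so $\sigma \cdot w' \in L(\A^q)$. For the reverse inclusion, take an accepting run $r = q, p, r_2, r_3, \ldots$ of $\A^q$ on $\sigma \cdot w'$; its tail $p, r_2, r_3, \ldots$ is a run of $\A^p$ on $w'$ with the same ${\it inf}$, hence accepting, so $w' \in L(\A^p)$. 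Here $p$ is a $\sigma$-successor of $q$, possibly different from $q'$, so I invoke semantic determinism of $\A$: all $\sigma$-successors of $q$ are equivalent, thus $p \sim_{\A} q'$, that is $L(\A^p) = L(\A^{q'})$, and therefore $w' \in L(\A^{q'})$.

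There is no real obstacle here; the proposition is a one-step consequence of semantic determinism, which is presumably why the paper states it follows immediately. The two small points worth making explicit are that co-B\"uchi acceptance depends only on the infinite suffix of a run (which is what makes prepending or dropping the single transition $\zug{q,\sigma,q'}$ harmless), and that the definition of semantic determinism is exactly what reconciles the fixed successor $q'$ named in the statement with an arbitrary $\sigma$-successor $p$ that happens to be chosen by some accepting run of $\A^q$.
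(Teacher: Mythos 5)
Your proof is correct, and it simply spells out the argument the paper leaves implicit when it says the proposition ``follows immediately from the definitions'': the identity $L(\A^{q'}) = \sigma^{-1}L(\A^q)$ for any $\sigma$-successor $q'$, which is exactly where semantic determinism is used. No gaps; this matches the intended reasoning.
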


An automaton $\A$ is \emph{good for games} (\emph{GFG}, for short) if its nondeterminism can be resolved based on the past, thus on the prefix of the input word read so far. Formally, $\A$ is \emph{GFG} if there exists a {\em strategy\/} $f:\Sigma^* \to Q$ such that the following holds: 
\begin{enumerate}
	\item 
	The strategy $f$ is consistent with the transition function. That is, for every finite word $u \in \Sigma^*$ and letter $\sigma \in \Sigma$, we have that $\zug{f(u),\sigma,f(u \cdot \sigma)} \in \Delta$. 
	\item
	Following $f$ causes $\A$ to accept all the words in its language. That is, for every infinite word $w = \sigma_1 \cdot \sigma_2 \cdots \in \Sigma^\omega$, if $w \in L(\A)$, then the run $f(w[1, 0]), f(w[1, 1]), f(w[1, 2]), \ldots$, which we denote by $f(w)$, is accepting. 
\end{enumerate}
We say that the strategy $f$ \emph{witnesses} $\A$'s GFGness. 
For an automaton $\A$, we say that a state $q$ of $\A$ is \emph{GFG} if $\A^q$ is GFG. 
Note that every deterministic automaton is GFG. We say that a GFG automaton $\A$ is {\em determinizable by prunning} (DBP) if we can remove some of the transitions of $\A$ and get a deterministic automaton that recognizes $L(\A)$.

Consider a directed graph $G = \langle V, E\rangle$. A \emph{strongly connected set\/} in $G$ (SCS, for short) is a set $C\subseteq V$ such that for every two vertices $v, v'\in C$, there is a path from $v$ to $v'$. A SCS is \emph{maximal} if it is maximal w.r.t containment, that is, for every non-empty set $C'\subseteq V\setminus C$, it holds that $C\cup C'$ is not a SCS. The \emph{maximal strongly connected sets} are also termed \emph{strongly connected components} (SCCs, for short). The \emph{SCC graph of $G$} is the graph defined over the SCCs of $G$, where there is an edge from a SCC $C$ to another SCC $C'$ iff there are two vertices $v\in C$ and $v'\in C'$ with $\langle v, v'\rangle\in E$. A SCC is \emph{ergodic} iff it has no outgoing edges in the SCC graph. 
The SCC graph of $G$ can be computed in linear time by standard SCC algorithms \cite{Tar72}. 

An automaton $\A = \langle \Sigma, Q, q_0, \delta, \alpha\rangle$ induces a directed graph $G_{\A} = \langle Q, E\rangle$, where $\langle q, q'\rangle\in E$ iff there is a letter $\sigma \in \Sigma$ such that $\langle q, \sigma, q'\rangle \in \Delta$. The SCSs and SCCs of $\A$ are those of $G_{\A}$. We say that a tNCW $\A$ is \emph{safe deterministic} if by removing its $\alpha$-transitions, we get a (possibly not total)  deterministic automaton. Thus, $\A$ is \emph{safe deterministic} if for every state $q\in Q$ and letter $\sigma\in \Sigma$, it holds that $|\delta^{\bar{\alpha}}(q, \sigma)|\leq 1$.
We refer to the SCCs we get by removing $\A$'s $\alpha$-transitions as the \emph{safe components} of $\A$; that is, the \emph{safe components} of $\A$ are the SCCs of the graph $G_{\A^{\bar{\alpha}}} = \langle Q, E^{\bar{\alpha}} \rangle$, where $\zug{q, q'}\in E^{\bar{\alpha}}$ iff there is a letter $\sigma\in \Sigma$ such that $q'\in \delta^{\bar{\alpha}}(q, \sigma)$. We denote the set of safe components of $\A$ by $\S(\A)$. For a safe component $S\in \S(\A)$, the \emph{size} of $S$, denoted $|S|$, is the number of states in $S$. 
Note that an accepting run of $\A$ eventually gets trapped in one of $\A$'s safe components. A tNCW $\A$ is \emph{normal} if 
there are no $\bar{\alpha}$-transitions connecting different safe components. That is,
for all states $q$ and $s$ of $\A$, if there is a path of $\bar{\alpha}$-transitions from $q$ to $s$, then there is also a path of $\bar{\alpha}$-transitions from $s$ to $q$.

We now combine several properties defined above and say that a GFG-tNCW $\A$ is {\em nice\/} if all the states in $\A$ are reachable and GFG, and $\A$ is normal, safe deterministic, and semantically deterministic. 
As Theorem~\ref{nice} below shows, each of these properties can be obtained in at most polynomial time, and without the properties being conflicting. 
\begin{theorem}\label{nice}{\rm \cite{KS15, AK19}}
	Every GFG-tNCW $\A$ can be turned, in polynomial time, into an equivalent nice GFG-tNCW $\B$ such that $|\B|\leq |\A|$.
\end{theorem}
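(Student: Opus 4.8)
The plan is to secure the five requirements in the definition of a \emph{nice} automaton --- every state reachable, every state GFG, and the automaton normal, safe deterministic, and semantically deterministic --- one at a time, by transformations that are each computable in polynomial time, never increase the number of states, and are applied in an order for which each step preserves the properties already established. Concretely, I would proceed in the order: (i)~delete the states unreachable from $q_0$; (ii)~restrict $\A$ to its GFG states and then delete every transition $\zug{q,\sigma,s'}$ such that $L(\A^{s'}) \subsetneq L(\A^{s})$ for some other $\sigma$-successor $s$ of $q$; (iii)~make the automaton safe deterministic; (iv)~make it normal; (v)~delete unreachable states once more. Steps (i) and (v) are immediate, so the content lies in (ii)--(iv), and the real burden is checking non-interference.

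For (ii), the first observation is that in a GFG automaton every GFG state $q$ has, for each letter $\sigma$, a GFG $\sigma$-successor: if $f$ witnesses the GFGness of $\A^q$ and $s = f(\sigma)$, then the shifted strategy $v \mapsto f(\sigma \cdot v)$ witnesses the GFGness of $\A^s$ --- it is consistent, and for $w \in L(\A^s)$ we get $\sigma w \in L(\A^q)$ by prepending $\zug{q,\sigma,s}$ to an accepting run (which stays accepting, co-\buchi acceptance being insensitive to finite prefixes), so $f(\sigma w)$ is accepting and hence so is its relevant suffix, the shifted run on $w$. Iterating this, a witness of $\A^{q_0}$ never leaves the GFG states, so restricting to the GFG states yields a GFG automaton with $L(\B^q) = L(\A^q)$ for every GFG state $q$. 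Now, assuming all states are GFG, the same prefix-insensitivity argument shows that whenever a witness $f_q$ of $\A^q$ moves from a state $p$ to $s$ on $\sigma$, we have $L(\A^{s'}) \subseteq L(\A^{s})$ for \emph{every} $\sigma$-successor $s'$ of $p$; hence no witness of any $\A^p$ ever traverses a transition into a non-maximal successor, so deleting all such transitions preserves both the GFGness of every $\A^p$ and the value of $L(\A^p)$ for every $p$. After this deletion, all surviving $\sigma$-successors of each state have equal languages, i.e.\ $\A$ is semantically deterministic.

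For (iii): given semantic determinism, if a state $q$ has two distinct $\bar\alpha$-successors $s \neq s'$ on a letter $\sigma$, they are equivalent, so all but one such successor can be redirected into $\alpha$ (or removed); using that a witness of a semantically deterministic GFG-tNCW can be taken to prefer $\bar\alpha$-transitions (the safe-maximal strategies of \cite{AK19}), this preserves GFGness and all state languages, hence semantic determinism. For (iv): as long as some $\bar\alpha$-transition $\zug{q,\sigma,s}$ connects two distinct safe components, move it into $\alpha$ --- a run traversing $\zug{q,\sigma,s}$ infinitely often must return from $s$ to $q$ infinitely often, necessarily along a path that is not purely $\bar\alpha$ (otherwise $q$ and $s$ would lie in a common safe component), so it traverses $\alpha$-transitions infinitely often and was already rejecting; thus the accepting runs from every state are unchanged, keeping all languages, GFGness and semantic determinism, while $\delta^{\bar\alpha}$ only shrinks, keeping safe determinism. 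Recomputing the safe components after each move, this terminates after at most $|\Delta|$ moves.

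I expect the main obstacle to be step (ii): the crux is to prove that deleting the transitions into non-maximal successors leaves $L(\A^p)$ unchanged \emph{for every} state $p$, not merely for the initial one, since it is this global invariance that makes the step compatible with (iii) and (iv), which genuinely use semantic determinism; one must also check that the membership tests used along the way --- reachability, GFGness of an individual state, and equivalence of two states of a GFG-tNCW --- are all polynomial, which is precisely where the constructions of \cite{KS15,AK19} are invoked. A secondary, purely organizational point is the ordering: safe determinism must be established before normality (the normality step only pushes $\bar\alpha$-transitions into $\alpha$ and so cannot recreate safe nondeterminism, whereas the reverse order could), and a final reachability pass is needed because deleting non-GFG states or non-maximal transitions can leave some states reachable only through deleted ones.
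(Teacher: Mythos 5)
First, note that the paper does not prove this theorem at all: it is quoted verbatim from \cite{KS15, AK19}, so there is no in-paper argument to compare against. Evaluated on its own terms, your decomposition is the right one, and steps (i), (ii), (iv) and (v) are essentially complete. In particular, your argument in (ii) that $L(\A^{f(u)})=u^{-1}L(\A^{q})$ for a witness $f$ of $\A^q$, so that $f(u\sigma)$ is a \emph{maximum} (not merely maximal) $\sigma$-successor of $f(u)$ with respect to language inclusion, is exactly what makes the pruning to maximal successors yield semantic determinism while preserving every residual language and every witness strategy; and the normalization step (iv) is correct as stated, since a run crossing a $\bar{\alpha}$-transition between two distinct safe components infinitely often must return against the $\bar{\alpha}$-SCC structure infinitely often and is therefore already rejecting.

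The genuine gap is step (iii), safe determinization, which is precisely the technically hard part of the cited result and which you dispatch in one sentence. Knowing that the two $\bar{\alpha}$-$\sigma$-successors $s\neq s'$ of $q$ are language-equivalent does not justify demoting $\zug{q,\sigma,s'}$ to $\alpha$ (or deleting it): equivalence of $L(\A^s)$ and $L(\A^{s'})$ says nothing about their \emph{safe} languages, and both the language and the GFGness arguments break at the same point. For language preservation, an accepting run whose safe tail crosses $\zug{q,\sigma,s'}$ infinitely often must be replaced; rerouting to $s$ at the first such crossing and restarting an accepting run of $\A^s$ may again lead to a safe tail that needs $\zug{q,\sigma,s'}$, and iterating this repair produces a run with infinitely many $\alpha$-transitions. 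The same unbounded-switching problem defeats the naive reconstruction of a witness strategy: composing witness strategies at each redirection point gives a run that is a concatenation of infinitely many segments, each of which may contribute $\alpha$-transitions, so the composite need not be accepting even on words in the language. Your appeal to ``safe-maximal strategies'' is an appeal to exactly the machinery of \cite{KS15} (a simulation/counting argument in the letter game that bounds the number of $\alpha$-transitions the pruned automaton is forced to take) whose purpose is to show that \emph{some} choice of a single safe successor per state and letter works; it does not support the claim that an \emph{arbitrary} choice does, and you give no rule for which successor to keep. As it stands, step (iii) asserts the one nontrivial ingredient of the theorem rather than proving it; either reproduce the argument of \cite{KS15} or cite it explicitly as a black box for this step.
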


Consider a tNCW $\A = \langle \Sigma, Q, q_0, \delta, \alpha \rangle$. 
A run $r$ of $\A$ is \emph{safe} if it does not traverse $\alpha$-transitions. 
The \emph{safe language} of $\A$, denoted $L_{safe}(\A)$, is the set of infinite words $w$, such that there is a safe run of $\A$ on $w$. 
Recall that two states $q,s\in Q$ are equivalent ($q \sim_{\A} s$) if $L(\A^q) = L(\A^s)$. 
Then, $q$ and $s$ are  \emph{strongly-equivalent}, denoted $q \approx _{\A} s$, if $q \sim_{\A} s$ and $L_{safe}(\A^q) = L_{safe}(\A^s)$. Finally, $q$ is \emph{subsafe-equivalent to} $s$, denoted $q\precsim_{\A} s$, if $q \sim_{\A} s$ and $L_{safe}(\A^q) \subseteq L_{safe}(\A^s)$. Note that the three relations are transitive. When $\A$ is clear from the context, we omit it from the notations, thus write $L_{safe}(q)$, $q\precsim s$, etc. 
The tNCW $\A$ is \emph{safe-minimal} if it has no strongly-equivalent states. Then, $\A$ is \emph{safe-centralized} if for every two states $q, s\in Q$, if $q \precsim s$, then $q$ and $s$ are in the same safe component of $\A$. Finally, $\A$ is \emph{$\alpha$-homogenous} if for every state $q\in Q$ and letter $\sigma \in \Sigma$, either $\delta^\alpha_\A(q, \sigma) =\emptyset$ or $\delta^{\bar{\alpha}}_\A(q, \sigma) = \emptyset$. Thus, either all the $\sigma$-labeled transitions from $q$ are $\alpha$-transitions, or they are all $\bar{\alpha}$-transitions.

\begin{example}
{\rm Consider the tDCW $\A$ appearing in Figure~\ref{safe example}. The dashed transitions are $\alpha$-transitions. 
		All the states of $\A$ are equivalent, yet they all differ in their safe language. Accordingly, $\A$ is safe-minimal.
		Since $a^\omega= L_{safe}(\A^{q_2}) \subseteq L_{safe}(\A^{q_0})$, we have that $q_2 \precsim q_0$. Hence, as $q_0$ and $q_2$ are in different safe components, the tDCW $\A$ is not safe-centralized. \hfill \qed}

\begin{figure}[htb]	
	\begin{center}
	\vspace{-4mm}
		\includegraphics[width=.3\textwidth]{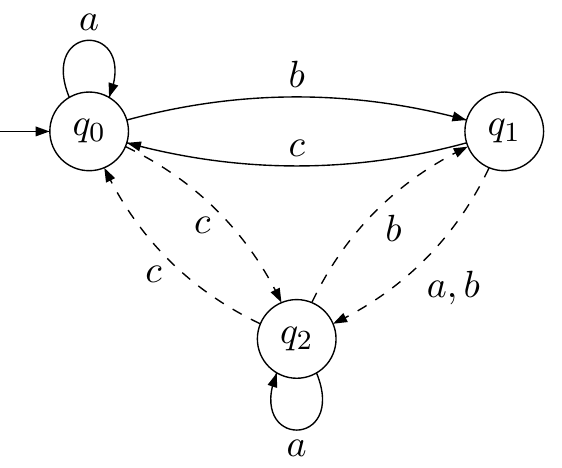}
		\caption{The tDCW $\A$.}
		\label{safe example}
	\end{center}
	\end{figure}
\end{example}

	\vspace{-6mm}
	
	The following properties of nice GFG-tNCWs are proven in \cite{AK19}.
\begin{proposition}\label{equiv-to-equiv}
	Consider a nice GFG-tNCW $\A$ and states $q$ and $s$ of $\A$ such that 
	$q \approx s$ ($q \precsim s$). For every letter $\sigma \in \Sigma$ and $\bar{\alpha}$-transition $\langle q, \sigma, q'\rangle$, there is an $\bar{\alpha}$-transition $\langle s, \sigma, s'\rangle$ such that $q' \approx s'$ ($q' \precsim s'$, respectively).
\end{proposition}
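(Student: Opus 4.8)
The plan is to prove the statement for the subsafe-equivalence relation $\precsim$ first, and then derive the strong-equivalence version by applying it in both directions. So assume $q \precsim s$ (that is, $q \sim_{\A} s$ and $L_{safe}(\A^q) \subseteq L_{safe}(\A^s)$), and let $\langle q, \sigma, q'\rangle$ be an $\bar{\alpha}$-transition; we must exhibit an $\bar{\alpha}$-transition $\langle s, \sigma, s'\rangle$ with $q' \precsim s'$.

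The first step is to note that $L_{safe}(\A^{q'}) \neq \emptyset$. Since $\A$ is normal and $\langle q, \sigma, q'\rangle$ is an $\bar{\alpha}$-transition, there is a path of $\bar{\alpha}$-transitions from $q'$ back to $q$, which together with $\langle q, \sigma, q'\rangle$ closes a cycle through $q'$ in the graph $G_{\A^{\bar{\alpha}}}$. Following this cycle forever gives an infinite safe run of $\A^{q'}$, and the word it reads lies in $L_{safe}(\A^{q'})$. Now fix any $w \in L_{safe}(\A^{q'})$. Prefixing a safe run of $\A^{q'}$ on $w$ with the $\bar{\alpha}$-transition $\langle q, \sigma, q'\rangle$ shows $\sigma \cdot w \in L_{safe}(\A^q)$, hence $\sigma \cdot w \in L_{safe}(\A^s)$ since $q \precsim s$. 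Take a safe run of $\A^s$ on $\sigma \cdot w$: its first transition is some $\langle s, \sigma, s'\rangle$, which is necessarily an $\bar{\alpha}$-transition, and by safe determinism $s'$ is the \emph{unique} $\bar{\alpha}$-successor of $s$ on $\sigma$ — in particular it is independent of the choice of $w$. The suffix of this run is a safe run of $\A^{s'}$ on $w$, so $w \in L_{safe}(\A^{s'})$; as $w$ was arbitrary, $L_{safe}(\A^{q'}) \subseteq L_{safe}(\A^{s'})$. Finally, $q \sim_{\A} s$ together with Proposition~\ref{pruned-corollary} applied to the transitions $\langle q, \sigma, q'\rangle$ and $\langle s, \sigma, s'\rangle$ yields $q' \sim_{\A} s'$, so $q' \precsim s'$.

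For the $\approx$ case, observe that $q \approx s$ is exactly $q \precsim s$ and $s \precsim q$. Applying the case just proved to $q \precsim s$ gives an $\bar{\alpha}$-transition $\langle s, \sigma, s'\rangle$ with $q' \precsim s'$; applying it to $s \precsim q$ and this same $\bar{\alpha}$-transition $\langle s, \sigma, s'\rangle$ gives an $\bar{\alpha}$-transition $\langle q, \sigma, q''\rangle$ with $s' \precsim q''$, and safe determinism forces $q'' = q'$. Hence $s' \precsim q'$ as well, and therefore $q' \approx s'$.

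The only point that requires attention is the nonemptiness of $L_{safe}(\A^{q'})$: without a word on which to run, the construction of $s'$ would be vacuous. This is not really an obstacle, though — it is forced by normality together with the presence of the incoming $\bar{\alpha}$-transition $\langle q, \sigma, q'\rangle$, as shown above — and the remainder of the argument is routine manipulation of the definitions of safe runs, safe determinism, and Proposition~\ref{pruned-corollary}.
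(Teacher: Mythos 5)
Your proof is correct: the nonemptiness of $L_{safe}(\A^{q'})$ via normality, the transfer of $\sigma\cdot w$ through the safe-language inclusion, the use of safe determinism to pin down a unique $s'$, and the appeal to Proposition~\ref{pruned-corollary} for $q'\sim s'$ are all sound, as is the two-sided application that yields the $\approx$ case. The paper itself does not prove this proposition but cites \cite{AK19} for it, and your argument is essentially the one given there.
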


\begin{proposition}\label{there is the same safe}
	Let $\A$ and $\B$ be equivalent nice GFG-tNCWs. For every state $p$ of $\A$, there are states $q$ of $\A$ and $s$ of $\B$, such that $p \precsim q$ and $q \approx s$.
\end{proposition}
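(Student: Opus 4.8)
The plan is to reduce the statement, via a maximality argument, to a one‑sided ``domination'' claim, and then to prove that claim by running $\A$ and $\B$ in parallel on a carefully chosen word and pigeonholing on the finitely many states of a single safe component.

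\emph{Reduction.} Since $\precsim$ is a transitive preorder (as noted in the preliminaries) and $Q_\A$ is finite, we may fix a state $q$ of $\A$ with $p \precsim q$ that is \emph{$\precsim$-maximal among the states equivalent to $p$}: whenever $q \precsim q'$ we also have $q' \precsim q$ (equivalently $q \approx q'$). As $\A$ is nice, $q$ is reachable, and the standard GFG-residual identity --- if $f$ witnesses the GFGness of a GFG automaton $\A$ then $L(\A^{f(u)}) = u^{-1}L(\A)$, and by semantic determinism together with Proposition~\ref{pruned-corollary} \emph{every} state reached by $u$ has this language --- shows that if a word $v$ reaches $q$ in $\A$ then $\B$ has a state reached by $v$ with language $v^{-1}L(\B) = v^{-1}L(\A) = L(\A^q)$. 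Hence it suffices to prove the following \emph{Domination claim}: for equivalent nice GFG-tNCWs $\A, \B$ and a reachable state $q$ of $\A$, some state $s$ of $\B$ satisfies $L(\B^s) = L(\A^q)$ and $L_{safe}(\A^q) \subseteq L_{safe}(\B^s)$. Indeed, applying the Domination claim to $q$ gives such an $s$; applying its mirror (swap $\A$ and $\B$) to the reachable state $s$ of $\B$ gives a state $q'$ of $\A$ with $L(\A^{q'}) = L(\B^s) = L(\A^q)$ and $L_{safe}(\B^s) \subseteq L_{safe}(\A^{q'})$. Then $q \precsim q'$, so by maximality of $q$ also $q' \precsim q$, which forces the chain $L_{safe}(\A^q) \subseteq L_{safe}(\B^s) \subseteq L_{safe}(\A^{q'}) \subseteq L_{safe}(\A^q)$ to collapse; in particular $L_{safe}(\A^q) = L_{safe}(\B^s)$, so $q \approx s$, and together with $p \precsim q$ this is the proposition. (If $L_{safe}(\A^q) = \emptyset$ the Domination claim is already witnessed by the state of $\B$ reached by the same word as $q$; so the real case is $L_{safe}(\A^q) \neq \emptyset$, in which $q$ lies on a cycle of $\bar\alpha$-transitions inside its safe component $S_q$, and $S_q$ is strongly connected in the $\bar\alpha$-graph.)

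\emph{Proving the Domination claim.} Let $v$ reach $q$ in $\A$ and let $g$ witness GFGness of $\B$. Pick a word $w^\ast \in L_{safe}(\A^q)$ whose (unique, by safe-determinism) safe run from $q$ visits every state of $S_q$ and traverses every $\bar\alpha$-transition inside $S_q$ infinitely often --- possible since $S_q$ is strongly connected in the $\bar\alpha$-graph, so the letters of $w^\ast$ can be chosen to steer the safe run at will. Run in parallel the safe run of $\A^q$ on $w^\ast$ (which stays inside $S_q$, as $\A$ is normal) and the run of $g$ on $v \cdot w^\ast$; since $w^\ast \in L_{safe}(\A^q) \subseteq L(\A^q)$ we have $v w^\ast \in L(\A) = L(\B)$, so the latter run is accepting, hence eventually never leaves some safe component $T$ of $\B$. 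Let $R \subseteq S_q \times T$ be the set of pairs occupied by the two runs at infinitely many positions beyond that point. One checks that $(q,s) \in R$ for some $s$; that $L(\B^s) = L(\A^q)$ (because $s$ is reached from $\B$'s initial state along $v$ followed by a safe cycle of $\A$ at $q$, so the two residual languages agree); and --- the heart of the matter --- that $R$ is \emph{closed under $\A$'s $\bar\alpha$-transitions}: if $(r,t) \in R$ and $r$ has an $\bar\alpha$-transition on $\sigma$ to $r'$ in $\A$, then $t$ has an $\bar\alpha$-transition on $\sigma$ to some $t'$ with $(r',t') \in R$. Given closure, every safe run of $\A^q$ on a word $w$ (which lives in $S_q$) can be matched step by step by an $R$-related safe run of $\B^s$, so $w \in L_{safe}(\B^s)$; this yields $L_{safe}(\A^q) \subseteq L_{safe}(\B^s)$ and proves the claim.

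\emph{The main obstacle} is precisely the closure of $R$, since a safe $\bar\alpha$-step of $\A$ out of a pair in $R$ need not be safe in the $\B$-coordinate: $\B$'s safe run may get stuck where $\A$'s does not, forcing $g$ to take an $\alpha$-transition and leave $T$. To handle this one lets $w^\ast$ probe, from each reached pair, every $\bar\alpha$-continuation of $\A$: when a ``bad'' continuation (stuck in $\B$ but not in $\A$) appears, $g$ is pushed out of $T$ and, by GFGness, re-stabilizes inside a new safe component with a new tracked state, and one argues that a potential strictly decreases --- the set of bad continuations of the tracked $\B$-state can only shrink under this operation, using that the old and new tracked states are cross-equivalent and comparing their safe languages via a cross-automaton analog of Proposition~\ref{equiv-to-equiv} --- so after finitely many re-stabilizations we reach $T$, $s$ and $R$ with no bad continuations, the desired state. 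The remaining verifications (preservation of cross-equivalence along $\bar\alpha$-steps, $(q,s) \in R$, the residual-language computations) are routine given the GFG-residual identity, semantic determinism, and strong connectivity of safe components.
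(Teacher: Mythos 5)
Your top-level architecture is sound, and its first half is correct: fixing a $\precsim$-maximal $q$ above $p$ and applying a one-sided domination claim in both directions (using that all states of a nice automaton are reachable, and the residual identities that follow from GFGness, semantic determinism and Proposition~\ref{pruned-corollary}) does reduce the proposition to the claim that every reachable state of $\A$ is safe-dominated by some language-equivalent state of $\B$. Note that the paper itself does not prove this proposition but imports it from \cite{AK19}, so your argument has to stand on its own; it is in any case a different decomposition than a direct construction of the pair $(q,s)$.

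The genuine gap is in your proof of the Domination claim, which is where all the mathematical content sits. As you concede, the fixed word $w^\ast$ does not give closure of $R$: at the positions where a pair $(r,t)$ recurs, the next letter of $w^\ast$ need not be the $\sigma$ you want to match, and $\B$ may simply have no safe $\sigma$-step at $t$. Your repair is a probe-and-restabilize loop whose termination rests on the assertion that ``the set of bad continuations of the tracked $\B$-state can only shrink'', justified by cross-equivalence and a ``cross-automaton analog of Proposition~\ref{equiv-to-equiv}''. This does not hold up: cross-equivalence is an equality of languages only and gives no control over safe languages, while a cross-automaton version of Proposition~\ref{equiv-to-equiv} needs a safe-language inclusion as a hypothesis --- precisely what is being proved; and even granting monotonicity, a $\subseteq$-decreasing sequence of infinite sets of finite words is no potential, so no termination measure has actually been exhibited. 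In addition, the surviving pair must have $\A$-coordinate $q$ itself (states in $S_q$ need not be equivalent to $q$, since residuals change along safe paths), and the adaptive probing as described does not secure this, nor is the closure of $R$ ever re-established for it. The loop can, however, be repaired without any potential: between probes return to $q$ along a safe path (possible since $S_q$ is strongly connected, and this restores the residual, so the tracked $\B$-state stays language-equivalent to $q$); each bad probe forces the strategy run of $\B$ to cross at least one $\alpha$-transition, and since the constructed word lies in $v\cdot L_{safe}(\A^q)\subseteq L(\B)$, that run is accepting, so only finitely many probes can be bad. Hence some pair $(q,s)$ has no bad finite continuation, and safe determinism of $\B$ (uniqueness and coherence of finite safe runs) upgrades this to $L_{safe}(\A^q)\subseteq L_{safe}(\B^s)$. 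With that replacement your reduction goes through; as written, though, the key step is not proved.
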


\begin{lemma}
	\label{injection}
	Consider a nice GFG-tNCW $\A$. If $\A$ is safe-centralized and safe-minimal, then for every nice GFG-tNCW $\B$ equivalent to $\A$, there is an injection $\eta: \S(\A) \to \S(\B)$ such that for every safe component $T\in \S(\A)$, it holds that $|T|\leq |\eta(T)|$.
\end{lemma}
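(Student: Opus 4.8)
The plan is to build $\eta$ by matching each safe component $T$ of $\A$ with the safe component, in $\B$, of a state that is strongly-equivalent to some state of $T$, and then to read off both the size bound and the injectivity from the fact that $\A$ is safe-minimal (using safe-centralization, normality, and safe-determinism along the way). \emph{Defining $\eta$:} Fix $T\in\S(\A)$ and an arbitrary $p\in T$. By Proposition~\ref{there is the same safe} there are a state $q$ of $\A$ and a state $s$ of $\B$ with $p\precsim q$ and $q\approx s$. Since $\A$ is safe-centralized, $p\precsim q$ forces $q$ into the safe component of $p$, i.e.\ $q\in T$. I set $\eta(T)$ to be the safe component of $\B$ containing $s$, fixing one such choice of $p,q,s$ per $T$.

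\emph{The size bound $|T|\le|\eta(T)|$:} For $q'\in T$, strong connectivity of $T$ in $G_{\A^{\bar{\alpha}}}$ gives a $\bar{\alpha}$-path in $\A$ from $q$ to $q'$ labelled by some word $u_{q'}$, so $\delta^{\bar{\alpha}}_{\A}(q,u_{q'})=\{q'\}$ by safe-determinism of $\A$. Feeding $q\approx s$ into Proposition~\ref{equiv-to-equiv} transition by transition along $u_{q'}$ produces a $\bar{\alpha}$-path in $\B$ from $s$ on $u_{q'}$; by safe-determinism of $\B$ this path is unique and ends in a single state $s':=\delta^{\bar{\alpha}}_{\B}(s,u_{q'})$ with $q'\approx s'$, and by normality of $\B$ this $s'$ lies in $\eta(T)$. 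The resulting map $\phi_T\colon T\to\eta(T)$, $q'\mapsto s'$, is injective: $\phi_T(q_1')=\phi_T(q_2')$ gives $q_1'\approx s'\approx q_2'$, hence $q_1'\approx q_2'$, hence $q_1'=q_2'$ since $\A$ is safe-minimal. Therefore $|T|\le|\eta(T)|$.

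\emph{Injectivity of $\eta$:} Suppose $\eta(T_1)=\eta(T_2)=S$, with chosen data $q_i\in T_i$, $s_i\in S$, $q_i\approx s_i$ for $i=1,2$. Since $s_1,s_2$ lie in the same safe component $S$ of $\B$, there is a $\bar{\alpha}$-path in $\B$ from $s_1$ to $s_2$ on some word $u$, and $\delta^{\bar{\alpha}}_{\B}(s_1,u)=\{s_2\}$. Using that $\approx$ is symmetric and applying Proposition~\ref{equiv-to-equiv} along $u$ — now pulling a $\bar{\alpha}$-path back from the $\B$-side to the $\A$-side — I obtain a $\bar{\alpha}$-path in $\A$ from $q_1$ on $u$; by safe-determinism of $\A$ it ends in a single state $q':=\delta^{\bar{\alpha}}_{\A}(q_1,u)$ with $q'\approx s_2$. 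Then $q'\approx s_2\approx q_2$, so $q'\approx q_2$, so $q'=q_2$ by safe-minimality of $\A$. But $q'$ is reached from $q_1$ by a $\bar{\alpha}$-path, so normality of $\A$ places $q'$ in $T_1$; hence $q_2\in T_1\cap T_2$, and since the safe components are the SCCs of $G_{\A^{\bar{\alpha}}}$ and thus partition $Q$, we get $T_1=T_2$.

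\emph{Main obstacle:} The delicate point is the cross-automaton use of Proposition~\ref{equiv-to-equiv} in the injectivity step, which relates a state of $\B$ to a state of $\A$. I would make this rigorous by carrying out the argument inside the disjoint union $\A\uplus\B$: it inherits semantic determinism, safe-determinism, and normality from $\A$ and $\B$, every one of its states is GFG, and the relations $\sim,\approx,\precsim$ on it restrict correctly to $\A$ and to $\B$ — exactly the ambient setting already used implicitly in Proposition~\ref{there is the same safe}. Reachability plays no role here beyond the trivial fact that every state reaches itself, so Proposition~\ref{equiv-to-equiv} applies. Everything else is bookkeeping on top of Propositions~\ref{there is the same safe} and~\ref{equiv-to-equiv}.
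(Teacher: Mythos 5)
Your proof is correct and takes essentially the same route as the paper's: $\eta$ is defined exactly as there, via Proposition~\ref{there is the same safe} together with safe-centralization, and both the size bound and the injectivity of $\eta$ are obtained by transporting safe paths between $\A$ and $\B$ through iterated applications of Proposition~\ref{equiv-to-equiv}, concluding with safe-minimality. The only cosmetic differences are that the paper uses one safe run covering all of $T$ (via normality of $\A$) instead of per-state paths, and closes the injectivity step with safe-centralization where you invoke safe-minimality; both hypotheses are available, so the arguments are interchangeable.
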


\stam{
	\begin{proof}
		We define $\eta$ as follows. Consider a safe component $T\in \S(\A)$.  Let $p_{T}$ be some state in $T$. By Proposition~\ref{there is the same safe}, there are states $q_{T} \in Q_{\A}$ and $s_{T} \in Q_{\B}$ such that $p_{T} \precsim q_{\T}$ and $q_{T} \approx s_{T}$. Since $\A$ is safe-centralized, the states $p_{T}$ and $q_{T}$ are in the same safe component, thus $q_{T} \in T$. We define $\eta(T)$ to be the safe component of $s_{T}$ in $\B$. We show that $\eta$ is an injection; that is, for every two safe components $T_1$ and $T_2$ in $\S(\A)$, it holds that $\eta(T_1)\neq \eta(T_2)$. Assume by way of contradiction that $T_1$ and $T_2$ are such that $s_{T_1}$ and $s_{T_2}$, chosen as described above, are in the same safe component of $\B$. Then, there is a safe run from $s_{T_1}$ to $s_{T_2}$. Since $s_{T_1} \approx q_{T_1}$, an iterative application of Proposition \ref{equiv-to-equiv} implies that there is a safe run from $q_{T_1}$ to some state $q$ such that $q \approx s_{T_2}$. Since the run from $q_{T_1}$ to $q$ is safe, the states $q_{T_1}$ and $q$ are in the same safe component, and so $q \in T_1$. Since $q_{T_2} \approx s_{T_2}$, then $q \approx q_{T_2}$. 
		Since $\A$ is safe-centralized, the latter implies that $q$ and $q_{T_2}$ are in the same safe component, and so $q \in T_2$, and we have reached a contradiction. 
		
		It is left to prove that for every safe component $T\in \S(\A)$, it holds that $|T|\leq |\eta(T)|$. Let $T\in \S(\A)$ be a safe component of $\A$. By the definition of $\eta$, there are $q_{T}\in T$ and $s_{T}\in \eta(T)$ such that $q_{T} \approx s_T$. Since $\A$ is normal, there is a safe run $q_0,q_1,\ldots q_m$ of $\A$ that starts in $q_T$ and traverses all the states in $T$. Since $\A$ is safe-minimal, no two states in $T$ are strongly equivalent. Therefore, there is a subset $I\subseteq \{0, 1, \ldots, m\}$ of indices, with $|I| = |T|$, such that for every two  different indices $i_1, i_2 \in I$, it holds that $q_{i_1} \not \approx q_{i_2}$. 
		By applying Proposition \ref{equiv-to-equiv} iteratively, there is a safe run $s_0,s_1,\ldots s_m$ of $\B$ that starts in $s_T$ and such that for every $0 \leq i \leq m$, it holds that $q_i \approx s_i$. Since the run is safe, it stays in $\eta(T)$. Then, however, for every two different indices $i_1, i_2 \in I$, we have that $s_{i_1} \not \approx s_{i_2}$, and so $s_{i_1} \neq s_{i_2}$. Hence, $|\eta(T)|\geq |I| = |T|$.
	\end{proof}
}

\section{Minimizing GFG-tNCW}\label{sec ak19}

A GFG-tNCW $\A$ is \emph{minimal} if for every equivalent GFG-tNCW $\B$, it holds that $|\A|\leq |\B|$.
In this section, we review the minimization construction of \cite{AK19}, highlighting its properties that are important for the canonization results. The algorithm is based on the following theorem.

\begin{theorem}\label{C is minimal} 
	Consider a nice GFG-tNCW $\A$. If $\A$ is safe-centralized and safe-minimal, then $\A$ is a minimal  GFG-tNCW for $L(\A)$. 
\end{theorem}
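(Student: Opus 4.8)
The plan is to prove the contrapositive in spirit: assuming $\A$ is nice, safe-centralized and safe-minimal, I take an arbitrary equivalent GFG-tNCW $\B$ and show $|\A| \le |\B|$. By Theorem~\ref{nice} I may assume $\B$ is also nice, since turning $\B$ into a nice GFG-tNCW only shrinks it. The backbone of the argument is Lemma~\ref{injection}, which already hands us an injection $\eta : \S(\A) \to \S(\B)$ with $|T| \le |\eta(T)|$ for every safe component $T \in \S(\A)$. So the bulk of the work is to convert this component-wise domination into a global count.

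The key step is to sum over safe components. Since $\A$ is normal, every state of $\A$ lies in exactly one safe component, so the safe components of $\A$ partition $Q_\A$ and hence $|\A| = \sum_{T \in \S(\A)} |T|$. Similarly $|\B| = \sum_{S \in \S(\B)} |S| \ge \sum_{T \in \S(\A)} |\eta(T)|$, where the inequality uses that $\eta$ is injective, so the components $\eta(T)$ are distinct and contribute disjoint sets of states, and that all summands are nonnegative. Combining with the per-component bound $|T| \le |\eta(T)|$ from Lemma~\ref{injection} gives
\[
|\A| = \sum_{T \in \S(\A)} |T| \;\le\; \sum_{T \in \S(\A)} |\eta(T)| \;\le\; \sum_{S \in \S(\B)} |S| \;=\; |\B|,
\]
which is exactly what we need.

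The main obstacle I anticipate is making sure the hypotheses of Lemma~\ref{injection} are genuinely available — i.e.\ that $\B$ can be assumed nice without loss of generality (handled by Theorem~\ref{nice}, noting $|\B|$ only decreases) and that normality of $\A$ really does make $\S(\A)$ a partition of the full state space rather than just a cover (this is where the ``no $\bar\alpha$-transitions connecting different safe components'' clause, together with every state lying on some safe cycle after normalization, is used). A secondary point to check is that $\eta$ being an injection on safe \emph{components} translates to a disjointness statement on \emph{states}: since distinct SCCs of $G_{\B^{\bar\alpha}}$ share no vertices, $T_1 \ne T_2$ implies $\eta(T_1)$ and $\eta(T_2)$ are vertex-disjoint, so there is no double-counting in the bound $|\B| \ge \sum_T |\eta(T)|$. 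Once these are in place the conclusion is immediate, so I expect the proof to be short.
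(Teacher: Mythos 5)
Your proposal is correct and follows essentially the same route as the paper: invoke Theorem~\ref{nice} to assume the competitor $\B$ is nice, apply Lemma~\ref{injection}, and sum the component-wise bound $|T|\le|\eta(T)|$ over the (vertex-disjoint) safe components to get $|\A|\le|\B|$ — this is exactly the summation argument the paper itself uses (see the displayed chain of inequalities in the proof of Theorem~\ref{safe isomorphic}). One small remark: the fact that $|\A|=\sum_{T\in\S(\A)}|T|$ needs no appeal to normality or to every state lying on a safe cycle, since the SCCs of $G_{\A^{\bar\alpha}}$ partition $Q_\A$ unconditionally (singleton components are permitted).
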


Thus, minimization involves two steps: safe centralization and safe minimization. 

\paragraph{Step 1: Safe centralization}
Consider a nice GFG-tNCW $\A = \langle \Sigma, Q_{\A}, q^0_{\A}, \delta_{\A}, \alpha_{\A}\rangle$. Recall that  $\S(\A)$ denotes the set of safe components of $\A$. 
Let $H\subseteq \S(\A)\times \S(\A)$ be such that for all safe components $S, {S}'\in \S(\A)$, we have that $H(S, {S}')$ iff there exist states $q\in S$ and $q' \in {S}'$ such that $q\precsim q'$.  
The relation $H$ is transitive: for every safe components $S, {S}', {S}'' \in \S(\A)$, if $H(S, {S}')$ and $H({S}', {S}'')$, then $H(S, {S}'')$.
We say that a set ${\S}\subseteq \S(\A)$ is a \emph{frontier of $\A$\/} if for every safe component $S\in \S(\A)$, there is a safe component ${S}'\in {\S}$ with $H(S, {S}')$, and for all safe components $S, {S}' \in {\S}$ such that $S\neq {S}'$, we have  that $\neg H(S, {S}')$ and $\neg H({S}', S)$. Once $H$ is calculated, a frontier of $\A$ can be found in linear time. For example, as $H$ is transitive, we can take one vertex from each ergodic SCC in the graph $\zug{\S(\A),H}$. Note that all frontiers of $\A$ are of the same size, namely the number of ergodic SCCs in this graph. 

\begin{proposition}\label{totally-cover-lemma}
	Consider safe components $S, {S}' \in \S(\A)$ such that $H(S, {S}')$. Then, for every state $p \in S$ there is a state $p'\in {S}'$, such that $p \precsim p'$.
\end{proposition}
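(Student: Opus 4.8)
The plan is to exploit the fact that $S$, being a safe component, is strongly connected in the graph $G_{\A^{\bar\alpha}}$, and then to ``push'' a safe path inside $S$ through the witness of $H(S,S')$ into $S'$, using Proposition~\ref{equiv-to-equiv} as the transfer mechanism. Concretely, first I would unpack the hypothesis $H(S,S')$: it gives states $q\in S$ and $q'\in S'$ with $q\precsim q'$. Now fix an arbitrary target state $p\in S$. Since $S$ is a safe component and $q,p\in S$, there is a safe run (a path of $\bar\alpha$-transitions) $q = p_0, p_1, \ldots, p_k = p$ that stays inside $S$; write $\sigma_1,\ldots,\sigma_k$ for the letters labelling its transitions.

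The core of the argument is an induction on $i$ building a parallel safe run from $q'$. I would claim that there exist states $p'_0, p'_1,\ldots, p'_k$ with $p'_0 = q'$, with $\zug{p'_i,\sigma_{i+1},p'_{i+1}} \in \Delta\setminus\alpha$ for every $0\le i < k$, and with $p_i \precsim p'_i$ for every $0\le i \le k$. The base case is exactly the hypothesis $p_0 = q \precsim q' = p'_0$. For the inductive step, assume $p_i \precsim p'_i$; since $\zug{p_i,\sigma_{i+1},p_{i+1}}$ is an $\bar\alpha$-transition, Proposition~\ref{equiv-to-equiv} (applicable because $\A$ is nice) yields an $\bar\alpha$-transition $\zug{p'_i,\sigma_{i+1},p'_{i+1}}$ with $p_{i+1} \precsim p'_{i+1}$, which closes the induction.

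It then remains to locate $p'_k$ inside $S'$. Here I would invoke normality of $\A$: the constructed run $p'_0,\ldots,p'_k$ consists solely of $\bar\alpha$-transitions, so it is a safe path, and since $\A$ has no $\bar\alpha$-transitions connecting distinct safe components, this path cannot leave the safe component of its starting state $p'_0 = q'$, namely $S'$. Hence $p'_k\in S'$, and taking $p' = p'_k$ gives a state of $S'$ with $p\precsim p'$, as required.

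I do not expect a serious obstacle: the only points that need care are (i) justifying that a safe run from $q$ to $p$ within $S$ genuinely exists, which is immediate from $S$ being strongly connected in $G_{\A^{\bar\alpha}}$; and (ii) making sure the transferred run stays in $S'$, which is where normality is essential — without it the run from $q'$ could drift into another safe component and the conclusion $p'\in S'$ would fail. Both are standard once the niceness hypotheses of $\A$ are on the table, so the proof should be short.
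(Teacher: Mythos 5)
Your argument is correct and is essentially the standard one: take the witness pair $q\precsim q'$ from $H(S,S')$, use strong connectivity of $S$ in the safe graph to get a safe run from $q$ to the target $p$, transfer it step by step via the $\precsim$-version of Proposition~\ref{equiv-to-equiv}, and use normality to keep the transferred safe run inside $S'$. This matches the technique the paper itself relies on in this setting (the same iterative use of Proposition~\ref{equiv-to-equiv} along safe runs appears in the proofs of Lemma~\ref{injection} and Theorem~\ref{safe isomorphic}), so there is nothing to fix.
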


Given a frontier $\S$ of $\A$, we define the automaton 
$\B_{\S} = \langle \Sigma, Q_{\S}, q^0_{\S}, \delta_{\S},\alpha_{\S}\rangle$, where $Q_{\S}=\{q\in Q_{\A}: q\in S \text{ for some }S\in {\S} \}$, and the other elements are defined as follows. The initial state $q^0_{\S}$ is chosen such that $q^0_{\S} \sim_\A q^0_\A$. Specifically, if $q^0_\A \in Q_{\S}$, we take $q^0_{\S} = q^0_\A$. Otherwise, by Proposition~\ref{totally-cover-lemma} and the definition of $\S$, there is a state $q' \in Q_{\S}$ such that $q^0_{\A} \precsim q'$, and we take $q^0_{\S}=q'$. The transitions in $\B_{\S}$ are either $\bar{\alpha}$-transitions of $\A$, or $\alpha$-transitions that we add among the safe components in $\S$ in a way that preserves language equivalence. Formally, consider a state $q\in Q_{\S}$ and a letter $\sigma \in \Sigma$. If $\delta^{\bar{\alpha}}_\A(q, \sigma)\neq \emptyset$, then $\delta^{\bar{\alpha}}_\S(q, \sigma) = \delta^{\bar{\alpha}}_\A(q, \sigma)$ and $\delta^{\alpha}_\S(q, \sigma) = \emptyset$.
If $\delta^{\bar{\alpha}}_\A(q,\sigma) = \emptyset$, then
$\delta^{\bar{\alpha}}_\S(q, \sigma) = \emptyset$ and $\delta^\alpha_\S(q, \sigma) = \{ q'\in Q_\S: \mbox{ there is } q''\in \delta^\alpha_\A(q, \sigma)  \mbox{ such that } q' \sim_\A q'' \}$. Note that $\B_{\S}$ is $\alpha$-homogenous.

\begin{example}
{\rm 
	Consider the nice tDCW $\A$ from Figure~\ref{safe example}. By removing the $\alpha$-transitions of $\A$, we get the safe components described in Figure~\ref{A's safe components}. Since $q_2 \precsim q_0$, we have that $\A$ has a single frontier $\S=\{\{q_0,q_1\}\}$. The automaton $\B_{\S}$ appears in Figure \ref{B figure}. As all the states of $\A$ are equivalent, we direct a $\sigma$-labeled $\alpha$-transition to $q_0$ and to $q_1$, for every state with no $\sigma$-labeled transition in $\A$. \hfill \qed}
	
	\vspace{2mm}	
	\begin{minipage}{.4\linewidth}
		\centering
		\includegraphics[width=.7\textwidth]{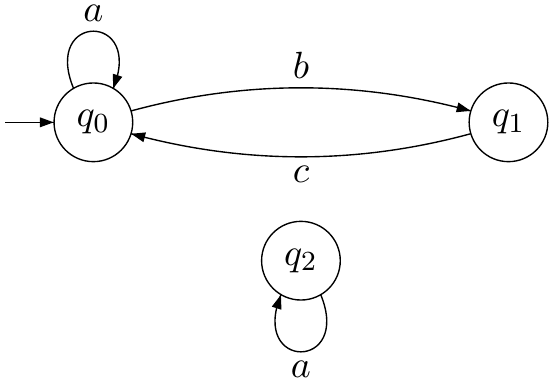}
		\captionof{figure}{The safe components of $\A$.}
		\label{A's safe components}
	\end{minipage}
	\begin{minipage}{.5\linewidth}
		\centering
		\includegraphics[width=.6\textwidth]{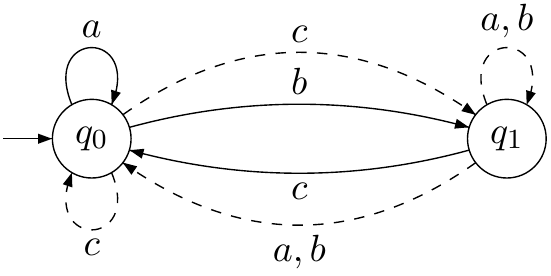}
		\vspace{5mm}
		\captionof{figure}{The tNCW $\B_\S$ for $\S={\{\{q_0,q_1\}\}}$.}
		\label{B figure}	
	\end{minipage}
\end{example}
	\vspace{2mm}

\begin{proposition}\label{A and B are equivalent}
	Let $q$ and $s$ be states of $\A$ and $\B_{\S}$, respectively, with $q \sim_{\A} s$. It holds that $\B_{\S}^s$ is a GFG-tNCW equivalent to $\A^q$.
\end{proposition}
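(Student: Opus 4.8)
The plan is to prove the two claims of Proposition~\ref{A and B are equivalent} — that $L(\B_{\S}^s) = L(\A^q)$ and that $\B_{\S}^s$ is GFG — by establishing a tight correspondence between runs of $\A$ and runs of $\B_{\S}$. The central observation is that every state of $\B_{\S}$ is a state of $\A$, the $\bar\alpha$-transitions of $\B_{\S}$ are exactly the $\bar\alpha$-transitions of $\A$ among the frontier states, and the added $\alpha$-transitions of $\B_{\S}$ always go to states that are $\sim_{\A}$-equivalent to a genuine $\alpha$-successor in $\A$. So I would set up, as a lemma, the statement that whenever $p \sim_\A p'$ with $p \in Q_\A$ and $p' \in Q_{\S}$, then $L(\A^p) = L(\A^{p'}) \supseteq L(\B_{\S}^{p'})$ on the one hand, and on the other hand $\B_{\S}^{p'}$ still accepts all of $L(\A^{p})$ via a strategy.

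For the inclusion $L(\B_{\S}^s) \subseteq L(\A^q)$: given an accepting run $\rho$ of $\B_{\S}^s$ on a word $w$, it eventually gets trapped in a safe component of $\B_{\S}$ (using only $\bar\alpha$-transitions from some point on), and since those $\bar\alpha$-transitions are $\A$-transitions, the tail of $\rho$ is a safe run in $\A$ from some state $t$. Reading the same word, $\A^q$ — starting from $q \sim_\A s$ — can, by Proposition~\ref{pruned-corollary} applied along the finite prefix (each step of $\rho$ on the prefix is either an $\A$-transition or an $\alpha$-transition mapped to an $\sim_\A$-equivalent $\A$-successor), reach a state $t'$ with $t' \sim_\A t$; and since $t$ admits a safe run of $\A$ on $w[i{+}1,\infty]$, semantic determinism plus a standard argument shows $w[i{+}1,\infty] \in L(\A^{t'})$, hence $w \in L(\A^q)$. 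Conversely, for $L(\A^q) \subseteq L(\B_{\S}^s)$ together with GFGness, I would define a strategy for $\B_{\S}^s$: fix a strategy $f$ witnessing GFGness of $\A$ (all states of $\A$ are GFG since $\A$ is nice, so in particular $\A^q$ is GFG), and build a $\B_{\S}$-strategy $g$ that tracks, after reading $u$, a state $g(u) \in Q_{\S}$ with $g(u) \sim_\A f(u)$: when $f$ takes an $\bar\alpha$-transition that stays inside the frontier we copy it; when $f$ leaves the frontier (or takes an $\alpha$-transition) we are forced onto an $\alpha$-transition of $\B_{\S}$, but by construction of $\delta^\alpha_{\S}$ we can land on a $Q_{\S}$-state equivalent to $f(u\sigma)$, and by Proposition~\ref{totally-cover-lemma} / the frontier definition such equivalent states exist. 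Running $g$ on $w \in L(\A^q)$: since $f(w)$ is accepting it eventually takes only $\bar\alpha$-transitions and stays in one safe component of $\A$; I must argue the corresponding $g$-run also eventually takes only $\bar\alpha$-transitions of $\B_{\S}$ — this is where I would invoke that $\A$ is normal and safe-centralized, so the tail of $f(w)$ lives in a safe component $S$, and there is a frontier component $S'$ with $H(S,S')$; using Proposition~\ref{equiv-to-equiv} one can shadow the safe tail of $f(w)$ by a safe run inside $S'$, making $g$'s run accepting.

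The main obstacle I anticipate is the GFGness / second-inclusion direction, specifically showing that the shadowing strategy $g$ produces a run that is eventually safe in $\B_{\S}$ rather than one that keeps taking the newly-added $\alpha$-transitions infinitely often. The subtle point is that $f$'s run may legitimately take $\bar\alpha$-transitions that leave the frontier region $Q_{\S}$ (these are real transitions in $\A$ but not in $\B_{\S}$), forcing $g$ onto an $\alpha$-transition; I need the frontier structure to guarantee that once $f(w)$ settles into its final safe component $S$, I can choose, via repeated application of Proposition~\ref{equiv-to-equiv} starting from a state of the frontier component $S'$ strongly-equivalent (or subsafe-equivalent) to the entry state, a genuinely safe continuation within $S'$ — and then commit $g$ to that safe continuation forever. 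Making the bookkeeping precise — that such a commitment is consistent with the finitely many earlier forced $\alpha$-steps and that the switch happens only finitely often — is the technical heart of the argument. The language-inclusion direction $L(\B_{\S}^s)\subseteq L(\A^q)$ and the routine transfer lemmas via Proposition~\ref{pruned-corollary} I expect to be straightforward in comparison.
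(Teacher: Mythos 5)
Your overall architecture is the right one and matches the source of this proposition: the inclusion $L(\B_{\S}^s)\subseteq L(\A^q)$ by simulating each $\B_\S$-step with an $\A$-step up to $\sim_\A$ (via Proposition~\ref{pruned-corollary}) and then absorbing the safe tail, and the converse inclusion plus GFGness by shadowing a strategy $f$ of $\A^q$ inside $Q_\S$. The first direction is essentially complete. But in the second direction you explicitly leave the decisive step open (``making the bookkeeping precise \ldots is the technical heart''), and the invariant you actually carry, $g(u)\sim_\A f(u)$, is too weak to close it: when $f(w)$ enters its final safe phase at a state $p$, the shadow state $g(u)$ is only language-equivalent to $p$, so $L_{safe}(\A^{p})$ need not be contained in $L_{safe}(\A^{g(u)})$, and the safe tail of $f(w)$ cannot be shadowed from $g(u)$ by Proposition~\ref{equiv-to-equiv}. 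Your proposed repair --- ``commit $g$ to a safe continuation inside $S'$ forever'' --- is also not yet a strategy: a commitment made for one fixed $w$ does not define $g$ on all finite words, and you give no rule for when the switch is triggered. (Also, $\A$ is not assumed safe-centralized here --- that is what $\B_\S$ is built to achieve --- though the fact you actually need, namely $H(S,S')$ for some frontier component $S'$, follows from the definition of a frontier together with Proposition~\ref{totally-cover-lemma}.)

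The missing idea is to strengthen what happens at the forced $\alpha$-jumps, after which no bookkeeping is needed. Define $g$ uniformly: from $g(u)$ on $\sigma$, take the (unique, by safe determinism and normality) $\bar\alpha$-transition of $\A$ if one exists; otherwise take the $\alpha$-transition of $\B_\S$ to a state $p'$ with $f(u\sigma)\precsim p'$ --- such a $p'$ lies in $\delta^\alpha_\S(g(u),\sigma)$ because the frontier property and Proposition~\ref{totally-cover-lemma} supply some $p'\in Q_\S$ with $f(u\sigma)\precsim p'$, and semantic determinism makes $p'$ equivalent to the $\alpha$-successors of $g(u)$. Now let $i$ be such that $f(w)$ is safe from position $i$ on. If $g$'s run takes no $\alpha$-transition after $i$, it is accepting. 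Otherwise, at its first $\alpha$-transition after $i$, say after reading $u\sigma=w[1,j]$, it lands on $p'$ with $f(w[1,j])\precsim p'$; since $w[j+1,\infty]\in L_{safe}(\A^{f(w[1,j])})\subseteq L_{safe}(\A^{p'})$ and $\A$ is safe deterministic and normal, the unique $\bar\alpha$-continuation from $p'$ reads the entire suffix inside $Q_\S$, so $g$ (which always prefers $\bar\alpha$-transitions) takes no further $\alpha$-transition. Thus at most one $\alpha$-transition occurs after position $i$, the run is accepting, and GFGness and $L(\A^q)\subseteq L(\B_\S^s)$ follow simultaneously. Without upgrading $\sim_\A$ to $\precsim$ at the jump, the argument as you sketched it does not go through.
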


\begin{proposition}
	\label{n sc h}
	For every frontier $\S$, the automaton $\B_{\S}$ is a nice, safe-centralized, and $\alpha$-homogenous GFG-tNCW equivalent to $\A$.
\end{proposition}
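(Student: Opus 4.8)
The plan is to verify, property by property, that $\B_{\S}$ is a nice, safe-centralized, $\alpha$-homogenous GFG-tNCW equivalent to $\A$, relying on the construction together with Propositions~\ref{pruned-corollary},~\ref{A and B are equivalent},~\ref{totally-cover-lemma} and~\ref{there is the same safe}. Equivalence and GFGness are immediate from Proposition~\ref{A and B are equivalent}: applied with $q=q^0_{\A}$ and $s=q^0_{\S}$ (recall $q^0_{\S}\sim_{\A}q^0_{\A}$) it shows that $\B_{\S}=\B_{\S}^{q^0_{\S}}$ is a GFG-tNCW equivalent to $\A$, and applied with $q=s$ for an arbitrary state $s$ of $\B_{\S}$ (which is also a state of $\A$, so $s\sim_{\A}s$) it shows that $\B_{\S}^{s}$ is GFG, hence every state of $\B_{\S}$ is GFG. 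By construction $\B_{\S}$ is $\alpha$-homogenous, and it is safe deterministic since $\delta^{\bar{\alpha}}_{\S}(q,\sigma)$ is either $\delta^{\bar{\alpha}}_{\A}(q,\sigma)$ or $\emptyset$ while $|\delta^{\bar{\alpha}}_{\A}(q,\sigma)|\leq 1$. Totality is the one place the frontier is used here: if $\delta^{\bar{\alpha}}_{\A}(q,\sigma)=\emptyset$ then $\delta^{\alpha}_{\A}(q,\sigma)=\delta_{\A}(q,\sigma)\neq\emptyset$, and for any $q''$ in it the definition of a frontier provides some $S'\in\S$ with $H(S'',S')$, where $S''$ is the safe component of $q''$, so Proposition~\ref{totally-cover-lemma} yields a state $q'\in S'\subseteq Q_{\S}$ with $q''\precsim_{\A}q'$, whence $q'\in\delta^{\alpha}_{\S}(q,\sigma)$.

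I would then pin down the safe components of $\B_{\S}$. Since $\A$ is normal, every $\bar{\alpha}$-transition out of a state of a safe component $S$ of $\A$ lands in $S$; hence the $\bar{\alpha}$-part of $\B_{\S}$ is well defined with $\delta^{\bar{\alpha}}_{\S}=\delta^{\bar{\alpha}}_{\A}$ on $Q_{\S}$, and the $\bar{\alpha}$-graph of $\B_{\S}$ is the disjoint union, over $S\in\S$, of the $\bar{\alpha}$-graph of $\A$ restricted to $S$. Thus $\S(\B_{\S})=\S$ and $\B_{\S}$ is normal; moreover, for $q\in Q_{\S}$ a safe run from $q$ never leaves $q$'s safe component, on which $\B_{\S}$ and $\A$ agree, so $L_{safe}(\B_{\S}^{q})=L_{safe}(\A^{q})$. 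Semantic determinism follows: if $\delta^{\bar{\alpha}}_{\A}(q,\sigma)\neq\emptyset$ then $\delta_{\S}(q,\sigma)$ is a singleton, and otherwise every element of $\delta_{\S}(q,\sigma)$ is $\sim_{\A}$-equivalent to an element of $\delta_{\A}(q,\sigma)$, all mutually $\sim_{\A}$-equivalent by semantic determinism of $\A$ and Proposition~\ref{pruned-corollary}, and Proposition~\ref{A and B are equivalent} turns $\sim_{\A}$ into $\sim_{\B_{\S}}$ on states of $\B_{\S}$. For safe-centralization, let $q,s\in Q_{\S}$ with $q\precsim_{\B_{\S}}s$: Proposition~\ref{A and B are equivalent} gives $L(\B_{\S}^{q})=L(\A^{q})$ and $L(\B_{\S}^{s})=L(\A^{s})$, which with the safe-language equalities above yields $q\precsim_{\A}s$, hence $H(S_{q},S_{s})$ for the safe components $S_{q},S_{s}\in\S$ of $q$ and $s$; as $\S$ is a frontier, $H$ relates no two distinct members of $\S$, so $S_{q}=S_{s}$.

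The step I expect to be genuinely delicate is reachability of all states of $\B_{\S}$, and this is where I would concentrate the real work. It suffices to show that every $S\in\S$ contains a state reachable from $q^0_{\S}$, since $S$ is $\bar{\alpha}$-strongly connected in $\B_{\S}$ and so, once one state of $S$ is reached, all of $S$ is. Let $\hat{\B}$ be the restriction of $\B_{\S}$ to the states reachable from $q^0_{\S}$ and let $\mathcal{R}\subseteq\S$ be the set of frontier components it meets; by the structure of the $\bar{\alpha}$-graph the reachable states are exactly $\bigcup_{S\in\mathcal{R}}S$ and this set has no outgoing transitions in $\B_{\S}$, so $\hat{\B}$ is a nice GFG-tNCW equivalent to $\A$ with $\S(\hat{\B})=\mathcal{R}$ — it inherits all the properties checked above, with reachability now holding by construction. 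Suppose toward a contradiction that $S^{*}\in\S\setminus\mathcal{R}$ and pick $p^{*}\in S^{*}$. Proposition~\ref{there is the same safe}, applied to $\A$ and $\hat{\B}$, produces a state $q$ of $\A$ and a state $\hat{s}$ of $\hat{\B}$ with $p^{*}\precsim_{\A}q$ and $q\approx\hat{s}$. Since $\hat{s}$ lies in some $S'\in\mathcal{R}$, which sits with the same $\bar{\alpha}$-transitions inside both $\A$ and $\hat{\B}$, we get $L_{safe}(\hat{\B}^{\hat{s}})=L_{safe}(\A^{\hat{s}})$, and $L(\hat{\B}^{\hat{s}})=L(\B_{\S}^{\hat{s}})=L(\A^{\hat{s}})$ by Proposition~\ref{A and B are equivalent}, so $q\approx_{\A}\hat{s}$ and hence $p^{*}\precsim_{\A}\hat{s}$ by transitivity. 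This gives $H(S^{*},S')$ with $S^{*}\neq S'$ both in $\S$, contradicting that $\S$ is a frontier; therefore $\mathcal{R}=\S$ and $\B_{\S}$ is reachable. The delicate bookkeeping in this last argument — tracking which language and safe-language equalities survive the passage among $\A$, $\B_{\S}$ and $\hat{\B}$, and checking that $\hat{\B}$ really is nice — is the main obstacle; everything else is a straightforward unwinding of the construction and the cited propositions.
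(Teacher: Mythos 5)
Your proof is correct, and it is worth noting that the paper itself does not prove Proposition~\ref{n sc h} at all: the statement is part of the review of the construction of \cite{AK19}, so the only comparison possible is with the intended argument behind that construction. Your verification of equivalence and GFGness via Proposition~\ref{A and B are equivalent}, of $\alpha$-homogeneity, safe determinism and totality from the definition of $\delta_\S$ together with Proposition~\ref{totally-cover-lemma}, and of normality, semantic determinism and safe-centralization from the fact that $\S(\B_\S)=\S$ and that safe languages of states in $Q_\S$ are unchanged, is exactly the straightforward unwinding one expects (one small slip: in the semantic-determinism step what you actually use is semantic determinism of $\A$ plus transitivity of $\sim_\A$, not Proposition~\ref{pruned-corollary}, but this is harmless). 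The genuinely substantive part is, as you say, reachability, and your route --- restrict $\B_\S$ to its reachable part $\hat{\B}$, check that $\hat{\B}$ is a nice GFG-tNCW equivalent to $\A$, and then use Proposition~\ref{there is the same safe} to manufacture a state $\hat{s}$ in a reachable frontier component with $p^*\precsim_\A \hat{s}$, contradicting $\neg H(S^*,S')$ for distinct members of the frontier --- is sound. The key point you correctly exploit is that none of the properties needed for $\hat{\B}$ to be nice (GFGness of states, semantic and safe determinism, normality, totality) depends on reachability of $\B_\S$, so there is no circularity; this gives a clean, self-contained reachability proof that works at the level of the frontier structure rather than by simulating in $\B_\S$ a run of $\A$ to the target state, and it buys you independence from the particular bookkeeping of such a simulation at the cost of invoking Proposition~\ref{there is the same safe} for the auxiliary automaton $\hat{\B}$.
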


\vspace{-2mm}
\paragraph{Step 2: Safe minimization}
Let $\B=\zug{\Sigma,Q,q_0,\delta,\alpha}$ be a nice, safe-centralized, and $\alpha$-homogenous GFG-tNCW.
For 
$q \in Q$, define $[q] = \{ q'\in Q: q \approx_{\B} q' \}$. We define the tNCW $\C = \langle \Sigma, Q_\C, [q_0], \delta_{\C}, \alpha_{\C}\rangle$ as follows. First, $Q_\C=\{[q]: q\in Q\}$. Then, the transition function is such that $\zug{ [q], \sigma, [p]} \in \Delta_\C$ iff there are $q'\in [q]$ and $p'\in [p]$ such that $\langle q', \sigma, p'\rangle \in \Delta$, and $\langle [q], \sigma, [p]\rangle \in \alpha_\C$ iff $\langle q', \sigma, p'\rangle \in \alpha$. Note that $\B$ being $\alpha$-homogenous implies that $\alpha_\C$ is well defined; that is, independent of the choice of $q'$ and $p'$. 
To see why,  assume that $\langle q', \sigma, p'\rangle \in \bar{\alpha}$, and let $q''$ be a state in $[q]$. As $q' \approx_\B q''$, we have, by Proposition \ref{equiv-to-equiv}, that there is $p''\in [p]$ such that $\zug{q'', \sigma, p''}\in \bar{\alpha}$. Thus, as $\B$ is $\alpha$-homogenous,  there is no $\sigma$-labeled $\alpha$-transition from $q''$ in $\B$. In particular, there is no $\sigma$-labeled $\alpha$-transition from $q''$  to a state in $[p]$. Note that, by the above, the tNCW $\C$ is $\alpha$-homogenous. 

\begin{proposition}\label{final equivalent}
	For every $[p]\in Q_{\C}$ and $s\in [p]$, we have that $\C^{[p]}$ is a GFG-tNCW equivalent to~$\B^s$.
\end{proposition}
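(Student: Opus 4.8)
The plan is to prove the two assertions separately: first that $L(\C^{[p]}) = L(\B^s)$, and then that $\C^{[p]}$ is GFG; for the latter I intend to reuse a strategy witnessing the GFGness of $\B^s$, which exists since $\B$ is nice and hence all its states are GFG.

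For the inclusion $L(\B^s) \subseteq L(\C^{[p]})$ I would simply project runs: given an accepting run $r_0,r_1,\ldots$ of $\B^s$ on $w$, the sequence $[r_0],[r_1],\ldots$ is a run of $\C^{[p]}$ on $w$ — transitions are preserved by the definition of $\Delta_\C$, and $[r_0]=[s]=[p]$ — and it is accepting because $r$ eventually traverses only $\bar{\alpha}$-transitions and, by the well-definedness argument for $\alpha_\C$ recorded above, each $\bar{\alpha}$-transition of $\B$ projects to a $\bar{\alpha}$-transition of $\C$.

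The converse inclusion $L(\C^{[p]}) \subseteq L(\B^s)$ is the heart of the proof, and I would obtain the lifted run by gluing an (arbitrary) finite prefix to an eventually-safe tail. Fix an accepting run $\rho=[p_0],[p_1],\ldots$ of $\C^{[p]}$ on $w$ and an index $k$ from which $\rho$ traverses only $\bar{\alpha}$-transitions, and fix a representative $p_k\in[p_k]$. Using Proposition~\ref{equiv-to-equiv} together with the well-definedness of $\alpha_\C$ — so that every $\bar{\alpha}$-transition $\langle[q],\sigma,[q']\rangle$ of $\C$ is realized, from \emph{every} member of $[q]$, as a $\bar{\alpha}$-transition into some member of $[q']$ — I would build inductively a safe run $p_k,r_{k+1},r_{k+2},\ldots$ of $\B$ on $w[k+1,\infty]$ with $r_{k+j}\in[p_{k+j}]$; hence $w[k+1,\infty]\in L_{safe}(\B^{p_k})\subseteq L(\B^{p_k})$. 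For the prefix I would use that $\B$ is semantically deterministic: starting from $s$, any run $s_0,s_1,\ldots,s_k$ of $\B^s$ on $w[1,k]$ satisfies $s_i\sim_\B p_i$ for all $i\le k$, where the base case holds since $s\in[p]=[p_0]$, and the inductive step follows from Proposition~\ref{pruned-corollary} applied to the $\B$-transition between $[p_i]$ and $[p_{i+1}]$ that witnesses the corresponding transition of $\C$. In particular $s_k\sim_\B p_k$, so $L(\B^{s_k})=L(\B^{p_k})$ contains $w[k+1,\infty]$, and prepending $s_0,\ldots,s_{k-1}$ to an accepting run of $\B^{s_k}$ on $w[k+1,\infty]$ yields an accepting run of $\B^s$ on $w$.

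Finally, for GFGness, let $g$ witness the GFGness of $\B^s$ and set $f(u)=[g(u)]$. Then $f(\epsilon)=[s]=[p]$, the strategy $f$ is consistent with $\Delta_\C$ since $g$ is consistent with $\Delta$, and for $w\in L(\C^{[p]})=L(\B^s)$ the run $g(w)$ is accepting in $\B$, hence eventually traverses only $\bar{\alpha}$-transitions, which — exactly as in the first inclusion — project to $\bar{\alpha}$-transitions of $\C$; thus $f(w)$ is an accepting run of $\C^{[p]}$ on $w$, so $f$ witnesses the GFGness of $\C^{[p]}$. I expect the middle step to be the main obstacle: one has to notice that the lifted run cannot be produced in one piece (the prefix of $\rho$ may use $\alpha$-transitions that need not lift from $s$), and instead splice a tail obtained from $\alpha$-homogeneity and Proposition~\ref{equiv-to-equiv} onto a prefix obtained from semantic determinism and Proposition~\ref{pruned-corollary}, verifying that the two pieces meet at a $\sim_\B$-equivalent state so that acceptance of the tail carries over.
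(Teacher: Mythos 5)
Your proof is correct, and it follows essentially the argument behind the construction (the paper itself defers this proposition to \cite{AK19}): projecting runs and the GFG strategy through the quotient map for $L(\B^s)\subseteq L(\C^{[p]})$ and GFGness, and for the converse inclusion lifting an accepting run of $\C$ by splicing a prefix controlled via semantic determinism (Proposition~\ref{pruned-corollary}) with a safe tail lifted via Proposition~\ref{equiv-to-equiv} and the well-definedness of $\alpha_\C$. No gaps noted.
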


\begin{proposition}
	\label{n sc sm}
	The GFG-tNCW $\C$ is a nice, safe-centralized, safe-minimal, and $\alpha$-homogenous GFG-tNCW equivalent to $\A$.
\end{proposition}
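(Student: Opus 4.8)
The plan is to settle the ``free'' properties from the propositions already proved, then isolate the single real lemma — a tight correspondence between the $\bar{\alpha}$-runs of $\B$ and of $\C$ — and read off everything else from it. Equivalence, GFGness, reachability, and $\alpha$-homogeneity are immediate. Proposition~\ref{final equivalent}, instantiated with $[p]=[q_0]$ and $s=q_0\in[q_0]$, shows that $\C=\C^{[q_0]}$ is a GFG-tNCW equivalent to $\B^{q_0}=\B$, which by Proposition~\ref{n sc h} is equivalent to $\A$; instantiated with an arbitrary $[q]$ and $s=q$, it shows $\C^{[q]}$ is GFG and, moreover, $L(\C^{[q]})=L(\B^q)$ — hence $[q]\sim_\C[p]$ iff $q\sim_\B p$. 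Every $[q]\in Q_\C$ is reachable, since a run of $\B$ from $q_0$ to $q$ projects to a walk from $[q_0]$ to $[q]$ in $\C$, and $\C$ is total because $\B$ is; $\alpha$-homogeneity of $\C$ was already established when $\C$ was defined (it is precisely what makes $\alpha_\C$ well defined).

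The heart of the argument is the following lemma, which I would prove by a step-by-step lifting: \emph{for every state $q$ of $\B$, a sequence is a $\bar{\alpha}$-run of $\C^{[q]}$ on a word $w$ iff it is the pointwise $\approx_\B$-class image of a $\bar{\alpha}$-run of $\B^q$ on $w$; consequently $L_{safe}(\C^{[q]})=L_{safe}(\B^q)$, and $[p']$ is reachable from $[p]$ by $\bar{\alpha}$-transitions in $\C$ iff some state of $[p']$ is reachable from $p$ by $\bar{\alpha}$-transitions in $\B$.} The direction ``$\Leftarrow$'' is trivial, as any $\bar{\alpha}$-transition $\zug{a,\sigma,b}$ of $\B$ projects, by well-definedness of $\alpha_\C$, to the $\bar{\alpha}$-transition $\zug{[a],\sigma,[b]}$ of $\C$. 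For ``$\Rightarrow$'', given a $\bar{\alpha}$-transition $\zug{[r_i],\sigma,[r_{i+1}]}$ of $\C$ and a representative $a_i\in[r_i]$ already reached in $\B$, the construction of $\C$ provides $b\in[r_i]$ and $c\in[r_{i+1}]$ with $\zug{b,\sigma,c}\in\bar{\alpha}$; since $a_i\approx_\B b$, Proposition~\ref{equiv-to-equiv} yields a $\bar{\alpha}$-transition $\zug{a_i,\sigma,a_{i+1}}$ with $c\approx_\B a_{i+1}$, so $a_{i+1}\in[r_{i+1}]$, and we continue (for an infinite run this builds the run directly, with no compactness argument needed). I expect this lemma — checking that the lifting is an honest run and that it interacts correctly with safe-determinism of $\B$ — to be the main obstacle; everything after it is a short deduction.

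With the lemma available, niceness follows. \emph{Semantic determinism}: if $[p]$ and $[p']$ are $\sigma$-successors of $[q]$, choose witnessing $\B$-transitions $\zug{q_1,\sigma,p_1}$ and $\zug{q_2,\sigma,p_2}$ with $q_1,q_2\in[q]$; then $q_1\sim_\B q_2$, so Proposition~\ref{pruned-corollary} gives $p_1\sim_\B p_2$, i.e.\ $[p]\sim_\C[p']$. \emph{Safe-determinism}: two $\bar{\alpha}$-successors $[p],[p']$ of $[q]$ on $\sigma$ in $\C$ lift to $\bar{\alpha}$-transitions $\zug{q_1,\sigma,p_1}$, $\zug{q_2,\sigma,p_2}$ of $\B$ with $q_1\approx_\B q_2$; by Proposition~\ref{equiv-to-equiv} there is a $\bar{\alpha}$-transition from $q_2$ on $\sigma$ to some $p_1''\approx_\B p_1$, and safe-determinism of $\B$ forces $p_1''=p_2$, whence $[p]=[p']$. \emph{Normality}: a $\bar{\alpha}$-path from $[q]$ to $[s]$ in $\C$ lifts to a $\bar{\alpha}$-path from $q$ to some $s'\in[s]$ in $\B$; normality of $\B$ supplies a $\bar{\alpha}$-path from $s'$ to $q$, which projects to a $\bar{\alpha}$-path from $[s]$ to $[q]$ in $\C$.

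For the two remaining, minimality-flavoured properties, the lemma gives $[q]\precsim_\C[s]$ iff $q\precsim_\B s$, and $[q]\approx_\C[s]$ iff $q\approx_\B s$ (combining $L(\C^{[q]})=L(\B^q)$ with $L_{safe}(\C^{[q]})=L_{safe}(\B^q)$). Safe-minimality of $\B$ then makes $[q]\approx_\C[s]$ entail $q\approx_\B s$, i.e.\ $[q]=[s]$, so $\C$ is safe-minimal; and safe-centralization of $\B$ places $q$ and $s$ — hence, by projecting a connecting $\bar{\alpha}$-path, also $[q]$ and $[s]$ — in a common safe component of $\C$, so $\C$ is safe-centralized. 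This exhausts the list of properties and completes the proof, the one genuinely technical point being the $\bar{\alpha}$-run lifting lemma, with the cited propositions doing the rest.
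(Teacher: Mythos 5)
Your proposal is correct, but note that the paper itself gives no proof of this proposition: Section~\ref{sec ak19} only reviews the construction, and Proposition~\ref{n sc sm} is imported from \cite{AK19}. So the comparison can only be against that cited source; judged on its own terms, your argument is sound and self-contained relative to the facts the paper does quote (Propositions~\ref{pruned-corollary}, \ref{equiv-to-equiv}, \ref{A and B are equivalent}, \ref{n sc h}, \ref{final equivalent} and the well-definedness of $\alpha_\C$). The $\bar{\alpha}$-run lifting lemma is exactly the right engine: the projection direction is immediate from the definition of $\Delta_\C$ and $\alpha_\C$, and the lifting direction correctly combines the existence of a witnessing $\bar{\alpha}$-transition $\zug{b,\sigma,c}$ (all witnesses of a $\bar{\alpha}_\C$-transition are $\bar{\alpha}$, by well-definedness) with Proposition~\ref{equiv-to-equiv} applied to $a_i\approx_\B b$; from it, $L_{safe}(\C^{[q]})=L_{safe}(\B^q)$, safe determinism, normality, safe-minimality and safe-centralization all follow as you describe, while equivalence to $\A$, GFGness of all states, reachability, totality and $\alpha$-homogeneity come for free from Propositions~\ref{final equivalent} and~\ref{n sc h} and the construction.

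One small slip worth fixing: in the safe-minimality step you appeal to ``safe-minimality of $\B$'', but $\B=\B_\S$ is not assumed (and need not be) safe-minimal --- Step~2 is precisely what enforces that property. Fortunately your chain does not actually use it: $[q]\approx_\C[s]$ gives $q\approx_\B s$ via the two language equalities, and $q\approx_\B s$ gives $[q]=[s]$ directly from the definition of the $\approx_\B$-classes. With that attribution corrected, the proof stands.
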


\begin{example}
{\rm The safe languages of the states $q_0$ and $q_1$ of the GFG-tNCW $\B_\S$ from Figure~\ref{B figure} are different. Thus, $q_0 \not \approx q_1$, and applying safe minimization to $\B_\S$ results in the GFG-tNCW $\C$ identical to $\B_\S$. \hfill \qed}
\end{example}

\section{Canonicity in GFG-NCWs}
\label{sec canonical rep}

In this section we study canonicity for GFG-tNCWs. We first show that the sufficient conditions for minimality of nice GFG-tNCWs specified in 
Theorem~\ref{C is minimal} are necessary.

\begin{theorem}
	\label{minimal automaton has all properties}
	Nice minimal GFG-tNCWs are safe-centralized and safe-minimal.
\end{theorem}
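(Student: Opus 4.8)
The plan is to feed $\A$ into the minimization construction of Section~\ref{sec ak19} and then read off both properties from a size count. So, let $\A$ be a nice minimal GFG-tNCW, fix an arbitrary frontier $\S$ of $\A$, let $\B_\S$ be the automaton produced by Step~1 (safe centralization), and let $\C$ be the automaton produced by Step~2 (safe minimization) applied to $\B_\S$. By Propositions~\ref{n sc h}, \ref{A and B are equivalent}, \ref{final equivalent} and~\ref{n sc sm}, both $\B_\S$ and $\C$ are GFG-tNCWs equivalent to $\A$. Moreover $Q_\S \subseteq Q_\A$, so $|\B_\S| \leq |\A|$, and $Q_\C$ is the set of strong-equivalence classes of $Q_{\B_\S}$, so $|\C| \leq |\B_\S|$. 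Since $\A$ is minimal and $\C$ is an equivalent GFG-tNCW, $|\A| \leq |\C|$, whence $|\A| = |\B_\S| = |\C|$. The remaining work is to extract the two claims from these equalities.

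For safe-centralization: from $|\B_\S| = |\A|$ and $Q_\S \subseteq Q_\A$ we get $Q_\S = Q_\A$, and since the safe components of $\A$ partition $Q_\A$, the frontier must contain every safe component, i.e.\ $\S = \S(\A)$. But, by definition, a frontier is an antichain with respect to $H$, so $\neg H(S, S')$ for every two distinct $S, S' \in \S(\A)$. Unfolding the definition of $H$, there are no states $q \in S$ and $s \in S'$ with $S \neq S'$ and $q \precsim s$; equivalently, $q \precsim s$ implies that $q$ and $s$ lie in the same safe component, which is precisely safe-centralization of $\A$.

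For safe-minimality: from $|\C| = |\B_\S|$ and the fact that $q \mapsto [q]$ maps $Q_{\B_\S}$ onto $Q_\C$, this map is a bijection, so no two distinct states of $\B_\S$ are strongly-equivalent; that is, $\B_\S$ is safe-minimal. It remains to transfer this back to $\A$, which is the step needing the most care. Since $\S = \S(\A)$, we have $Q_{\B_\S} = Q_\A$, and by the definition of the transitions of $\B_\S$ we have $\delta^{\bar{\alpha}}_{\B_\S}(q, \sigma) = \delta^{\bar{\alpha}}_\A(q, \sigma)$ for every state $q$ and letter $\sigma$ (they are equal when $\delta^{\bar{\alpha}}_\A(q, \sigma) \neq \emptyset$, and both empty otherwise). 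Hence $\A$ and $\B_\S$ have exactly the same safe runs, so $L_{safe}(\A^q) = L_{safe}(\B_\S^q)$ for every $q$, while Proposition~\ref{A and B are equivalent} (with $s = q$) gives $L(\A^q) = L(\B_\S^q)$. Therefore $\approx_\A$ and $\approx_{\B_\S}$ coincide on $Q_\A$, and safe-minimality of $\B_\S$ yields safe-minimality of $\A$. The only genuine obstacle is this last transfer: one must observe that building $\B_\S$ from $\A$ with the full frontier $\S(\A)$ merely re-routes $\alpha$-transitions among the safe components and hence preserves both the safe language and the full language of every individual state. Everything else is bookkeeping with the size equalities and the antichain property of frontiers.
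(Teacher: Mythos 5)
Your proof is correct and follows essentially the same route as the paper: both feed $\A$ into the two-step minimization construction of \cite{AK19} and use minimality of $\A$ to conclude that neither step can shrink the automaton, which forces $\S=\S(\A)$ (hence safe-centralization via the antichain property of frontiers) and forces the quotient map of Step~2 to be a bijection (hence safe-minimality). Your version is slightly more explicit than the paper's in one place it glosses over, namely the transfer of safe-minimality from $\B_\S$ back to $\A$ via the coincidence of the $\bar{\alpha}$-transitions and of the state languages, but the underlying argument is the same.
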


\begin{proof}
	Consider a nice minimal GFG-tNCW $\A$. We argue that if $\A$ is not safe-centralized or not safe-minimal, then it can be minimized further by the minimization construction of \cite{AK19}.
	Assume first that $\A$ is not safe-centralized. Then, there are two different safe components $S, S' \in \S(\A)$ and states $q \in S$ and $q' \in S'$ such that $q \precsim q'$. Then, $H(S, S')$, implying that the safe components in $\S(\A)$ are not a frontier. Then, Step 1 of  the construction minimizes $\A$ further. Indeed, in the transition to the automaton $\B_\S$, at least one safe component in $\S(\A)$ is removed from $\A$ when the frontier $\S$ is computed. 
	Assume now that $\A$ is safe-centralized. Then, for every two different safe components $S, S' \in \S(\A)$, it holds that $\neg H(S, S')$ and $\neg H(S', S)$. Hence, every strict subset of $\S(\A)$ is not a frontier. Thus, $\S(\A)$ is the only frontier of $\A$. Hence,  the automaton $\B_\S$ constructed in Step 1 has $\S=\S(\A)$, and is obtained from $\A$ by adding $\alpha$-transitions that do not change the languages and safe languages of its states. 
	%
	Accordingly, $\B_\S$ is safe-minimal iff $\A$ is safe-minimal. Therefore, if $\A$ is not safe-minimal, then applying Step 2 in the construction to $\B_\S$ merges at least two different states. Hence, also in this case, $\A$ is minimized further.
	\stam{
	 Accordingly, the state spaces of $\A$ and $\B_\S$ coincide, and so do the safe languages of their states, and, by Proposition \ref{A and B are equivalent}, also the languages of their states. Hence, $\B_\S$ is safe-minimal iff $\A$ is safe-minimal. Therefore, if we assume that $\A$ not safe-minimal, then safe-minimizing $\B_\S$ in the transition to the automaton $\C$, as presented in Section~\ref{ak19}, merges at least two different states, thus also in this case, $\A$ is minimized further.
	 }
\end{proof}

We formalize relations between tNCWs by means of {\em isomorphism} and {\em safe isomorphism}.
Consider two tNCWs $\A=\zug{\Sigma,Q_\A,q^0_\A,\delta_\A,\alpha_\A}$ and $\B=\zug{\Sigma,Q_\B,q^0_\B,\delta_\B,\alpha_\B}$, and a bijection $\kappa: Q_\A \to Q_\B$. We say that $\kappa$ is:
\begin{itemize}

\item
{\em $\alpha$-transition respecting}, if $\kappa$ induces a bijection between the $\alpha$-transitions of $\A$ and $\B$. Formally, for all states $q,q' \in Q_\A$ and letter $\sigma \in \Sigma$, we have that  $q' \in \delta^{\alpha_\A}_\A(q, \sigma)$ iff $\kappa(q') \in \delta^{\alpha_\B}_\B(\kappa( q), \sigma)$.


\item
{\em $\bar{\alpha}$-transition respecting}, if $\kappa$ induces a bijection between the $\bar{\alpha}$-transitions of $\A$ and $\B$. Formally, for all states $q,q' \in Q_\A$ and letter $\sigma \in \Sigma$, we have that  $q' \in \delta^{\bar{\alpha}_\A}_\A(q, \sigma)$ iff $\kappa(q') \in \delta^{\bar{\alpha_\B}}_\B(\kappa( q), \sigma)$.

\end{itemize}

Then, $\A$ and $\B$ are \emph{safe isomorphic} if there is a bijection $\kappa: Q_\A \to Q_\B$ that is $\bar{\alpha}$-transition respecting. If, in addition, $\kappa$ is $\alpha$-transition respecting, then $\A$ and $\B$ are \emph{isomorphic}. Note that if $\kappa$ is $\bar{\alpha}$-transition respecting, then for every state $q \in Q_\A$, we have that 
$q$ and $\kappa(q)$ are safe equivalent. Also, if $\kappa$ is both $\alpha$-transition respecting and $\bar{\alpha}$-transition respecting, then for every state $q \in Q_\A$, we have that 
$q \approx \kappa(q)$.

\subsection{Safe isomorphism}

\begin{theorem}
	\label{safe isomorphic}
	Every two equivalent, nice, and minimal GFG-tNCWs are safe isomorphic.
\end{theorem}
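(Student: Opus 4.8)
The plan is to build the safe isomorphism by combining Lemma~\ref{injection} with a counting argument. By Theorem~\ref{minimal automaton has all properties}, both $\A_1$ and $\A_2$ are safe-centralized and safe-minimal (in addition to being nice). So Lemma~\ref{injection} applies in both directions: there is an injection $\eta_1 : \S(\A_1)\to\S(\A_2)$ with $|T|\le|\eta_1(T)|$ for every $T\in\S(\A_1)$, and symmetrically an injection $\eta_2 : \S(\A_2)\to\S(\A_1)$ with $|T'|\le|\eta_2(T')|$. Since $|\A_1|=|\A_2|$ (both minimal and equivalent), summing the sizes over the safe components — which partition the state space, as the safe components are the SCCs of $G_{\A^{\bar\alpha}}$ — forces both injections to be size-preserving bijections: $\eta_1$ is a bijection and $|T|=|\eta_1(T)|$ for every safe component $T$. (Here I use that $\sum_{T}|T| = |\A_i|$, which holds because every state lies in exactly one safe component.)

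**Next I would** promote the component-level bijection $\eta_1$ to a state-level $\bar\alpha$-transition respecting bijection $\kappa : Q_{\A_1}\to Q_{\A_2}$. Fix a safe component $T\in\S(\A_1)$ and its image $\eta_1(T)$; I want an isomorphism of the safe subgraphs restricted to $T$ and $\eta_1(T)$ that respects $\bar\alpha$-transitions. Looking back at the proof of Lemma~\ref{injection}: it produces, from a chosen state $q_T\in T$ with $q_T\approx s_T$ for some $s_T\in\eta_1(T)$, a safe run $q_0,\dots,q_m$ in $\A_1$ through all of $T$ and, via iterated application of Proposition~\ref{equiv-to-equiv}, a safe run $s_0,\dots,s_m$ in $\A_2$ with $q_i\approx s_i$. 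Since $\A_1$ is safe-minimal the distinct strong-equivalence classes hit by the $q_i$ number exactly $|T|$, and the map $q_i\mapsto s_i$ is then forced to be a bijection $T\to\eta_1(T)$ (injectivity from safe-minimality of $\A_2$, surjectivity from $|T|=|\eta_1(T)|$); moreover by construction $q_i\approx\kappa(q_i)$. Because $\A_1,\A_2$ are safe deterministic, each state in $T$ has at most one $\sigma$-labeled $\bar\alpha$-successor, so this $\approx$-respecting bijection automatically respects $\bar\alpha$-transitions: if $\langle q,\sigma,q'\rangle$ is an $\bar\alpha$-transition in $T$, Proposition~\ref{equiv-to-equiv} gives an $\bar\alpha$-transition $\langle\kappa(q),\sigma,s'\rangle$ with $q'\approx s'$, and safe-minimality pins $s'=\kappa(q')$; the converse direction is symmetric. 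Gluing these per-component bijections over all $T$ (using that $\A_1,\A_2$ are normal, so no $\bar\alpha$-transitions run between distinct safe components, hence there is nothing to check across components) yields the desired $\bar\alpha$-transition respecting bijection $\kappa$, i.e.\ a safe isomorphism.

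**The main obstacle I anticipate** is the well-definedness and coherence of $\kappa$ on each safe component: the construction in Lemma~\ref{injection} picks an arbitrary starting state $q_T$ and an arbitrary witnessing run through $T$, and a priori different choices could yield different maps, or the map built from a spanning run need not be a graph isomorphism (it is defined along one traversal, not on all transitions at once). The key is that safe-minimality of both automata rigidifies everything: $\approx$-classes within a safe component are singletons, so "$q_i\approx s_i$'' actually determines $s_i$ from $q_i$ uniquely, making the map independent of the run chosen and simultaneously forcing it to respect every $\bar\alpha$-transition, not just those on the chosen run. I would isolate this as a lemma — roughly, "in a nice safe-minimal tNCW, $\approx$ restricted to any single safe component is the identity" — and then the bijection $T\to\eta_1(T)$ is simply the unique $\approx$-matching, which is manifestly a $\bar\alpha$-respecting isomorphism. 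The remaining bookkeeping (patching components together, checking the initial states can be ignored since safe isomorphism does not constrain them) is routine.
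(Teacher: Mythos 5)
Your proposal is correct and follows essentially the same route as the paper: both apply Theorem~\ref{minimal automaton has all properties} and Lemma~\ref{injection} (in both directions), use the counting argument to make $\eta$ a size-preserving bijection on safe components, lift it to a state-level bijection via the safe runs with $q_i \approx s_i$ from the lemma's proof, and verify $\bar{\alpha}$-transition respecting through Proposition~\ref{equiv-to-equiv} plus safe-minimality pinning the successor. Your added observation that safe-minimality makes $\approx$ injective on each safe component, so $\kappa_T$ is the unique $\approx$-matching and hence independent of the chosen traversal, is a small but worthwhile clarification the paper leaves implicit.
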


\begin{proof}
	Consider two equivalent, nice, and minimal GFG-tNCWs $\A$ and $\B$. By Theorem~\ref{minimal automaton has all properties}, $\A$ is safe-minimal and safe-centralized. Hence, by Lemma~\ref{injection}, there is an injection $\eta: \S(\A) \to \S(\B)$ such that for every safe component $T\in \S(\A)$, it holds that $|T|\leq |\eta(T)|$. For a safe component $T\in \S(\A)$, let $p_T$ be some state in $T$. By Proposition~\ref{there is the same safe}, there are states $q_T \in Q_{\A}$ and $s_T \in Q_{\B}$ such that $p_T \precsim q_T$ and $q_T \approx s_T$. Since $\A$ is safe-centralized, the state $q_T$ is in $T$, and in the proof of Lemma~\ref{injection}, we defined $\eta(T)$ to be the safe component of $s_T$ in $\B$. Likewise, $\B$ is safe-minimal and safe-centralized, and there is an injection $\eta': \S(\B) \to \S(\A)$. 
	The existence of the two injections implies that $|\S(\A)| = |\S(\B)|$. Thus, the injection $\eta$ is actually a bijection. Hence, 
	
	$$|\A| = \sum\limits_{T\in \S(\A)} |T| \leq \sum\limits_{T\in \S(\A)} |\eta(T)| = \sum\limits_{T' \in \S(\B)} |T'| =  |\B|$$
	
	Indeed, the first inequality follows from the fact $|T| \leq |\eta(T)|$, and the second equality follows from the fact that $\eta$ is a bijection.
	Now, as $\A$ and $\B$ are both minimal, we have that $|\A|=|\B|$, and so it follows that for every safe component $T\in \S(\A)$, we have that $|T| = |\eta(T)|$.
	We use the latter fact in order to show that $\eta$ induces a bijection $\kappa: Q_\A \to Q_\B$ that is $\bar{\alpha}$-transition respecting.

	Consider a safe component $T\in \S(\A)$. We define a bijection $\kappa_T: T \to \eta(T)$. The desired bijection $\kappa$ is then the union of the bijections $\kappa_T$ for $T\in \S(\A)$. 
	By Lemma~\ref{injection}, we have that $|T| \leq |\eta(T)|$. The proof of the lemma associates with a safe run  $r_T = q_0, q_1, \ldots q_m$ of $\A$ that traverses all the states in the safe component $T$, a safe run $r_{\eta(T)} = s_0, s_1, \ldots s_m$ of $\B$ that traverses states in $\eta(T)$ and $q_i \approx s_i$, for every $1\leq i\leq m$. Moreover, if $1 \leq i_1,i_2 \leq m$ are such that $q_{i_1} \not \approx q_{i_2}$, then $s_{i_1} \not \approx s_{i_2}$. 
	Now, as $\A$ is safe-minimal, every two states in $T$ are not strongly equivalent. Therefore, the function $\kappa_T$ that maps each state $q_i$ in $r_T$ to the state $s_i$ in $r_{\eta(T)}$ is an injection from $T$ to $\eta(T)$. Thus, as $|T| = |\eta(T)|$, the injection $\kappa_T$ is actually a bijection.

	Clearly, as $\eta: \S(\A) \to \S(\B)$ is a bijection, the function $\kappa$ that is the union of the bijections $\kappa_T$ is a bijection from $Q_\A$ to $Q_\B$. We prove that $\kappa$ is $\bar{\alpha}$-transition respecting.
	Consider states $q,q' \in Q_\A$ and a letter $\sigma \in \Sigma$ such that $\zug{q, \sigma, q'}$ is an $\bar{\alpha}$-transition of $\A$. 
	Let $T$ be $q$'s safe component. By the definition of $\kappa_T$, we have that $q \approx \kappa_T(q)$. By Proposition \ref{equiv-to-equiv}, there is an $\bar{\alpha}$-transition of $\B$ of the form $t = \zug{\kappa(q), \sigma, s'}$, where $q' \approx s'$. As $t$ is an $\bar{\alpha}$-transition of $\B$, we know that $s'$ is in $\eta(T)$. Recall that $\B$ is safe-minimal; in particular, there are no strongly-equivalent states in $\eta(T)$. Hence, $s' = \kappa(q')$, and so $\zug{\kappa(q), \sigma, \kappa(q')}$ is an $\bar{\alpha}$-transition of $\B$. Likewise, if $\zug{\kappa(q), \sigma, \kappa(q')}$ is an $\bar{\alpha}$-transition of $\B$, then $\zug{q, \sigma, q'}$ is an $\bar{\alpha}$-transition of $\A$, and so we are done.
\end{proof}

\subsection{Isomorphism}

Theorem~\ref{safe isomorphic} implies that all nice minimal GFG-tNCWs for a given language are safe isomorphic. We continue and show that it is possible to make these GFG-tNCWs isomorphic. 
We propose two canonical forms that guarantee isomorphism. Both are based on saturating the GFG-NCW with $\alpha$-transitions. One adds as many $\alpha$-transitions as possible, and the second does so in a way that preserves $\alpha$-homogeneity. 

Consider a nice GFG-tNCW $\A=\zug{\Sigma,Q,q_0,\delta,\alpha}$. We say that a triple $\zug{q, \sigma, s} \in Q \times \Sigma \times Q$ is an {\em allowed transition\/} in $\A$ if there is a state $s' \in Q$ such that $s \sim s'$ and $\zug{q, \sigma, s'}\in \Delta$.  Thus, $\zug{q, \sigma, s}$ is allowed if there is a state $s'$ equivalent to $s$ such that $s' \in \delta(q,\sigma)$. We now define two types of $\alpha$-maximality:
\begin{itemize}
\item
We say that $\A$ is {\em $\alpha$-maximal\/} if all allowed transitions in $Q \times \Sigma \times Q$ are in $\Delta$.
\item
We say that $\A$ is {\em $\alpha$-maximal up to homogeneity\/} if $\A$ is $\alpha$-homogenous, and for every state $q \in Q$ and letter $\sigma \in \Sigma$, if $q$ has no outgoing $\sigma$-labeled $\bar{\alpha}$-transitions, then all allowed transitions in $\{q\} \times \{\sigma\} \times Q$ are in $\Delta$. 
\end{itemize} 

Thus, $\alpha$-maximal automata include all allowed transitions, and $\alpha$-maximal up to homogeneity automata include all allowed transitions as long as their inclusion does not conflict with $\alpha$-homogeneity.

\begin{example}
\label{not iso}
	{\rm 
		Recall the minimal GFG-tNCW $\B_{\S}$ appearing in Figure~ \ref{B figure}. 
		The GFG-tNCWs $\C_1$ and $\C_2$ in Figure~\ref{safe iso min} are obtained from $\B_\S$ by removing a $c$-labeled $\alpha$-transition from $q_0$. This does not change the language and result in two minimal equivalent GFG-tNCWs that are safe isomorphic yet
		are not $\alpha$-maximal nor $\alpha$-maximal up to homogeneity. \hfill \qed}
\end{example}
		\begin{figure}[htb]	
			\begin{center}
			\vspace{-4mm}
				\includegraphics[width=.7\textwidth]{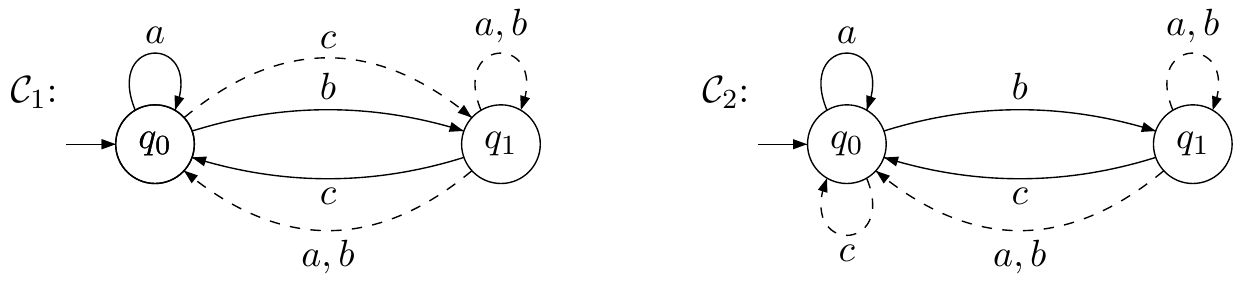}
				\captionof{figure}{Two safe-isomorphic yet not isomorphic minimal equivalent GFG-tNCWs.}
				\label{safe iso min}	
			\end{center}			
		\end{figure}

We now see that both types of $\alpha$-maximality guarantee isomorphism. 

\begin{theorem}
	\label{max imp iso}
	Every two equivalent, nice, minimal, and $\alpha$-maximal GFG-tNCWs are isomorphic.
\end{theorem}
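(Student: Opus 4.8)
The plan is to reuse the bijection provided by Theorem~\ref{safe isomorphic} and show that $\alpha$-maximality forces it to respect $\alpha$-transitions as well. So let $\A$ and $\B$ be equivalent, nice, minimal, and $\alpha$-maximal GFG-tNCWs, and let $\kappa : Q_\A \to Q_\B$ be the bijection built in the proof of Theorem~\ref{safe isomorphic}: it is $\bar{\alpha}$-transition respecting and satisfies $q \approx \kappa(q)$ for every state $q$ of $\A$, so in particular $L(\A^q) = L(\B^{\kappa(q)})$. Since $\A$ and $\B$ are already safe isomorphic via $\kappa$, it suffices to prove that $\kappa$ is also $\alpha$-transition respecting; then $\A$ and $\B$ are isomorphic by definition.

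First I would establish the key claim: $\kappa$ induces a bijection on the full transition relation, i.e., $\zug{q, \sigma, q'} \in \Delta_\A$ iff $\zug{\kappa(q), \sigma, \kappa(q')} \in \Delta_\B$. For the forward direction, assume $\zug{q, \sigma, q'} \in \Delta_\A$. Since $\B$ is total, $\kappa(q)$ has some $\sigma$-successor $s'$ in $\B$. Because $\A$ and $\B$ are semantically deterministic and a run's first transition does not influence its co-B\"uchi acceptance, the language of any $\sigma$-successor of a state depends only on that state: it equals the derivative $\sigma^{-1}L = \{ w' : \sigma \cdot w' \in L \}$ of the language $L$ of the source state. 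Hence $L(\B^{s'}) = \sigma^{-1} L(\B^{\kappa(q)}) = \sigma^{-1} L(\A^q) = L(\A^{q'}) = L(\B^{\kappa(q')})$, so $s' \sim_\B \kappa(q')$. Thus $\zug{\kappa(q), \sigma, \kappa(q')}$ is an allowed transition of $\B$, and since $\B$ is $\alpha$-maximal it lies in $\Delta_\B$. The converse direction is symmetric, applied to $\kappa^{-1}$ (which is $\bar{\alpha}$-transition respecting since $\kappa$ is) and using the $\alpha$-maximality of $\A$. Alternatively, the derivative computation can be replaced by an application of Proposition~\ref{pruned-corollary} to the disjoint union of $\A$ and $\B$, in which $q \sim \kappa(q)$.

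Given the claim, $\alpha$-transition respecting follows by bookkeeping. Fix $\zug{q, \sigma, q'} \in \Delta_\A$. By the claim, $\zug{\kappa(q), \sigma, \kappa(q')} \in \Delta_\B$; and since $\kappa$ is $\bar{\alpha}$-transition respecting, $\zug{q, \sigma, q'} \in \bar{\alpha}_\A$ iff $\zug{\kappa(q), \sigma, \kappa(q')} \in \bar{\alpha}_\B$. As $\alpha$ and $\bar{\alpha}$ partition $\Delta$ in each automaton, we get $\zug{q, \sigma, q'} \in \alpha_\A$ iff $\zug{\kappa(q), \sigma, \kappa(q')} \in \alpha_\B$, which is exactly the $\alpha$-transition respecting condition. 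Hence $\kappa$ is both $\alpha$- and $\bar{\alpha}$-transition respecting, so $\A$ and $\B$ are isomorphic.

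The step I expect to be the crux is the forward direction of the key claim: transporting an arbitrary transition of $\A$ to $\B$ and checking that its target lands on a state equivalent to $\kappa(q')$, so that $\alpha$-maximality of $\B$ can actually supply the transition $\zug{\kappa(q),\sigma,\kappa(q')}$. This hinges on semantic determinism together with the co-B\"uchi-specific fact that one initial transition is acceptance-irrelevant; once the transition relations are shown to correspond under $\kappa$, combining this with the already-known $\bar{\alpha}$-correspondence is immediate.
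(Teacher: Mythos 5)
Your proposal is correct and follows essentially the same route as the paper: it reuses the $\bar{\alpha}$-transition-respecting bijection $\kappa$ with $q \approx \kappa(q)$ from Theorem~\ref{safe isomorphic}, and uses totality plus semantic determinism (your derivative computation is just Proposition~\ref{pruned-corollary} applied across the two automata, as you note) together with transitivity of $\sim$ to show the image triple is allowed, so that $\alpha$-maximality supplies it as a transition, which must then be an $\alpha$-transition since it cannot be an $\bar{\alpha}$-transition. The only difference is presentational: you first establish the correspondence on the full transition relation and then do the $\alpha$/$\bar{\alpha}$ bookkeeping, whereas the paper argues directly on $\alpha$-transitions.
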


\begin{proof}	
		Consider two equivalent, nice, minimal, and $\alpha$-maximal GFG-tNCWs $\C_1$ and $\C_2$. By Theorem~\ref{safe isomorphic}, we have that $\C_1$ and $\C_2$ are safe isomorphic. Thus, there is a bijection $\kappa: Q_{\C_1} \to Q_{\C_2}$ that is $\bar{\alpha}$-transition respecting. The bijection $\kappa$ was defined such that $q \approx \kappa(q)$, for every state $q\in Q_{\C_1}$. We show that $\kappa$ is also $\alpha$-transition respecting. Let $\zug{q, \sigma, s}$ be an $\alpha$-transition of $\C_1$. Then, as $\kappa$ is $\bar{\alpha}$-transition respecting, and $\zug{q, \sigma, s}$ is not an $\bar{\alpha}$-transition in $\C_1$, the triple $\zug{\kappa(q), \sigma, \kappa(s)}$ cannot be an $\bar{\alpha}$-transition in $\C_2$. We show that $\zug{\kappa(q), \sigma, \kappa(s)}$ is a transition in $\C_2$, and thus it has to be an $\alpha$-transition. As $\C_2$ is nice, in particular total, there is a transition $ \zug{\kappa(q), \sigma, s'}$ in $\C_2$. As $q \sim \kappa(q)$ and both automata are nice, in particular, symantically deterministic, Proposition \ref{pruned-corollary} then implies that $s \sim s'$. Now since $s \sim \kappa(s)$, we get by the transitivity of $\sim$ that $s' \sim \kappa(s)$. Therefore, the existence of the transition $\zug{\kappa(q), \sigma, s'}$ in $\C_2$, implies that the transition $\zug{\kappa(q), \sigma, \kappa(s)} $ is an allowed transition, and so $\alpha$-maximality of $\C_2$ implies that it is also a transition in $\C_2$.  Likewise, if $\zug{\kappa(q), \sigma, \kappa(s)}$ is an $\alpha$-transition in $C_2$, then $\zug{q, \sigma, s}$ is an $\alpha$-transition in $C_1$, and so we are done. 
\end{proof}

\begin{theorem}
	\label{max hom imp iso}
	Every two equivalent, nice, minimal, and $\alpha$-maximal up to homogeneity GFG-tNCWs are isomorphic.
\end{theorem}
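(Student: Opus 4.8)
The plan is to follow the same skeleton as the proof of Theorem~\ref{max imp iso}, starting from the safe-isomorphism bijection $\kappa: Q_{\C_1} \to Q_{\C_2}$ supplied by Theorem~\ref{safe isomorphic}, which satisfies $q \approx \kappa(q)$ for every state $q$ and is $\bar\alpha$-transition respecting. It remains to show that $\kappa$ is also $\alpha$-transition respecting, and by symmetry it suffices to prove one direction: if $\zug{q,\sigma,s}$ is an $\alpha$-transition of $\C_1$, then $\zug{\kappa(q),\sigma,\kappa(s)}$ is an $\alpha$-transition of $\C_2$. First I would observe, exactly as in the $\alpha$-maximal case, that $\zug{\kappa(q),\sigma,\kappa(s)}$ cannot be an $\bar\alpha$-transition of $\C_2$: since $\kappa$ is $\bar\alpha$-transition respecting, an $\bar\alpha$-transition $\zug{\kappa(q),\sigma,\kappa(s)}$ would force $\zug{q,\sigma,s}$ to be an $\bar\alpha$-transition of $\C_1$, contradicting the assumption. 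So the only thing to establish is that $\zug{\kappa(q),\sigma,\kappa(s)}$ is present at all in $\C_2$ — then it is automatically an $\alpha$-transition.

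The key new ingredient, replacing $\alpha$-maximality, is $\alpha$-homogeneity of both automata together with the ``up to homogeneity'' saturation clause. Here is where the argument diverges from Theorem~\ref{max imp iso}. Since $\zug{q,\sigma,s}$ is an $\alpha$-transition of $\C_1$ and $\C_1$ is $\alpha$-homogenous, $q$ has \emph{no} outgoing $\sigma$-labeled $\bar\alpha$-transition in $\C_1$. I claim $\kappa(q)$ likewise has no outgoing $\sigma$-labeled $\bar\alpha$-transition in $\C_2$: if it did, say $\zug{\kappa(q),\sigma,t}\in\bar\alpha_{\C_2}$, then since $\kappa$ is $\bar\alpha$-transition respecting (in the reverse direction) $\zug{q,\sigma,\kappa^{-1}(t)}$ would be an $\bar\alpha$-transition of $\C_1$ — contradiction. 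So $\kappa(q)$ has no $\sigma$-labeled $\bar\alpha$-transition in $\C_2$, which is precisely the hypothesis under which the $\alpha$-maximality-up-to-homogeneity clause of $\C_2$ guarantees that \emph{all allowed transitions in $\{\kappa(q)\}\times\{\sigma\}\times Q_{\C_2}$ are present}. It therefore suffices to check that $\zug{\kappa(q),\sigma,\kappa(s)}$ is allowed, i.e.\ that there is a state $s'$ with $s'\sim_{\C_2}\kappa(s)$ and $s'\in\delta_{\C_2}(\kappa(q),\sigma)$. This is done exactly as in Theorem~\ref{max imp iso}: totality of $\C_2$ gives some transition $\zug{\kappa(q),\sigma,s'}$; since $q\sim\kappa(q)$ and both automata are semantically deterministic, Proposition~\ref{pruned-corollary} applied to $\zug{q,\sigma,s}$ and $\zug{\kappa(q),\sigma,s'}$ yields $s\sim s'$; combining with $s\sim\kappa(s)$ and transitivity of $\sim$ gives $s'\sim\kappa(s)$, so $\zug{\kappa(q),\sigma,\kappa(s)}$ is allowed and hence a (necessarily $\alpha$-) transition of $\C_2$.

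The direction from $\C_2$ to $\C_1$ is entirely symmetric, so $\kappa$ is both $\alpha$- and $\bar\alpha$-transition respecting, i.e.\ $\C_1$ and $\C_2$ are isomorphic. The main obstacle — and the only place where genuine care is needed beyond copying the previous proof — is the transfer of the ``no $\sigma$-labeled $\bar\alpha$-transition from $q$'' condition across $\kappa$, so that the conditional saturation clause of $\C_2$ can be invoked; this relies on $\kappa$ being $\bar\alpha$-transition respecting in \emph{both} directions (which it is, being a bijection inducing a bijection on $\bar\alpha$-transitions) and on $\alpha$-homogeneity of $\C_1$ to know $q$ has no such transition in the first place. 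Everything else is a routine repetition of the allowed-transition computation from the proof of Theorem~\ref{max imp iso}.
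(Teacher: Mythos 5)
Your proposal is correct and follows essentially the same route as the paper: reuse the safe-isomorphism bijection $\kappa$ from Theorem~\ref{safe isomorphic}, repeat the allowed-transition argument of Theorem~\ref{max imp iso}, and add the single new step showing that $\kappa(q)$ has no outgoing $\sigma$-labeled $\bar{\alpha}$-transition in $\C_2$, so that the conditional saturation clause of $\C_2$ applies. The only (harmless) difference is how that step is justified: you use that $\kappa$ is $\bar{\alpha}$-transition respecting in both directions, whereas the paper derives the same contradiction from $q \approx \kappa(q)$ via Proposition~\ref{equiv-to-equiv}; both arguments are valid.
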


\stam{
\begin{proof}	
		Consider two equivalent, nice, minimal, and $\alpha$-maximal GFG-tNCWs $\C_1$ and $\C_2$. By Theorem~\ref{safe isomorphic}, we have that $\C_1$ and $\C_2$ are safe isomorphic. Thus, there is a bijection $\kappa: Q_{\C_1} \to Q_{\C_2}$ that is $\bar{\alpha}$-transition respecting. The bijection $\kappa$ was defined such that $q \approx \kappa(q)$, for every state $q\in Q_{\C_1}$. We show that $\kappa$ is also $\alpha$-transition respecting. Let $\zug{q, \sigma, s}$ be an $\alpha$-transition of $\C_1$. Then, as $\kappa $ is $\bar{\alpha}$-transition respecting, the triple $\zug{\kappa(q), \sigma, \kappa(s)}$ cannot be an $\bar{\alpha}$-transition in $\C_2$. 
		To begin with, we show that $\kappa(q)$ has no outgoing $\sigma$-labeled $\bar{\alpha}$-transitions in $\C_2$. Then, we show that the triple $\zug{\kappa(q), \sigma, \kappa(s)}$ is an allowed transition, and so by the $\alpha$-maximality of $\C_2$, it has to be a transition in $\C_2$, and  thus an $\alpha$-transition. 
		So assume contrarily that there is an $\bar{\alpha}$-transition $\zug{\kappa(q), \sigma, s'}$ in $\C_2$, then as $q \approx \kappa(q)$, Proposition~\ref{equiv-to-equiv} implies that $q$ has an outgoing $\sigma$-labeled $\bar{\alpha}$-transition in $\C_1$,  contradicting the fact that $\C_1$ is $\alpha$-homogenous. 
		Next, as $\C_2$ is nice, in particular, total, there is a transition $\zug{\kappa(q), \sigma, s'}$ in $\C_2$. As $q \sim \kappa(q)$ and both automata are nice, in particular, symantically deterministic, Proposition \ref{pruned-corollary} then implies that $s \sim s'$. Now since $s \sim \kappa(s)$, we get by the transitivity of $\sim$ that $s' \sim \kappa(s)$. Therefore, the existnce of the transition $\zug{\kappa(q), \sigma, s'}$ in $\C_2$, implies that the transition $\zug{\kappa(q), \sigma, \kappa(s)} $ is an allowed transition. Likewise, if $\zug{\kappa(q), \sigma, \kappa(s)}$ is an $\alpha$-transition of $\C_2$, then $\zug{q, \sigma, s}$ is an $\alpha$-transition of $C_1$, and so we are done.	
\end{proof}
}

\begin{proof}
The proof is identical to that of 	Theorem~\ref{max imp iso}, except that we also have to prove that $\kappa(q)$ has no outgoing $\sigma$-labeled $\bar{\alpha}$-transitions in $\C_2$. To see this, assume by way of contradiction that there is an $\bar{\alpha}$-transition $\zug{\kappa(q), \sigma, s'}$ in $\C_2$. Then, as $q \approx \kappa(q)$, Proposition~\ref{equiv-to-equiv} implies that $q$ has an outgoing $\sigma$-labeled $\bar{\alpha}$-transition in $\C_1$,  contradicting the fact that $\C_1$ is $\alpha$-homogenous. 
\end{proof}

\section{Obtaining Canonical Minimal GFG-tNCWs}
\label{sec obtain}

In this section we show how the two types of canonical minimal GFG-tNCWs can be obtained in polynomial time. 
We start with $\alpha$-maximality up to homogeneity and show that such an $\alpha$-maximization is performed by the minimization construction of \cite{AK19}. We continue with $\alpha$-maximality, show that adding allowed transitions to a GFG-tNCW does not change its language, and conclude that $\alpha$-maximization can be performed on top of the minimization construction of \cite{AK19}.

\subsection{Obtaining canonical minimal $\alpha$-maximal up to homogeneity GFG-tNCWs}

\begin{theorem}
	\label{prop canonical form}
	Consider a nice GFG-tNCW $\A$, and let $\C$ be the minimal GFG-tNCW produced from $\A$ by the minimization construction of \cite{AK19}. Then, $\C$ is $\alpha$-maximal up to homogeneity.
\end{theorem}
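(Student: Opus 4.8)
The plan is to track what the two-step minimization construction of \cite{AK19} does to $\alpha$-transitions and verify that the output $\C$ satisfies the definition of $\alpha$-maximality up to homogeneity. Recall from Proposition~\ref{n sc sm} that $\C$ is already known to be nice, safe-centralized, safe-minimal, and $\alpha$-homogenous. So the only thing left to check is the second clause in the definition: for every state $[q] \in Q_\C$ and letter $\sigma \in \Sigma$, if $[q]$ has no outgoing $\sigma$-labeled $\bar{\alpha}$-transition in $\C$, then every allowed transition in $\{[q]\} \times \{\sigma\} \times Q_\C$ is actually a transition of $\C$.

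First I would go back to Step~1 and recall how $\B_\S$ is built: for a state $q$ with $\delta^{\bar\alpha}_\A(q,\sigma) = \emptyset$, we set $\delta^\alpha_\S(q,\sigma) = \{q' \in Q_\S : q' \sim_\A q'' \text{ for some } q'' \in \delta^\alpha_\A(q,\sigma)\}$ — that is, $\B_\S$ already puts in \emph{all} $\sigma$-labeled $\alpha$-transitions from $q$ to states in $Q_\S$ that are language-equivalent to an existing $\sigma$-successor of $q$ in $\A$. Since $\A$ is total, $\delta^\alpha_\A(q,\sigma) \neq \emptyset$ in that case, and by semantic determinism all its elements are equivalent; hence the target set is exactly $\{q' \in Q_\S : q' \sim_\A q''\}$ for the unique equivalence class $[q'']_\sim$ of $\sigma$-successors. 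So $\B_\S$ is ``$\alpha$-maximal up to homogeneity relative to $Q_\S$'': among states surviving into $\B_\S$, it has every allowed $\alpha$-transition wherever homogeneity permits one. The key step is then to lift this through Step~2, the quotient by $\approx$. Here I would argue: suppose $[q]$ has no $\sigma$-labeled $\bar\alpha$-transition in $\C$ and $\zug{[q],\sigma,[p]}$ is an allowed transition of $\C$, i.e.\ there is $[p'] \in Q_\C$ with $[p'] \sim_\C [p]$ and $\zug{[q],\sigma,[p']} \in \Delta_\C$. Unwinding the quotient, $\zug{[q],\sigma,[p']} \in \Delta_\C$ gives representatives $q_1 \in [q]$, $p'_1 \in [p']$ with $\zug{q_1,\sigma,p'_1} \in \Delta_\B$; since $[q]$ has no $\sigma$-labeled $\bar\alpha$-transition in $\C$ and $\B$ is $\alpha$-homogenous (plus Proposition~\ref{equiv-to-equiv} to move $\bar\alpha$-transitions between $\approx$-equivalent states), no state in $[q]$ has a $\sigma$-labeled $\bar\alpha$-transition in $\B$, so $q_1$ has no $\sigma$-labeled $\bar\alpha$-transition, i.e.\ $\delta^{\bar\alpha}_\B(q_1,\sigma) = \emptyset$. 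Now I invoke the Step~1 structure: $\B = \B_\S$ was built to contain \emph{every} allowed $\alpha$-transition out of $q_1$ on $\sigma$; in particular, picking any state $p_1 \in [p]$ (note $[p] \sim_\C [p']$ lifts to $p_1 \sim_\B p'_1$ via Proposition~\ref{final equivalent} / the fact that $\sim$ on $\C$ reflects $\sim$ on $\B$), the triple $\zug{q_1,\sigma,p_1}$ is allowed in $\B$ (since $p_1 \sim_\B p'_1$ and $p'_1 \in \delta_\B(q_1,\sigma)$), hence $\zug{q_1,\sigma,p_1} \in \Delta_\B$, and therefore $\zug{[q],\sigma,[p]} \in \Delta_\C$ by definition of the quotient transitions.

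The main obstacle — and the step deserving the most care — is making precise that $\B_\S$ genuinely contains \emph{all} allowed $\alpha$-transitions among its states, not merely the ones named in the construction, and that this property is inherited by the quotient. The subtlety is that ``allowed in $\C$'' is phrased via $\sim_\C$, while the construction of $\B_\S$ is phrased via $\sim_\A$; I need the dictionary $[q] \sim_\C [p] \iff q \sim_\A p$ (for states surviving into $\S$), which follows from Propositions~\ref{A and B are equivalent} and~\ref{final equivalent} since both transitions preserve the language of each state. One also has to be slightly careful that the representative $p_1 \in [p]$ we want to hit actually lies in $Q_\S$ — but that is automatic, since $[p] \in Q_\C$ means $[p] = [p_1]$ for some $p_1 \in Q_\S$. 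Given these bookkeeping facts, the argument above closes: $\C$ has no allowed transition outside $\Delta_\C$ except where $\alpha$-homogeneity forbids it, which is exactly $\alpha$-maximality up to homogeneity.
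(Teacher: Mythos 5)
Your proposal is correct and follows essentially the same route as the paper's proof: first show that $\B_\S$ from Step~1 is $\alpha$-maximal up to homogeneity using the explicit form of $\delta^\alpha_\S(q,\sigma)$ together with the dictionary between $\sim_{\B_\S}$ and $\sim_\A$ (Proposition~\ref{A and B are equivalent}), and then lift the property through the quotient of Step~2 via Proposition~\ref{final equivalent} and the $\alpha$-homogeneity of $\B_\S$. The bookkeeping points you flag (translating ``allowed'' between the $\sim$-relations of $\A$, $\B_\S$, and $\C$, and choosing representatives in $Q_\S$) are exactly the ones the paper handles, and you handle them the same way.
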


\begin{proof}
	Consider the minimization construction of \cite{AK19}. We first show that the safe-centralized GFG-tNCW $\B_{\S}$, defined in Step~1, is $\alpha$-maximal up to homogeneity. Then, we show that $\alpha$-maximality up to homogeneity is maintained in the transition to the GFG-tNCW $\C$, defined in Step~2.
	By Theorem~\ref{n sc h}, we know that $\B_\S$ is $\alpha$-homogenous. Assume that $q$ is a state in $\B_\S$ with no outgoing $\sigma$-labeled $\bar{\alpha}$-transitions, and assume that $\zug{q, \sigma, s}$ is an allowed transition. We need to show that $\zug{q, \sigma, s}$ is a transition in $\B_\S$.
	As $\zug{q, \sigma, s}$ is an allowed transition, there is a transition $\zug{q, \sigma, s'}$ in $\B_\S$ with $s \sim_{\B_\S} s'$, and by the assumption, $\zug{q, \sigma, s'}$ has to be an $\alpha$-transition.
	By the definition of the transition function of $\B_\S$, we have that $s' \sim_\A q'$ for some state $q' \in \delta^{\alpha}_\A(q, \sigma)$. As $\A$ is semantically deterministic, we get that the state $s'$ is $\A$-equivalent to every state in $\delta^{\alpha}_\A(q, \sigma)$. So again, by the definition of the transition function of $\B_\S$, we can write $\delta^{\alpha}_\S(q, \sigma) = \{p \in Q_\S: p\sim_\A q'\}$. 
	Now, as $s \sim_{\B_\S} s'$, Proposition \ref{A and B are equivalent} implies that $L(\A^{s}) = L(\B^{s}_\S) = L(\B^{s'}_\S) = L(\A^{s'})$; that is, $s\sim_\A s'$, and since $s'\sim_\A q'$, we get by the transitivity of $\sim_\A$ that $s \sim_\A q'$, and so $\zug{q, \sigma, s}$ is a transition in $\B_\S$.
	
	We show next that the GFG-tNCW $\C$ is $\alpha$-maximal up to homogeneity. By Theorem~\ref{n sc sm}, we have that $\C$ is $\alpha$-homogenous. 
	Assume that $[q]$ is a state in $\C$ with no outgoing $\sigma$-labeled $\bar{\alpha}$-transitions, and assume that $\zug{[q], \sigma, [s]}$ is an allowed transition. We need to show that $\zug{[q], \sigma, [s]}$ is a transition in $\C$.
	As $\zug{[q], \sigma, [s]}$ is an allowed transition, there is a transition $\zug{[q], \sigma, [s']}$ in $\C$ with $[s] \sim [s']$.
	Thus, by Proposition \ref{final equivalent}, we have that $L(\B^{s}_\S) = L(\C^{[s]}) = L(\C^{[s']}) = L(\B^{s'}_\S)$; that is, $s'\sim_{\B_\S} s$.  
	By the assumption, $\zug{[q], \sigma, [s']}$ has to be an $\alpha$-transition. Therefore, by the definition of $\C$, there are states $q''\in [q]$ and $s''\in [s']$, such that $\langle q'', \sigma, s''\rangle$ is an $\alpha$-transition in $\B_\S$. Now, by transitivity of $\sim_{\B_\S}$ and the fact that $s'' \sim_{\B_\S} s'$, we get that $s''\sim_{\B_\S} s$. Finally, as $\B_\S$ is $\alpha$-homogenous, we get that $q''$ has no outgoing $\sigma$-labeled $\bar{\alpha}$-transitions in $\B_\S$, and so by the $\alpha$-maximality up to homogeneity of $\B_\S$, we have that $\langle q'', \sigma, s\rangle$ is a transition in $\B_\S$. Therefore, by the definition of $\C$,  we have that $\langle  [q], \sigma, [s]\rangle$ is a transition in $\C$, and we are done.
\end{proof}

We can thus conclude with the following.

\begin{theorem}
	Every GFG-tNCW $\A$ can be canonized into a nice minimal $\alpha$-maximal up to homogeneity GFG-tNCW in polynomial time.
\end{theorem}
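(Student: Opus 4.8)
The plan is to assemble the pieces already established, observing that the desired canonization is exactly the output of the minimization construction of \cite{AK19} preceded by a niceification step. Given an arbitrary GFG-tNCW $\A$, the first step is to invoke Theorem~\ref{nice} to produce, in polynomial time, an equivalent nice GFG-tNCW $\B$ with $|\B|\leq|\A|$. The second step is to apply the minimization construction of \cite{AK19} to $\B$ — namely Step~1 (safe centralization, yielding $\B_{\S}$ for a frontier $\S$) followed by Step~2 (safe minimization, yielding $\C$), as reviewed in Section~\ref{sec ak19}. This construction runs in polynomial time, and by Propositions~\ref{n sc h} and~\ref{n sc sm} its output $\C$ is a nice, safe-centralized, safe-minimal, and $\alpha$-homogenous GFG-tNCW equivalent to $\A$; by Theorem~\ref{C is minimal} it is moreover a minimal GFG-tNCW for $L(\A)$.

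It then remains to note that $\C$ is $\alpha$-maximal up to homogeneity and that this $\C$ is canonical. The former is precisely the content of Theorem~\ref{prop canonical form}, applied with $\B$ playing the role of the ``nice GFG-tNCW $\A$'' in that statement. For canonicity, suppose $\A_1$ and $\A_2$ are two GFG-tNCWs with $L(\A_1)=L(\A_2)$, and let $\C_1$ and $\C_2$ be the automata obtained by running the above procedure on $\A_1$ and $\A_2$, respectively. Then $\C_1$ and $\C_2$ are equivalent (both recognize $L(\A_1)$), nice, minimal, and $\alpha$-maximal up to homogeneity, so by Theorem~\ref{max hom imp iso} they are isomorphic. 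Hence the procedure computes a representative of an isomorphism class that depends only on $L(\A)$, which is exactly the sense in which $\A$ is ``canonized''.

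Since each step — obtaining a nice equivalent automaton, computing $H$ and a frontier $\S$ and the automaton $\B_{\S}$, and forming the quotient $\C$ — is polynomial, the entire procedure is polynomial, completing the proof. There is no genuine obstacle here: the only points requiring care are to confirm that the minimization construction of \cite{AK19} runs in polynomial time (which is part of the statement of that work) and that the hypotheses of Theorems~\ref{prop canonical form} and~\ref{max hom imp iso} are met by the intermediate automata $\B$ and $\C$, both of which follow directly from the propositions cited above.
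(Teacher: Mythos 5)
Your proposal is correct and follows essentially the same route the paper takes: the paper derives this theorem as an immediate corollary of Theorem~\ref{prop canonical form} (the output of the minimization construction of \cite{AK19} is $\alpha$-maximal up to homogeneity), combined with Theorem~\ref{nice} for niceification, the polynomial-time bound on the construction, and Theorem~\ref{max hom imp iso} for canonicity. Your write-up simply makes explicit the chain of citations that the paper leaves implicit in its ``We can thus conclude'' remark.
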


\subsection{Obtaining canonical minimal $\alpha$-maximal GFG-tNCWs}

Consider a nice GFG-tNCW $\A = \zug{\Sigma, Q, q_0, \delta, \alpha}$. We say that a set of triples  $\E \subseteq Q\times \Sigma \times Q$ is an \emph{allowed set} if all the triples in it are allowed transitions in $\A$.  
For every set $\E \subseteq Q\times \Sigma \times Q$, we define the tNCW $\A_\E = \zug{\Sigma, Q_, q_0, \delta_\E, \alpha_\E}$, where $\Delta_\E = \Delta \cup \E$ and $\alpha_\E = \alpha \cup \E$. Clearly, as $\A$ and $\A_\E$ have the same set of states and the same set of $\bar{\alpha}$-transitions, they are safe equivalent. 

\stam{

We first extend Proposition~\ref{pruned-corollary} to the setting of $\A$ and $\A_\E$:

\begin{proposition}\label{pruned-corollaryC}
	Consider states $q$ and $s$ of $\A$ and $\A_\E$, respectively, a letter $\sigma \in \Sigma$, and transitions 
	$\zug{q, \sigma, q'}$ and $\langle s, \sigma, s'\rangle$ of $\A$ and $\A_\E$, respectively. 
	If $q \sim_{\A} s$, then $q' \sim_{\A} s'$.
\end{proposition}

\begin{proof}
	If $\langle s, \sigma, s'\rangle \notin \E$, then, by the definition of $\Delta_{\E}$, it is also a transition of $\A$. Hence, since $q \sim_{\A} s$ and $\A$ is nice, in particular, semantically deterministic, Proposition~\ref{pruned-corollary} implies  that $q' \sim_{\A} s'$. 
	If $\langle s, \sigma, s'\rangle \in \E$, then, by the definition of $\Delta_\E$, it is an allowed transition of $\A$.  Therefore, there is a state $p'\in Q$ such that $s' \sim_\A p'$ and $\zug{s, \sigma, p'}\in \Delta$. As $q\sim_\A s$ and $\A$ is semantically deterministic, Proposition \ref{pruned-corollary}  implies that $q'\sim_{\A} p'$. Therefore, using the fact that $p'\sim_\A s'$, the transitivity of $\sim_\A$ implies that $q' \sim_\A s'$, and so we are done.
\end{proof}

\begin{proposition}\label{C and CE are equivalent}
	Let $p$ and $s$ be states of $\A$ and $\A_\E$, respectively, with $p \sim_{\A} s$. Then, $\A^s_\E$ is a GFG-tNCW equivalent to $\A^p$.
\end{proposition}

\begin{proof}
	
	We first prove that $L(\A^s_\E) \subseteq L(\A^p)$. Consider a word $w=\sigma_1\sigma_2\ldots \in L(\A^s_\E)$, and let $s_0,s_1,s_2,\ldots $ be an accepting run of $\A^s_\E$ on $w$. Then, there is $i\geq 0$ such that $s_i,s_{i+1},\ldots $ is a safe run of $\A^{s_i}_\E$ on the suffix $w[i+1, \infty]$. Let $p_0,p_1,\ldots p_i$ be a run of $\A^p$ on the prefix $w[1, i]$. Since $p_0 \sim_{\A} s_0$, we get, by an iterative application of Proposition~\ref{pruned-corollaryC}, that $p_i \sim_{\A} s_i$. In addition, as the run of $\A^{s_i}_\E$ on the suffix $w[i+1, \infty]$ is safe, it is also a safe run of $\A^{s_i}$. Hence, $w[i+1, \infty] \in L(\A^{p_i})$, and thus $p_0,p_1,\ldots, p_i$ can be extended to an accepting run of $\A^p$ on $w$. 
	
	Next, as $\A$ is nice, all of its states are GFG, in particular, there is a strategy $f^s$ witnessing $\A^s$'s GFGness. Recall that $\A$ is embodied in $\A_\E$. Therefore, every run in $\A$ exists also in $\A_\E$. Thus, as $p \sim_\A s$, we get that for every word $w\in L(\A^p)$, the run $f^s(w)$ is an accepting run of $\A^s$ on $w$, and thus is also an accepting run of $\A^s_\E$ on $w$. Hence, $L(\A^p) \subseteq L(\A^s_\E)$  and $f^s$ witnesses $\A^s_\E$'s GFGness.	
\end{proof}

\begin{proposition}
\label{ae is nice}
	For every allowed set $\E$, the GFG-tNCW $\A_\E$ is nice.
\end{proposition}

\begin{proof}	
	It is easy to see that the fact $\A$ is nice implies that $\A_\E$ is normal and safe deterministic. Also, as $\A$ is embodied in $\A_\E$ and both automata have the same state-space and initial states, then all the states in $\A_\E$ are reachable.
	Finally, Proposition~\ref{C and CE are equivalent} implies that all the states in $\A_\E$ are GFG. To conclude that $\A_\E$ is nice, we prove below that it is semantically deterministic.  Consider transitions $\langle q, \sigma, s_1\rangle$ and $\langle q, \sigma, s_2\rangle$ in $\Delta_{\E}$. We need to show that $s_1 \sim_{\A_\E} s_2$. By the definition of $\Delta_\E$, 	there are transitions $\langle q, \sigma, s'_1\rangle$ and $\langle q, \sigma, s'_2\rangle$ in $\Delta$ for states $ s'_1$ and $s'_2$ such that $s_1 \sim_\A s'_1$ and $s_2 \sim_\A s'_2$. As $\A$ is nice, in particular, semantically deterministic, we have that $s'_1 \sim_\A s'_2$. Hence, as $s_1 \sim_\A s'_1$ and $s'_2 \sim_\A s_2$, we get by the transitivity of $\sim_\A$ that $s_1 \sim_\A s_2$. Then, Proposition~\ref{C and CE are equivalent} implies that $L(\A^{s_1}) = L(\A^{s_1}_{\E})$ and $L(\A^{s_2}) = L(\A^{s_2}_{\E})$, and so we get that $s_1 \sim_{\A_{\E}} s_2$. Thus, $\A_{\E}$ is semantically deterministic. 
\end{proof}
}

In Propositions~\ref{C and CE are equivalent} and~\ref{ae is nice} below, we prove that for every allowed set $\E$, we have that $\A_{\E}$ is a nice GFG-tNCW equivalent to $\A$. 
We first extend Proposition~\ref{pruned-corollary} to the setting of $\A$ and $\A_\E$:

\begin{proposition}\label{pruned-corollaryC}
	Consider states $q$ and $s$ of $\A$ and $\A_\E$, respectively, a letter $\sigma \in \Sigma$, and transitions 
	$\zug{q, \sigma, q'}$ and $\langle s, \sigma, s'\rangle$ of $\A$ and $\A_\E$, respectively. 
	If $q \sim_{\A} s$, then $q' \sim_{\A} s'$.
\end{proposition}

\begin{proof}
	If $\langle s, \sigma, s'\rangle \notin \E$, then, by the definition of $\Delta_{\E}$, it is also a transition of $\A$. Hence, since $q \sim_{\A} s$ and $\A$ is nice, in particular, semantically deterministic, Proposition~\ref{pruned-corollary} implies  that $q' \sim_{\A} s'$. 
	If $\langle s, \sigma, s'\rangle \in \E$, then, by the definition of $\Delta_\E$, it is an allowed transition of $\A$.  Therefore, there is a state $p'\in Q$ such that $s' \sim_\A p'$ and $\zug{s, \sigma, p'}\in \Delta$. As $q\sim_\A s$ and $\A$ is semantically deterministic, Proposition \ref{pruned-corollary}  implies that $q'\sim_{\A} p'$. Therefore, using the fact that $p'\sim_\A s'$, the transitivity of $\sim_\A$ implies that $q' \sim_\A s'$, and so we are done.
\end{proof}

\begin{proposition}\label{C and CE are equivalent}
	Let $p$ and $s$ be states of $\A$ and $\A_\E$, respectively, with $p \sim_{\A} s$. Then, $\A^s_\E$ is a GFG-tNCW equivalent to $\A^p$.
\end{proposition}

\begin{proof}
	We first prove that $L(\A^s_\E) \subseteq L(\A^p)$. Consider a word $w=\sigma_1\sigma_2\ldots \in L(\A^s_\E)$, and let $s_0,s_1,s_2,\ldots $ be an accepting run of $\A^s_\E$ on $w$. Then, there is $i\geq 0$ such that $s_i,s_{i+1},\ldots $ is a safe run of $\A^{s_i}_\E$ on the suffix $w[i+1, \infty]$. Let $p_0,p_1,\ldots p_i$ be a run of $\A^p$ on the prefix $w[1, i]$. Since $p_0 \sim_{\A} s_0$, we get, by an iterative application of Proposition~\ref{pruned-corollaryC}, that $p_i \sim_{\A} s_i$. In addition, as the run of $\A^{s_i}_\E$ on the suffix $w[i+1, \infty]$ is safe, it is also a safe run of $\A^{s_i}$. Hence, $w[i+1, \infty] \in L(\A^{p_i})$, and thus $p_0,p_1,\ldots, p_i$ can be extended to an accepting run of $\A^p$ on $w$. 
	
	Next, as $\A$ is nice, all of its states are GFG, in particular, there is a strategy $f^s$ witnessing $\A^s$'s GFGness. Recall that $\A$ is embodied in $\A_\E$. Therefore, every run in $\A$ exists also in $\A_\E$. Thus, as $p \sim_\A s$, we get that for every word $w\in L(\A^p)$, the run $f^s(w)$ is an accepting run of $\A^s$ on $w$, and thus is also an accepting run of $\A^s_\E$ on $w$. Hence, $L(\A^p) \subseteq L(\A^s_\E)$  and $f^s$ witnesses $\A^s_\E$'s GFGness.	
	
\end{proof}

\begin{proposition}
\label{ae is nice}
	For every allowed set $\E$, the GFG-tNCW $\A_\E$ is nice.
\end{proposition}

\begin{proof}
		It is easy to see that the fact $\A$ is nice implies that $\A_\E$ is normal and safe deterministic. Also, as $\A$ is embodied in $\A_\E$ and both automata have the same state-space and initial states, then all the states in $\A_\E$ are reachable.
	Finally, Proposition~\ref{C and CE are equivalent} implies that all the states in $\A_\E$ are GFG. To conclude that $\A_\E$ is nice, we prove below that it is semantically deterministic.  Consider transitions $\langle q, \sigma, s_1\rangle$ and $\langle q, \sigma, s_2\rangle$ in $\Delta_{\E}$. We need to show that $s_1 \sim_{\A_\E} s_2$. By the definition of $\Delta_\E$, 	there are transitions $\langle q, \sigma, s'_1\rangle$ and $\langle q, \sigma, s'_2\rangle$ in $\Delta$ for states $ s'_1$ and $s'_2$ such that $s_1 \sim_\A s'_1$ and $s_2 \sim_\A s'_2$. As $\A$ is nice, in particular, semantically deterministic, we have that $s'_1 \sim_\A s'_2$. Hence, as $s_1 \sim_\A s'_1$ and $s'_2 \sim_\A s_2$, we get by the transitivity of $\sim_\A$ that $s_1 \sim_\A s_2$. Then, Proposition~\ref{C and CE are equivalent} implies that $L(\A^{s_1}) = L(\A^{s_1}_{\E})$ and $L(\A^{s_2}) = L(\A^{s_2}_{\E})$, and so we get that $s_1 \sim_{\A_{\E}} s_2$. Thus, $\A_{\E}$ is semantically deterministic. 
\end{proof}

Let $\C$ be a nice minimal GFG-tNCW equivalent to $\A$, and let $\hat{\E}$ be the set of all allowed transitions in $\C$. By Propositions~\ref{C and CE are equivalent} and~\ref{ae is nice}, we have that $\C_{\hat{\E}}$ is a nice minimal GFG-tNCW equivalent to $\A$. Below we argue that it is also $\alpha$-maximal. 

\begin{proposition}
	Let $\C$ be a nice GFG-tNCW, and let $\hat{\E}$ be the set of all allowed transitions in $\C$. Then, $\C_{\hat{\E}}$  is $\alpha$-maximal.
\end{proposition}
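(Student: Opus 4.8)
The plan is to unwind the definition of $\alpha$-maximality and reduce it to a statement purely about allowed transitions in $\C$ itself. First I would note that every transition of $\C$ is trivially an allowed transition of $\C$ (take the witness $s'$ to be $s$), so $\Delta_\C \subseteq \hat{\E}$, and hence $\Delta_{\hat{\E}} = \Delta_\C \cup \hat{\E} = \hat{\E}$. In other words, the transition relation of $\C_{\hat{\E}}$ is exactly the set of allowed transitions of $\C$. The goal therefore becomes: every triple that is an allowed transition of $\C_{\hat{\E}}$ is already an allowed transition of $\C$.

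Next I would establish that $\sim_\C$ and $\sim_{\C_{\hat{\E}}}$ coincide on $Q$. This is immediate from Proposition~\ref{C and CE are equivalent}: applying it with $p = s$ (which trivially satisfies $p \sim_\C s$), we get $L(\C^s_{\hat{\E}}) = L(\C^s)$ for every state $s$, and consequently $s \sim_{\C_{\hat{\E}}} s'$ iff $s \sim_\C s'$. (We already know from Propositions~\ref{C and CE are equivalent} and~\ref{ae is nice}, and the fact that $\hat{\E}$ is an allowed set, that $\C_{\hat{\E}}$ is itself a nice GFG-tNCW, so speaking of its allowed transitions is legitimate.)

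Now take a triple $\zug{q,\sigma,s}$ that is an allowed transition of $\C_{\hat{\E}}$, i.e.\ there is a state $s'$ with $s \sim_{\C_{\hat{\E}}} s'$ and $\zug{q,\sigma,s'} \in \Delta_{\hat{\E}}$. By the previous step, $s \sim_\C s'$, and by the first step, $\zug{q,\sigma,s'}$ is an allowed transition of $\C$, so there is a state $s''$ with $s' \sim_\C s''$ and $\zug{q,\sigma,s''} \in \Delta_\C$. Transitivity of $\sim_\C$ yields $s \sim_\C s''$, so $\zug{q,\sigma,s}$ is an allowed transition of $\C$ witnessed by $s''$; hence $\zug{q,\sigma,s} \in \hat{\E} = \Delta_{\hat{\E}}$. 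Thus every allowed transition of $\C_{\hat{\E}}$ lies in $\Delta_{\hat{\E}}$, which is exactly $\alpha$-maximality of $\C_{\hat{\E}}$.

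I do not expect a genuine obstacle here: the only point that needs care is that passing from $\C$ to $\C_{\hat{\E}}$ does not alter the language of any state, so that the equivalence relation $\sim$ is preserved — and this is precisely what Proposition~\ref{C and CE are equivalent} gives. Everything else is bookkeeping with the definition of ``allowed transition'' together with the transitivity of $\sim_\C$.
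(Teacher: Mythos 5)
Your proof is correct and follows essentially the same route as the paper: take an allowed transition $\zug{q,\sigma,s}$ of $\C_{\hat{\E}}$, use Proposition~\ref{C and CE are equivalent} to transfer $s \sim_{\C_{\hat{\E}}} s'$ to $s \sim_\C s'$, unfold the definition of $\Delta_{\hat{\E}}$ to get a witness $s'' \in \delta_\C(q,\sigma)$ with $s' \sim_\C s''$, and conclude by transitivity that $\zug{q,\sigma,s}$ is allowed in $\C$, hence in $\hat{\E}$. Your preliminary observation that $\Delta_\C \subseteq \hat{\E}$, so $\Delta_{\hat{\E}} = \hat{\E}$, is a harmless tidying of the same case analysis the paper does implicitly.
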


\begin{proof}
	Let $\C=\langle \Sigma, Q, q_0, \delta, \alpha \rangle$, and consider an allowed transition $\zug{q, \sigma, s} \in Q\times \Sigma \times Q$ in $\C_{\hat{\E}}$. We prove that $\zug{q, \sigma, s}$ is an allowed transition also in $\C$. Hence, it is in ${\hat{\E}}$, and thus is a transition in $\C_{\hat{\E}}$. 
	
	By the definition of allowed transitions, there is a state $s'\in Q$ with $s \sim_{\C_{\hat{\E}}} s'$ such that $s' \in \delta_{\hat{\E}}(q, \sigma)$. Proposition~\ref{C and CE are equivalent} implies that $L(\C^s) = L(\C^s_{\hat{\E}}) = L(\C^{s'}_{\hat{\E}}) =  L(\C^{s'})$, and thus $s \sim_\C s'$. Also, by the definition of $\delta_{\hat{\E}}$, there is a state $s''\in Q$ such that $s''\sim_ \C s'$ and $s'' \in \delta(q, \sigma)$. Therefore, as the transitivity of $\sim_\C$ implies that $s\sim_\C s''$, we have that $\zug{q, \sigma, s}$ is also an allowed transition in $\C$, and we are done.
	\end{proof}

Since the relation $\sim$ can be calculated in polynomial time \cite{HKR02,KS15}, and so checking if a triple in $Q \times \Sigma \times Q$ is an allowed transition can be done in polynomial time, then applying $\alpha$-maximization on top of the minimization construction of \cite{AK19} is still polynomial. We can thus conclude with the following.

\begin{theorem}
	Every GFG-tNCW $\A$ can be canonized into a nice minimal $\alpha$-maximal GFG-tNCW in polynomial time.
\end{theorem}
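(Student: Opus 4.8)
The plan is to combine the canonization pipeline developed so far with the polynomial-time computability of the relevant ingredients. First I would recall that, by Theorem~\ref{nice}, any GFG-tNCW $\A$ can be converted in polynomial time into an equivalent nice GFG-tNCW $\B$ with $|\B| \leq |\A|$. Next, I would invoke the minimization construction of \cite{AK19}, reviewed in Section~\ref{sec ak19}: applying Step~1 (safe centralization, producing some $\B_\S$ for a frontier $\S$) and Step~2 (safe minimization, producing $\C$) yields, by Propositions~\ref{n sc h} and~\ref{n sc sm} together with Theorem~\ref{C is minimal}, a nice, safe-centralized, safe-minimal, $\alpha$-homogenous, and \emph{minimal} GFG-tNCW $\C$ equivalent to $\A$. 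The only remaining point is to verify that both steps run in polynomial time, which amounts to noting that the equivalence relation $\sim$ and the safe-equivalence relation $\approx$ can be computed in polynomial time \cite{HKR02,KS15}; computing the relation $H$ on safe components, extracting a frontier (one vertex per ergodic SCC of $\zug{\S(\A),H}$), and forming the quotient under $\approx$ are all then polynomial.

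Having obtained the minimal nice GFG-tNCW $\C$, I would then $\alpha$-maximize it: let $\hat{\E}$ be the set of all allowed transitions in $\C$, and form $\C_{\hat{\E}}$ as in Section~\ref{sec obtain}. By Propositions~\ref{C and CE are equivalent} and~\ref{ae is nice}, $\C_{\hat{\E}}$ is a nice GFG-tNCW equivalent to $\C$ (hence to $\A$), and since $\C_{\hat{\E}}$ has the same state space as $\C$, it is still minimal. By the last proposition of Section~\ref{sec obtain}, $\C_{\hat{\E}}$ is $\alpha$-maximal. For the complexity of this final step, I would observe that, since $\sim$ is computable in polynomial time, testing whether a given triple in $Q \times \Sigma \times Q$ is an allowed transition (i.e.\ whether some $\sim$-equivalent state is a $\sigma$-successor) is polynomial; iterating over the at most $|Q|^2 \cdot |\Sigma|$ triples is therefore polynomial, so $\hat{\E}$ and $\C_{\hat{\E}}$ are constructed in polynomial time.

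Chaining the three phases --- normalization via Theorem~\ref{nice}, minimization via \cite{AK19}, and $\alpha$-maximization via $\C \mapsto \C_{\hat{\E}}$ --- gives, in polynomial time, a nice minimal $\alpha$-maximal GFG-tNCW equivalent to $\A$, which is exactly the claimed canonical form (by Theorem~\ref{max imp iso} it is canonical up to isomorphism). I do not expect a genuine obstacle here: every structural fact needed has already been established, and the content of the theorem is essentially a bookkeeping observation that nothing in the pipeline costs more than polynomial time. If anything, the only point deserving a sentence of care is confirming that $\alpha$-maximization does not destroy niceness or minimality --- but this is precisely what Propositions~\ref{C and CE are equivalent} and~\ref{ae is nice} guarantee, so the proof reduces to citing the preceding results and remarking on the polynomial cost of computing $\sim$.
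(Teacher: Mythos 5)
Your proposal is correct and follows essentially the same route as the paper: minimize via the construction of \cite{AK19} to obtain a nice minimal $\C$, then pass to $\C_{\hat{\E}}$ for the set $\hat{\E}$ of all allowed transitions, invoking Propositions~\ref{C and CE are equivalent} and~\ref{ae is nice} for equivalence and niceness, the preceding proposition for $\alpha$-maximality, and the polynomial-time computability of $\sim$ for the complexity bound. The observations you add (minimality is preserved because the state space is unchanged, and there are only $|Q|^2\cdot|\Sigma|$ candidate triples to test) are exactly the bookkeeping the paper leaves implicit.
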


\begin{example}
{\rm By applying $\alpha$-maximization to the GFG-tNCW $\B_\S$ , we obtained the $\alpha$-maximal GFG-tNCW $\C_{\hat{\E}}$ appearing in Figure~\ref{alpha maximal fig} \hfill \qed}
\end{example}

		\begin{figure}[htb]	
			\begin{center}
				\vspace{-7mm}
				\includegraphics[width=.32\textwidth]{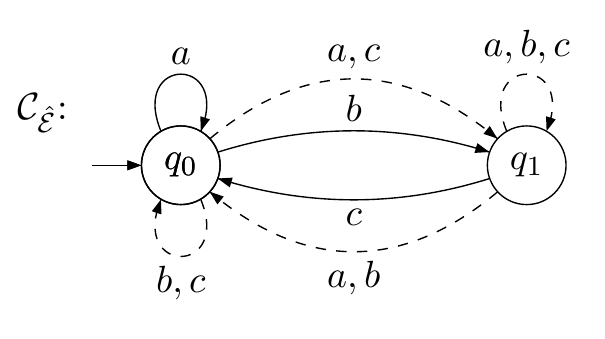}
				\captionof{figure}{The $\alpha$-maximal GFG-tNCW for the GFG-tNCW $\B_\S$ in Figure~\ref{B figure}.}
				\label{alpha maximal fig}	
			\end{center}	
		\end{figure}

\section{Canonicity in tDCW and tDBW}
\label{sc canonicity determinisic}

For deterministic automata with state-based acceptance, an analogue definition of isomorphism between automata $\A$ and $\B$ with acceptance conditions $\alpha_\A \subseteq Q_\A$ and $\alpha_\B \subseteq Q_\B$, seeks a bijection $\kappa: Q_\A \to Q_\B$ such that for every $q\in Q_\A$, we have that $q \in \alpha_\A$ iff $\kappa(q) \in \alpha_\B$, and for  every letter $\sigma \in \Sigma$, and state $q' \in Q_\A$, we have that $q' \in \delta_\A(q,\sigma)$ iff $\kappa(q') \in \delta_\B(\kappa(q),\sigma)$. It is easy to see that the DCWs $\A_1$ and $\A_2$ from Figure~\ref{2min dcws} are not isomorphic, which is a well known property of DCWs and DBWs \cite{Kup15}. In Theorem~\ref{no can gfg} below, we extend the ``no canonicity" result to GFG-NCWs.

\begin{theorem}
\label{no can gfg}
Nice, equivalent, and minimal GFG-NCWs need not be isomorphic.
\end{theorem}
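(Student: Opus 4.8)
The plan is to reuse the two minimal DCWs $\A_1$ and $\A_2$ for the language $L=(a+b)^*\cdot(a^\omega+b^\omega)$ depicted in Figure~\ref{2min dcws}. Being deterministic, each of them is in particular a GFG-NCW, and, as observed in the discussion following Figure~\ref{2min dcws}, they are equivalent (all of their states recognize $L$) and there is no isomorphism between their graphs. Hence it only remains to establish two things: (i) $\A_1$ and $\A_2$ are \emph{minimal} GFG-NCWs for $L$, that is, no GFG-NCW for $L$ has fewer than three states; and (ii) $\A_1$ and $\A_2$ are \emph{nice}. Once both are in hand, $\A_1$ and $\A_2$ witness the theorem.

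For (i), I would prove the stronger statement that no NCW at all — deterministic or not, GFG or not — recognizes $L$ with at most two states. Let $\A=\zug{\{a,b\},Q,q_0,\delta,\alpha}$ be an NCW with $L(\A)=L$. Since $L\neq\emptyset$ we cannot have $\alpha=Q$ (then no run is accepting), and since $L\neq\{a,b\}^\omega$ (because $(ab)^\omega\notin L$) we cannot have $\alpha=\emptyset$ (then every run is accepting); so $\A$ has at least one state in $\alpha$ and at least one state outside $\alpha$. In particular $|Q|\geq 2$, and if $|Q|=2$ there is exactly one state $q\notin\alpha$. Now $a^\omega\in L$, so $\A$ has an accepting run on $a^\omega$; such a run visits $\alpha$-states only finitely often, hence from some point on stays inside $Q\setminus\alpha=\{q\}$, forcing $q\in\delta(q,a)$. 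Symmetrically, from $b^\omega\in L$ we get $q\in\delta(q,b)$. But then the run that loops forever in $q$ is a legal run of $\A$ on $(ab)^\omega$ that never visits $\alpha$, hence is accepting, contradicting $(ab)^\omega\notin L$. (The case $|Q|=1$ is subsumed, as then $\alpha\in\{\emptyset,Q\}$.) Therefore $|Q|\geq 3$, so $\A_1$ and $\A_2$ are minimal GFG-NCWs. The point worth stressing here is that GFGness buys nothing: the loop argument rules out two-state automata irrespective of nondeterminism.

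For (ii), I would check that $\A_1$ and $\A_2$ meet the niceness requirements (the state-based analogues of the conditions of Section~\ref{prelim}): all of their states are reachable and recognize $L$, hence all states are GFG; being deterministic, they are trivially semantically deterministic (and satisfy the safe-determinism clause); and the remaining structural conditions (e.g., normality) are read off Figure~\ref{2min dcws} — or, if a drawn variant fails some clause, one passes to a nice equivalent automaton of the same size, which does not affect minimality or the previous items, so the non-isomorphism persists. Combining (i) and (ii) gives nice, equivalent, minimal GFG-NCWs that are not isomorphic. The conceptual core — the two-state impossibility — is short and robust; the only mildly delicate part is lining up the precise state-based definition of ``nice'' and confirming that $\A_1$ and $\A_2$ (or minor variants of them) satisfy every clause simultaneously while remaining non-isomorphic.
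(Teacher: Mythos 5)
Your proposal follows the paper's own route: take $\A_1$ and $\A_2$ from Figure~\ref{2min dcws}, observe that they are nice, equivalent GFG-NCWs, and rule out a two-state GFG-NCW for $L$. The paper leaves that last step as an assertion (``it is not hard to see''), so your explicit two-state impossibility argument is the substantive addition, and it even proves something stronger (no two-state NCW at all, GFG or not). There is, however, one slip in it: after deducing $q\in\delta(q,a)$ and $q\in\delta(q,b)$ for the unique state $q\notin\alpha$, you conclude that ``the run that loops forever in $q$'' is a legal run of $\A$ on $(ab)^\omega$. A run must start at $q_0$, and nothing forces $q_0=q$; the initial state may be the $\alpha$-state. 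The repair is immediate: $q$ is reachable, e.g.\ the accepting run on $a^\omega$ reaches it after reading some prefix $a^k$, and the run that follows this prefix and then loops in $q$ is an accepting run on $a^k\cdot(ab)^\omega$, a word with infinitely many $a$'s and infinitely many $b$'s, hence not in $L$ --- the desired contradiction. With this one-line fix, part (i) is sound (note that your appeal to totality of $\delta$, as defined in Section~\ref{prelim}, is what justifies excluding $\alpha=\emptyset$).

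On part (ii), your fallback ``if a drawn variant fails some clause, pass to a nice equivalent automaton of the same size \dots\ so the non-isomorphism persists'' is not automatic: replacing both automata by nice equivalents could in principle produce isomorphic automata, so non-isomorphism would have to be re-verified. This detour is unnecessary, though --- the paper simply observes that $\A_1$ and $\A_2$ themselves can be viewed as nice GFG-NCWs (determinism gives GFGness of all states, semantic determinism and safe determinism for free), which is all that is needed.
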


\begin{proof}
Consider the language $L=(a+b)^* \cdot (a^\omega + b^\omega)$. In Figure~\ref{2min dcws}, we described  
the non-isomorphic DCWs $\A_1$ and $\A_2$ for $L$. The DCWs $\A$ and $\B$ can be viewed as nice GFG-NCWs.
It is not hard to see that there is no $2$-state GFG-NCW for $L$, implying that $\A$ and $\B$ are nice, equivalent, and minimal GFG-NCWs that are not isomorphic, as required. 
\end{proof}

In Example~\ref{not iso} we saw that nice, equivalent, and minimal GFG-tNCWs need not be isomorphic too, yet they may be made isomorphic by $\alpha$-maximization. For the GFG-NCWs in the proof of Theorem~\ref{no can gfg}, this does not work for every definition of $\alpha$-maxization that makes sense: we cannot add transitions and make the automata isomorphic. This suggests that the consideration of automata with transition-based acceptance is more crucial for canonization than the consideration of GFG automata, and makes the study of canonization for tDCWs very interesting. In particular, unlike the case of GFG automata, here results on tDCWs immediately apply also to tDBWs.
We start with some bad news, showing that there is no canonicity also in the transition-based setting. 

\begin{theorem}
\label{tD not isomorphic}
Nice, equivalent, and minimal tDCWs and tDBWs need not be isomorphic. 
\end{theorem}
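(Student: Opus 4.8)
The goal is to exhibit an $\omega$-regular language $L$ that has two minimal tDCWs (equivalently, two minimal tDBWs for $\overline{L}$) that are not isomorphic. My first instinct is to reuse the witness language $L=(a+b)^* \cdot (a^\omega + b^\omega)$ that already appears in Figure~\ref{2min dcws} and in the proof of Theorem~\ref{no can gfg}: the DCWs $\A_1$ and $\A_2$ there are non-isomorphic, so if one can (i) redecorate them with transition-based acceptance rather than state-based acceptance while keeping them non-isomorphic, and (ii) argue that the resulting transition-based automata are still \emph{minimal} among tDCWs, then we are done. So the plan is: first produce two concrete small tDCWs $\A_1', \A_2'$ for some language $L$, visibly non-isomorphic (different underlying graph, or different $\alpha$-transition pattern that no graph isomorphism can reconcile); second, prove a matching lower bound showing no tDCW with fewer states exists; third, observe that for co-B\"uchi the dual construction (keeping the transition structure, complementing acceptance by taking $\Delta\setminus\alpha$ as the new $\alpha$) turns these into non-isomorphic minimal tDBWs for $\overline{L}$, since isomorphism of tDCWs and isomorphism of the dual tDBWs coincide.

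The technical heart is step two, the lower bound. For the state-based case this is folklore, but the transition-based setting is exactly where lower bounds tend to collapse (that is the whole point of the paper's introduction), so one must check the candidate language genuinely resists a smaller transition-based automaton. The cleanest route is to pick $L$ so that even the minimal \emph{GFG}-tNCW for $L$ has the same size as the minimal tDCW — i.e., $L$ is ``tDCW-positive'' in the paper's terminology — and then invoke the canonicity machinery already built: by Theorem~\ref{safe isomorphic} all minimal GFG-tNCWs are safe isomorphic, and the minimization algorithm of \cite{AK19} yields a minimal one in polynomial time, pinning down the size. Then I would argue a Myhill–Nerode / right-congruence style lower bound on the number of states of any tDCW for $L$ directly: count the number of distinct ``safe-residual'' behaviors forced by the structure of $L$, using that in a tDCW the safe part is deterministic (safe deterministic) and an accepting run must eventually stay in one safe component. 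For the specific $L=(a+b)^*(a^\omega+b^\omega)$, the argument is that any tDCW must have, reachable, at least one state whose safe language contains $a^\omega$ and at least one whose safe language contains $b^\omega$ and these cannot coincide with a single ``mixed'' state, forcing $\geq 3$ states, matching $\A_1',\A_2'$.

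The one real obstacle I anticipate is making step one and step two compatible: the obvious transition-based analogues of $\A_1$ and $\A_2$ might actually \emph{become} isomorphic once acceptance is moved to transitions — transition-based acceptance is more flexible, so a distinction that required duplicating a state in the state-based world may evaporate. Concretely, in Schewe's reduction the duplication was needed precisely to place the acceptance set on states, and transition-based acceptance removes that need; the same phenomenon could collapse $\A_1$ and $\A_2$ here. So the careful part is to choose $L$ (and the two automata) so that the non-isomorphism is \emph{structural} in the $\bar\alpha$-part — e.g., two genuinely different safe-component shapes of the same total size, each individually optimal, so that no redistribution of $\alpha$-transitions can bridge them. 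If $(a+b)^*(a^\omega+b^\omega)$ does not do the job transition-based, I would fall back to a tailored language over a three-letter alphabet designed so that it has exactly two inequivalent minimal safe-component skeletons, and then lift the state-based non-canonicity proof of \cite{Kup15} by checking that the $\alpha$-transitions are forced up to the safe-isomorphism class but the safe part is not. I expect, though, that a minor variant of the Figure~\ref{2min dcws} example suffices, with the dashed ($\alpha$) transitions placed on the ``return'' edges, and that verifying non-isomorphism plus the $3$-state lower bound is the bulk of the work; the tDBW consequence is then immediate by dualization.
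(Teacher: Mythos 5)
There is a genuine gap: neither branch of your plan yields a working witness. Your primary instantiation fails because the claimed $3$-state lower bound is false for transition-based acceptance: the language $L=(a+b)^*\cdot(a^\omega+b^\omega)$ has a $2$-state tDCW (states $p,q$ with safe self-loops $p\xrightarrow{a}p$ and $q\xrightarrow{b}q$, and $\alpha$-transitions $p\xrightarrow{b}q$ and $q\xrightarrow{a}p$), so the transition-based redecorations of $\A_1,\A_2$ are not minimal. Indeed your own counting argument (one state with $a^\omega$ in its safe language, one with $b^\omega$, and they cannot coincide) only forces $\geq 2$ states, not $\geq 3$; the third state in Figure~\ref{2min dcws} is an artifact of state-based acceptance, which is exactly the obstacle you flagged but then did not resolve. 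Your fallback is also problematic: you ask for two minimal tDCWs whose non-isomorphism is ``structural in the $\bar\alpha$-part'', i.e.\ whose safe-component skeletons differ. But by Theorem~\ref{not safe iso}, for any tDCW-positive language all nice minimal tDCWs are safe isomorphic, so such a witness would have to come from a language that is not tDCW-positive — precisely the regime the paper leaves open and for which no simple examples are known. So the fallback targets a much harder (possibly open) question than the theorem requires, and no concrete language or automata are supplied for it.

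The theorem only needs minimal tDCWs that are \emph{safe isomorphic yet not isomorphic}, and since the automata are deterministic, one cannot ``redistribute'' or add $\alpha$-transitions to bridge them — so your worry that safe-isomorphic witnesses might evaporate is unfounded. This is the paper's route: for $(b+c)^*\cdot(b\cdot c)^\omega$ (or, in the same spirit, two different prunings of the DBP automaton $\B_\S$ of Figure~\ref{B figure}) one exhibits two $2$-state tDCWs whose $\bar\alpha$-parts force the bijection $\kappa(q_0)=s_0$, $\kappa(q_1)=s_1$, under which the $\alpha$-transitions do not correspond; minimality is immediate because no single-state tDCW exists, and dualizing the acceptance condition transfers the example to tDBWs, as you correctly note at the end. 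To repair your proposal, replace the $3$-state lower-bound plan and the ``different safe skeleton'' requirement with such a safe-isomorphic, $\alpha$-divergent pair.
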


\begin{proof}
The GFG-tNCW $\B_\S$ from Figure~\ref{B figure} is DBP. In Figure~\ref{tdcw not iso min} below, we describe two tDCWs obtained from it by two different prunnings. It is not hard to see that both tDCWs are equivalent to $\B_\S$, yet are not isomorphic.
		\begin{figure}[htb]	
			\begin{center}
			\hspace{-4mm}
				\includegraphics[width=.7\textwidth]{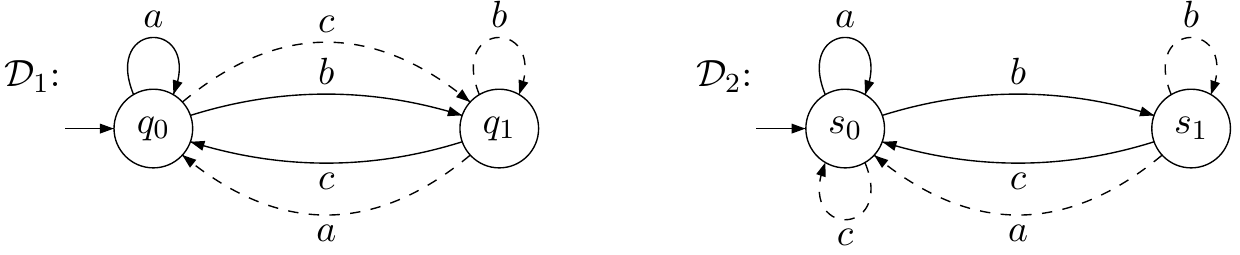}
				\captionof{figure}{Two non-isomorphic equivalent minimal nice tDCWs, obtained by different prunnings of $\B_\S$.}
				\label{tdcw not iso min}	
			\end{center}			
		\end{figure}
		
By removing the $a$-labeled transitions from the tDCWs in Figure~\ref{tdcw not iso min}, we obtain a simpler example.  Consider the tDCWs $\D'_1$ and $\D'_2$ in Figure~\ref{tdcw not iso min no a}. It is easy to see that $L(\D'_1)=L(\D'_2)=(b+c)^* \cdot (b \cdot c)^\omega$. Clearly, there is no single-state tDCW for this language. Also, the tDCWs are not isomorphic, as a candidate bijection $\kappa$ has to be $\bar{\alpha}$-transition respecting, and thus have $\kappa(q_0)=s_0$ and $\kappa(q_1)=s_1$, yet then it is not $\alpha$-transition respecting. By dualizing the acceptance condition of $\D'_1$ and $\D'_2$, we obtain two non-isomorphic tDBWs for the complement language, of all words with infinitely many occurrences of $bb$ or $cc$.
\end{proof}

		\begin{figure}[htb]	
			\begin{center}
			\hspace{-6mm}
				\includegraphics[width=.7\textwidth]{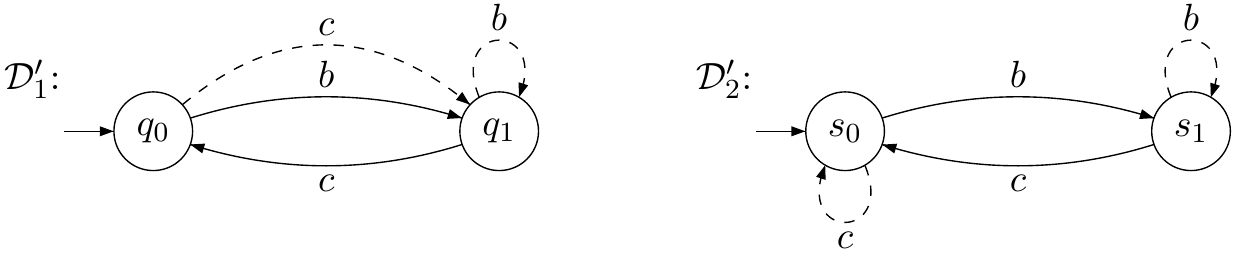}
				\captionof{figure}{Two nice, minimal, equivalent, and non-isomorphic tDCWs}
				\label{tdcw not iso min no a}	
			\end{center}			
		\end{figure}

The GFG-NCWs used in the proof of Theorem~\ref{no can gfg} cannot be made isomorphic by changing membership of states in $\alpha$ or by adding transitions. Likewise,  since tDCWs cannot be $\alpha$-maximized, as adding transitions conflicts with determinism, the tDCWs used in the proof of Theorem~\ref{tD not isomorphic} cannot be made isomorphic either. Hence, we have the following.

\begin{theorem}
There is no canonicity for minimal GFG-NCWs and for minimal tDCWs.
\end{theorem}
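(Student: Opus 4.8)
The plan is to obtain the statement as a direct consequence of Theorems~\ref{no can gfg} and~\ref{tD not isomorphic}, strengthened by the observation that, in contrast with the GFG-tNCW setting of Section~\ref{sec canonical rep}, the non-isomorphic witnesses exhibited there cannot be reconciled by any natural normalization. First I would fix what ``canonicity'' requires: a way of assigning to every $\omega$-regular language $L$ (in the relevant class) a minimal automaton for $L$ so that equivalent inputs receive isomorphic outputs. Since a nice minimal automaton is obtained from an arbitrary one by language-preserving manipulations, the existence of such an assignment is equivalent to all nice minimal automata for $L$ being pairwise isomorphic. Theorems~\ref{no can gfg} and~\ref{tD not isomorphic} already refute this for the concrete languages $L=(a+b)^*\cdot(a^\omega+b^\omega)$ (in the GFG-NCW case) and $(b+c)^*\cdot(b\cdot c)^\omega$ (in the tDCW case), so most of the work is done.

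The second step is to close the only escape route that worked for GFG-tNCWs, namely $\alpha$-maximization. For GFG-NCWs acceptance is a set of \emph{states}, so the would-be analogue of ``adding allowed $\alpha$-transitions'' is either enlarging $\alpha$ or adding edges to the underlying graph. I would argue that neither repairs the witnesses $\A_1,\A_2$ of Theorem~\ref{no can gfg}: these are three-state automata recognizing $L$ with a minimal state set, so changing which states lie in $\alpha$ changes the language, while adding a transition either changes the language or still leaves the two underlying directed graphs non-isomorphic, since $\A_1$ and $\A_2$ were picked to have genuinely different graph structure. For tDCWs the argument is shorter still: determinism forbids adding \emph{any} transition, and redistributing the $\alpha$-transitions of $\D'_1$ or $\D'_2$ among the (unique) transitions of a two-state deterministic automaton either breaks the language or reproduces the same two shapes. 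Hence no completion operation of the kind used in Theorems~\ref{max imp iso} and~\ref{max hom imp iso} is available here.

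Combining the two steps yields the theorem: in each of the classes ``minimal GFG-NCW'' and ``minimal tDCW'' there is a language admitting two non-isomorphic nice minimal automata that no normalization can force to coincide, so no canonical minimal form exists. The step I expect to be the main obstacle is making the robustness claim watertight for GFG-NCWs: one must quantify over all candidate bijections \emph{and} over all possible ways of adding transitions or shifting $\alpha$, rather than merely displaying a single non-isomorphic pair — this is precisely where the argument goes beyond a restatement of Theorem~\ref{no can gfg}.
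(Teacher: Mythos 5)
Your proposal matches the paper's argument: the paper likewise derives this theorem directly from Theorems~\ref{no can gfg} and~\ref{tD not isomorphic}, together with the observation that the GFG-NCW witnesses cannot be made isomorphic by changing membership of states in $\alpha$ or by adding transitions, and that the tDCW witnesses cannot be $\alpha$-maximized since adding transitions conflicts with determinism. Your extra care in formalizing what ``canonicity'' demands and in quantifying over all candidate normalizations is a reasonable tightening, but the substance is the same.
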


The tDCWs in the proof of Theorem~\ref{tD not isomorphic} are safe isomorphic, 
We continue and study safe-isomorphism between minimal tDCWs. Here too, we restrict attention to {\em nice\/} minimal tDCWs. Note that here, some of the properties of nice GFG-tNCWs are trivial: being minimal and deterministic, then clearly all states are reachable and GFG, the automata are semantically deterministic and safe deterministic, and we only have to make them normal by classifying transitions between safe components as $\alpha$-transitions.

\stam{
\begin{theorem}
\label{not safe iso}
Consider an $\omega$-regular language $L$. If $L$ has two nice minimal tDCWs that are not safe isomorphic, then every minimal GFG-tNCW for $L$ is not DBP.
\end{theorem}

\begin{proof}
Consider a language $L$ that has two nice minimal tDCWs $\A_1$ and $\A_2$ that are not safe isomorphic. Let $\A$ be a minimal GFG-tNCW for $L$. We distinguish between two cases:
\begin{enumerate}
\item
If $|\A|<|\A_1|$, then $\A$ is not DBP, as otherwise by pruning $\A$ we would have obtained a tDCW for $L$ with less than $|\A_1|$ states, contradicting the minimality of $\A_1$. 
\item
If $|\A|=|\A_1|$, then $\A_1$ and $\A_2$ are nice minimal GFG-tNCWs for $L$, and hence by Theorem~\ref{safe isomorphic}, they are safe isomorphic. 
\end{enumerate}
\end{proof}
}

We say that an $\omega$-regular language $L$ is {\em tDCW-positive\/} if a minimal tDCW for $L$ is not bigger than a minimal GFG-tNCW for $L$. Thus, tDCWs for $L$ are as succinct as GFG-tNCWs for it. 
\begin{theorem}
\label{not safe iso}
Consider an $\omega$-regular language $L$. If $L$ is tDCW-positive, then 
	every two nice and minimal tDCWs for $L$ are safe isomorphic.
\end{theorem}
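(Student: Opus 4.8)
The plan is to reduce the tDCW case to the already-established GFG-tNCW case (Theorem~\ref{safe isomorphic}). Let $L$ be tDCW-positive, and let $\A_1$ and $\A_2$ be two nice minimal tDCWs for $L$. Since every deterministic automaton is a GFG automaton, $\A_1$ and $\A_2$ are in particular GFG-tNCWs for $L$, and the niceness we assumed for tDCWs (reachability, GFGness, semantic determinism, safe determinism, normality — recall that for deterministic automata all of these are trivial except normality, which we imposed) is exactly niceness in the GFG-tNCW sense. The key observation is that tDCW-positivity tells us that the common size $|\A_1|=|\A_2|$ equals the size of a minimal GFG-tNCW for $L$: a minimal GFG-tNCW is no larger than $\A_1$ (it is the minimum over a superset of the automata $\A_1$ ranges in), and tDCW-positivity gives the reverse inequality. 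Hence $\A_1$ and $\A_2$ are not merely minimal among tDCWs; they are minimal among all GFG-tNCWs for $L$.

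First I would spell out the size equality: let $n$ be the size of a minimal GFG-tNCW for $L$, so $n \le |\A_1|$ since $\A_1$ is a GFG-tNCW for $L$; and $|\A_1| \le n$ by the definition of tDCW-positivity (a minimal tDCW is not bigger than a minimal GFG-tNCW). Thus $|\A_1|=n$, and likewise $|\A_2|=n$, so both $\A_1$ and $\A_2$ are minimal GFG-tNCWs for $L$. Next I would invoke Theorem~\ref{safe isomorphic}: two equivalent, nice, and minimal GFG-tNCWs are safe isomorphic. Applying it to $\A_1$ and $\A_2$ yields a bijection $\kappa: Q_{\A_1} \to Q_{\A_2}$ that is $\bar{\alpha}$-transition respecting, which is precisely what it means for $\A_1$ and $\A_2$ to be safe isomorphic. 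That completes the argument.

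I do not expect a serious obstacle here; the content is entirely in the bookkeeping that tDCW-positivity converts "minimal tDCW" into "minimal GFG-tNCW", after which Theorem~\ref{safe isomorphic} is a black box. The one point that deserves a careful sentence is that niceness for the tDCW, as defined in the paragraph preceding the theorem, coincides with niceness as required by Theorem~\ref{safe isomorphic} — namely that the only nontrivial clause, normality, is part of the assumed "nice minimal tDCW", and the remaining clauses (all states reachable and GFG, semantic determinism, safe determinism) hold automatically because the automaton is deterministic and minimal. With that remark in place the reduction is immediate.
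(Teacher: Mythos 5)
Your proposal is correct and follows exactly the paper's own argument: tDCW-positivity makes the minimal tDCWs minimal as GFG-tNCWs (niceness carrying over), after which Theorem~\ref{safe isomorphic} gives safe isomorphism. The extra bookkeeping you add about the size equality and the niceness clauses is just a more explicit rendering of the same two-line proof.
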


\begin{proof}
Consider a language $L$ that is tDCW-positive, and consider two nice minimal tDCWs $\A_1$ and $\A_2$ for $L$. 
Since $L$ is tDCW-positive, then $\A_1$ and $\A_2$ are also nice minimal GFG-tNCWs for $L$. Hence, by Theorem~\ref{safe isomorphic}, they are safe isomorphic. 
\end{proof}

Note that safe isomorphism for $\omega$-regular languages that are not tDCW-positive is left open. Theorem~\ref{not safe iso} suggests that searching for a language $L$ that has two minimal tDCWs that are not safe isomorphic, we can restrict attention to languages that are not tDCW-positive. Such languages are not natural. Moreover, their canonicity is less crucial, as working with a minimal GFG-tNCW for them is more appealing.  Examples of languages that are not tDCW-positive can be found in \cite{KS15}, where it was shown that GFG-tNCWs may be exponentially more succinct than tDCWs.

\section{Glossary}
\label{app glos}

All notations and definitions refer to a  GFG-tNCW $\A = \langle \Sigma, Q, q_0, \delta, \alpha  \rangle$.

\paragraph{Relations between states}
\begin{itemize}
	\item
	Two states $q,s\in Q$ are \emph{equivalent}, denoted $q \sim s$, if $L(\A^q) = L(\A^s)$. 
	\item
	Two states $q,s\in Q$ are \emph{safe equivalent}  if $L_{safe}(\A^q) = L_{safe}(\A^s)$. 
	\item
	Two states $q,s\in Q$ are \emph{strongly-equivalent}, denoted $q \approx  s$, if $q \sim s$ and $L_{safe}(\A^q) = L_{safe}(\A^s)$. 
	\item
	A state $q \in Q$ is \emph{subsafe-equivalent to} a state $s$, denoted $q\precsim s$, if $q \sim s$ and $L_{safe}(\A^q) \subseteq L_{safe}(\A^s)$. 
\end{itemize}

\paragraph{Properties of a GFG-tNCW}
\begin{itemize}
	\item
	$\A$ is \emph{semantically deterministic} if for every state $q\in Q$ and letter $\sigma \in \Sigma$, all the $\sigma$-successors of $q$ are equivalent: for every two states $s, s'\in \delta(q,\sigma)$, we have that $s \sim s'$.
	\item
	$\A$ is \emph{safe deterministic} if by removing its $\alpha$-transitions, we get a (possibly not total)  deterministic automaton. Thus, for every state $q\in Q$ and letter $\sigma\in \Sigma$, it holds that $|\delta^{\bar{\alpha}}(q, \sigma)|\leq 1$.
	\item
	$\A$ is \emph{normal} if there are no $\bar{\alpha}$-transitions connecting different safe components. That is,
	for all states $q$ and $s$ of $\A$, if there is a path of $\bar{\alpha}$-transitions from $q$ to $s$, then there is also a path of $\bar{\alpha}$-transitions from $s$ to $q$. 
	\item
	$\A$ is {\em nice\/} if all the states in $\A$ are reachable and GFG, and $\A$ is normal, safe deterministic, and semantically deterministic. 
	\item
	$\A$ is \emph{$\alpha$-homogenous} if for every state $q\in Q$ and letter $\sigma \in \Sigma$, either $\delta^\alpha(q, \sigma) =\emptyset$ or $\delta^{\bar{\alpha}}(q, \sigma) = \emptyset$. 
	\item
	$\A$ is \emph{safe-minimal} if it has no strongly-equivalent states. 
	\item
	$\A$ is \emph{safe-centralized} if for every two states $q, s\in Q$, if $q \precsim s$, then $q$ and $s$ are in the same safe component of $\A$. 
\end{itemize}

\newpage

\bibliographystyle{eptcs}
\bibliography{ba}

\begin{thebibliography}{10}
\providecommand{\bibitemdeclare}[2]{}
\providecommand{\surnamestart}{}
\providecommand{\surnameend}{}
\providecommand{\urlprefix}{Available at }
\providecommand{\url}[1]{\texttt{#1}}
\providecommand{\href}[2]{\texttt{#2}}
\providecommand{\urlalt}[2]{\href{#1}{#2}}
\providecommand{\doi}[1]{doi:\urlalt{http://dx.doi.org/#1}{#1}}
\providecommand{\bibinfo}[2]{#2}

\bibitemdeclare{inproceedings}{AK19}
\bibitem{AK19}
\bibinfo{author}{B.~\surnamestart {Abu Radi}\surnameend} \&
  \bibinfo{author}{O.~\surnamestart Kupferman\surnameend}
  (\bibinfo{year}{2019}): \emph{\bibinfo{title}{Minimizing {GFG}
  Transition-Based Automata}}.
\newblock In: {\sl \bibinfo{booktitle}{Proc.\ 46th Int. Colloq. on Automata,
  Languages, and Programming}}, {\sl \bibinfo{series}{LIPIcs}}
  \bibinfo{volume}{132}, \bibinfo{publisher}{Schloss Dagstuhl - Leibniz-Zentrum
  fuer Informatik}, pp. \bibinfo{pages}{100:1--100:16},
  \doi{10.4230/LIPIcs.ICALP.2019.100}.

\bibitemdeclare{inproceedings}{BK18}
\bibitem{BK18}
\bibinfo{author}{M.~\surnamestart Bagnol\surnameend} \&
  \bibinfo{author}{D.~\surnamestart Kuperberg\surnameend}
  (\bibinfo{year}{2018}): \emph{\bibinfo{title}{B{\"{u}}chi Good-for-Games
  Automata Are Efficiently Recognizable}}.
\newblock In: {\sl \bibinfo{booktitle}{Proc. 38th Conf. on Foundations of
  Software Technology and Theoretical Computer Science}}, {\sl
  \bibinfo{series}{LIPIcs}} \bibinfo{volume}{122}, \bibinfo{publisher}{Schloss
  Dagstuhl - Leibniz-Zentrum fuer Informatik}, pp.
  \bibinfo{pages}{16:1--16:14}, \doi{10.4230/LIPIcs.FSTTCS.2018.16}.

\bibitemdeclare{inproceedings}{BKKS13}
\bibitem{BKKS13}
\bibinfo{author}{U.~\surnamestart Boker\surnameend},
  \bibinfo{author}{D.~\surnamestart Kuperberg\surnameend},
  \bibinfo{author}{O.~\surnamestart Kupferman\surnameend} \&
  \bibinfo{author}{M.~\surnamestart Skrzypczak\surnameend}
  (\bibinfo{year}{2013}): \emph{\bibinfo{title}{Nondeterminism in the Presence
  of a Diverse or Unknown Future}}.
\newblock In: {\sl \bibinfo{booktitle}{{ICALP} {(2)}}}, {\sl
  \bibinfo{series}{Lecture Notes in Computer Science}} \bibinfo{volume}{7966},
  \bibinfo{publisher}{Springer}, pp. \bibinfo{pages}{89--100},
  \doi{10.1007/978-3-642-39212-2_11}.

\bibitemdeclare{inproceedings}{BKS17}
\bibitem{BKS17}
\bibinfo{author}{U.~\surnamestart Boker\surnameend},
  \bibinfo{author}{O.~\surnamestart Kupferman\surnameend} \&
  \bibinfo{author}{M.~\surnamestart Skrzypczak\surnameend}
  (\bibinfo{year}{2017}): \emph{\bibinfo{title}{How Deterministic are
  {G}ood-{F}or-{G}ames Automata?}}
\newblock In: {\sl \bibinfo{booktitle}{Proc. 37th Conf. on Foundations of
  Software Technology and Theoretical Computer Science}}, {\sl
  \bibinfo{series}{Leibniz International Proceedings in Informatics
  (LIPIcs)}}~\bibinfo{volume}{93}, pp. \bibinfo{pages}{18:1--18:14},
  \doi{10.4230/LIPIcs.FSTTCS.2017.18}.

\bibitemdeclare{inproceedings}{Buc62}
\bibitem{Buc62}
\bibinfo{author}{J.R. \surnamestart B{\"u}chi\surnameend}
  (\bibinfo{year}{1962}): \emph{\bibinfo{title}{On a Decision Method in
  Restricted Second Order Arithmetic}}.
\newblock In: {\sl \bibinfo{booktitle}{Proc. Int. Congress on Logic, Method,
  and Philosophy of Science. 1960}}, \bibinfo{publisher}{Stanford University
  Press}, pp. \bibinfo{pages}{1--12}.

\bibitemdeclare{conference}{Col09}
\bibitem{Col09}
\bibinfo{author}{Th. \surnamestart Colcombet\surnameend}
  (\bibinfo{year}{2009}): \emph{\bibinfo{title}{The theory of stabilisation
  monoids and regular cost functions}}.
\newblock In: {\sl \bibinfo{booktitle}{Proc.\ 36th Int. Colloq. on Automata,
  Languages, and Programming}}, {\sl \bibinfo{series}{Lecture Notes in Computer
  Science}} \bibinfo{volume}{5556}, \bibinfo{publisher}{Springer}, pp.
  \bibinfo{pages}{139--150}, \doi{10.1007/978-3-642-02930-1_12}.

\bibitemdeclare{inproceedings}{DLFMRX16}
\bibitem{DLFMRX16}
\bibinfo{author}{A.~\surnamestart Duret-Lutz\surnameend},
  \bibinfo{author}{A.~\surnamestart Lewkowicz\surnameend},
  \bibinfo{author}{A.~\surnamestart Fauchille\surnameend}, \bibinfo{author}{Th.
  \surnamestart Michaud\surnameend}, \bibinfo{author}{E.~\surnamestart
  Renault\surnameend} \& \bibinfo{author}{L.~\surnamestart Xu\surnameend}
  (\bibinfo{year}{2016}): \emph{\bibinfo{title}{Spot 2.0 --- a framework for
  {LTL} and $\omega$-automata manipulation}}.
\newblock In: {\sl \bibinfo{booktitle}{14th Int. Symp. on Automated Technology
  for Verification and Analysis}}, {\sl \bibinfo{series}{Lecture Notes in
  Computer Science}} \bibinfo{volume}{9938}, \bibinfo{publisher}{Springer}, pp.
  \bibinfo{pages}{122--129}, \doi{10.1007/978-3-319-46520-3_8}.

\bibitemdeclare{inproceedings}{GL02}
\bibitem{GL02}
\bibinfo{author}{D.~\surnamestart Giannakopoulou\surnameend} \&
  \bibinfo{author}{F.~\surnamestart Lerda\surnameend} (\bibinfo{year}{2002}):
  \emph{\bibinfo{title}{From States to Transitions: Improving Translation of
  {LTL} Formulae to {B}{\"u}chi Automata}}.
\newblock In: {\sl \bibinfo{booktitle}{Proc. 22nd International Conference on
  Formal Techniques for Networked and Distributed Systems}}, {\sl
  \bibinfo{series}{Lecture Notes in Computer Science}} \bibinfo{volume}{2529},
  \bibinfo{publisher}{Springer}, pp. \bibinfo{pages}{308--326},
  \doi{10.1007/3-540-36135-9_20}.

\bibitemdeclare{article}{HKR02}
\bibitem{HKR02}
\bibinfo{author}{T.A. \surnamestart Henzinger\surnameend},
  \bibinfo{author}{O.~\surnamestart Kupferman\surnameend} \&
  \bibinfo{author}{S.~\surnamestart Rajamani\surnameend}
  (\bibinfo{year}{2002}): \emph{\bibinfo{title}{Fair simulation}}.
\newblock {\sl \bibinfo{journal}{Information and Computation}}
  \bibinfo{volume}{173}(\bibinfo{number}{1}), pp. \bibinfo{pages}{64--81},
  \doi{10.1006/inco.2001.3085}.

\bibitemdeclare{inproceedings}{HP06}
\bibitem{HP06}
\bibinfo{author}{T.A. \surnamestart Henzinger\surnameend} \&
  \bibinfo{author}{N.~\surnamestart Piterman\surnameend}
  (\bibinfo{year}{2006}): \emph{\bibinfo{title}{Solving Games without
  Determinization}}.
\newblock In: {\sl \bibinfo{booktitle}{Proc. 15th Annual Conf. of the European
  Association for Computer Science Logic}}, {\sl \bibinfo{series}{Lecture Notes
  in Computer Science}} \bibinfo{volume}{4207}, \bibinfo{publisher}{Springer},
  pp. \bibinfo{pages}{394--410}, \doi{10.1007/11874683_26}.

\bibitemdeclare{incollection}{Hop71}
\bibitem{Hop71}
\bibinfo{author}{J.E. \surnamestart Hopcroft\surnameend}
  (\bibinfo{year}{1971}): \emph{\bibinfo{title}{An $n \log n$ algorithm for
  minimizing the states in a finite automaton}}.
\newblock In \bibinfo{editor}{Z.~\surnamestart Kohavi\surnameend}, editor: {\sl
  \bibinfo{booktitle}{The Theory of Machines and Computations}},
  \bibinfo{publisher}{Academic Press}, pp. \bibinfo{pages}{189--196},
  \doi{10.1016/B978-0-12-417750-5.50022-1}.

\bibitemdeclare{inproceedings}{KS15}
\bibitem{KS15}
\bibinfo{author}{D.~\surnamestart Kuperberg\surnameend} \&
  \bibinfo{author}{M.~\surnamestart Skrzypczak\surnameend}
  (\bibinfo{year}{2015}): \emph{\bibinfo{title}{On Determinisation of
  Good-for-Games Automata}}.
\newblock In: {\sl \bibinfo{booktitle}{Proc.\ 42nd Int. Colloq. on Automata,
  Languages, and Programming}}, pp. \bibinfo{pages}{299--310},
  \doi{10.1007/978-3-662-47666-6_24}.

\bibitemdeclare{article}{Kup15}
\bibitem{Kup15}
\bibinfo{author}{O.~\surnamestart Kupferman\surnameend} (\bibinfo{year}{2015}):
  \emph{\bibinfo{title}{Automata Theory and Model Checking}}.
\newblock {\sl \bibinfo{journal}{Handbook of {T}heoretical {C}omputer
  {S}cience}}.

\bibitemdeclare{article}{KSV06}
\bibitem{KSV06}
\bibinfo{author}{O.~\surnamestart Kupferman\surnameend},
  \bibinfo{author}{S.~\surnamestart Safra\surnameend} \& \bibinfo{author}{M.Y.
  \surnamestart Vardi\surnameend} (\bibinfo{year}{2006}):
  \emph{\bibinfo{title}{Relating word and tree automata}}.
\newblock {\sl \bibinfo{journal}{Ann. Pure Appl. Logic}}
  \bibinfo{volume}{138}(\bibinfo{number}{1-3}), pp. \bibinfo{pages}{126--146},
  \doi{10.1016/j.apal.2005.06.009}.

\bibitemdeclare{inproceedings}{LZ20}
\bibitem{LZ20}
\bibinfo{author}{K.~\surnamestart Lehtinen\surnameend} \&
  \bibinfo{author}{M.~\surnamestart Zimmermann\surnameend}
  (\bibinfo{year}{2020}): \emph{\bibinfo{title}{Good-for-games
  {$\omega$}-Pushdown Automata}}.
\newblock In: {\sl \bibinfo{booktitle}{Proc.\ 35th IEEE Symp. on Logic in
  Computer Science}}, pp. \bibinfo{pages}{689--702},
  \doi{10.1145/3373718.3394737}.

\bibitemdeclare{article}{LKH17}
\bibitem{LKH17}
\bibinfo{author}{W.~\surnamestart Li\surnameend}, \bibinfo{author}{Sh.
  \surnamestart Kan\surnameend} \& \bibinfo{author}{Z.~\surnamestart
  Huang\surnameend} (\bibinfo{year}{2017}): \emph{\bibinfo{title}{A Better
  Translation From {LTL} to Transition-Based Generalized B{\"{u}}chi
  Automata}}.
\newblock {\sl \bibinfo{journal}{{IEEE} Access}} \bibinfo{volume}{5}, pp.
  \bibinfo{pages}{27081--27090}, \doi{10.1109/ACCESS.2017.2773123}.

\bibitemdeclare{unpublished}{Mor03}
\bibitem{Mor03}
\bibinfo{author}{G.~\surnamestart Morgenstern\surnameend}
  (\bibinfo{year}{2003}): \emph{\bibinfo{title}{Expressiveness results at the
  bottom of the $\omega$-regular hierarchy}}.
\newblock \bibinfo{note}{{M.Sc.} Thesis, The Hebrew University}.

\bibitemdeclare{techreport}{Myh57}
\bibitem{Myh57}
\bibinfo{author}{J.~\surnamestart Myhill\surnameend} (\bibinfo{year}{1957}):
  \emph{\bibinfo{title}{Finite automata and the representation of events}}.
\newblock \bibinfo{type}{Technical Report} \bibinfo{number}{WADD TR-57-624,
  pages 112--137}, \bibinfo{institution}{Wright Patterson AFB},
  \bibinfo{address}{Ohio}.

\bibitemdeclare{article}{Ner58}
\bibitem{Ner58}
\bibinfo{author}{A.~\surnamestart Nerode\surnameend} (\bibinfo{year}{1958}):
  \emph{\bibinfo{title}{Linear Automaton Transformations}}.
\newblock {\sl \bibinfo{journal}{Proceedings of the American Mathematical
  Society}} \bibinfo{volume}{9}(\bibinfo{number}{4}), pp.
  \bibinfo{pages}{541--544}, \doi{10.2307/2033204}.

\bibitemdeclare{inproceedings}{NW98}
\bibitem{NW98}
\bibinfo{author}{D.~\surnamestart Niwinski\surnameend} \&
  \bibinfo{author}{I.~\surnamestart Walukiewicz\surnameend}
  (\bibinfo{year}{1998}): \emph{\bibinfo{title}{Relating hierarchies of word
  and tree automata}}.
\newblock In: {\sl \bibinfo{booktitle}{Proc. 15th Symp. on Theoretical Aspects
  of Computer Science}}, {\sl \bibinfo{series}{Lecture Notes in Computer
  Science}} \bibinfo{volume}{1373}, \bibinfo{publisher}{Springer}, pp.
  \bibinfo{pages}{320--331}, \doi{10.1007/BFb0028571}.

\bibitemdeclare{inproceedings}{Sch10}
\bibitem{Sch10}
\bibinfo{author}{S.~\surnamestart Schewe\surnameend} (\bibinfo{year}{2010}):
  \emph{\bibinfo{title}{{Beyond Hyper-Minimisation---Minimising DBAs and DPAs
  is NP-Complete}}}.
\newblock In: {\sl \bibinfo{booktitle}{Proc. 30th Conf. on Foundations of
  Software Technology and Theoretical Computer Science}}, {\sl
  \bibinfo{series}{Leibniz International Proceedings in Informatics
  (LIPIcs)}}~\bibinfo{volume}{8}, pp. \bibinfo{pages}{400--411},
  \doi{10.4230/LIPIcs.FSTTCS.2010.400}.

\bibitemdeclare{article}{Sch20}
\bibitem{Sch20}
\bibinfo{author}{S.~\surnamestart Schewe\surnameend} (\bibinfo{year}{2020}):
  \emph{\bibinfo{title}{Minimising Good-for-Games automata is {NP} complete}}.
\newblock {\sl \bibinfo{journal}{CoRR}} \bibinfo{volume}{abs/2003.11979}.

\bibitemdeclare{inproceedings}{SEJK16}
\bibitem{SEJK16}
\bibinfo{author}{S.~\surnamestart Sickert\surnameend},
  \bibinfo{author}{J.~\surnamestart Esparza\surnameend},
  \bibinfo{author}{S.~\surnamestart Jaax\surnameend} \&
  \bibinfo{author}{J.~\surnamestart K{\v r}et{\'{\i}}nsk{\'{y}}\surnameend}
  (\bibinfo{year}{2016}): \emph{\bibinfo{title}{Limit-Deterministic B{\"{u}}chi
  Automata for Linear Temporal Logic}}.
\newblock In: {\sl \bibinfo{booktitle}{Proc. 28th Int. Conf. on Computer Aided
  Verification}}, {\sl \bibinfo{series}{Lecture Notes in Computer Science}}
  \bibinfo{volume}{9780}, \bibinfo{publisher}{Springer}, pp.
  \bibinfo{pages}{312--332}, \doi{10.1007/978-3-319-41540-6_17}.

\bibitemdeclare{article}{Tar72}
\bibitem{Tar72}
\bibinfo{author}{R.E. \surnamestart Tarjan\surnameend} (\bibinfo{year}{1972}):
  \emph{\bibinfo{title}{Depth first search and linear graph algorithms}}.
\newblock {\sl \bibinfo{journal}{SIAM Journal of Computing}}
  \bibinfo{volume}{1(2)}, pp. \bibinfo{pages}{146--160}, \doi{10.1137/0201010}.

\bibitemdeclare{article}{VW94}
\bibitem{VW94}
\bibinfo{author}{M.Y. \surnamestart Vardi\surnameend} \&
  \bibinfo{author}{P.~\surnamestart Wolper\surnameend} (\bibinfo{year}{1994}):
  \emph{\bibinfo{title}{Reasoning about Infinite Computations}}.
\newblock {\sl \bibinfo{journal}{Information and Computation}}
  \bibinfo{volume}{115}(\bibinfo{number}{1}), pp. \bibinfo{pages}{1--37},
  \doi{10.1006/inco.1994.1092}.

\end{thebibliography}

\stam{
\appendix

\section{Glossary}
\label{app glos}

All notations and definitions refer to a  GFG-tNCW $\A = \langle \Sigma, Q, q_0, \delta, \alpha  \rangle$.

\paragraph{Relations between states}
\begin{itemize}
\item
Two states $q,s\in Q$ are \emph{equivalent}, denoted $q \sim s$, if $L(\A^q) = L(\A^s)$. 
\item
Two states $q,s\in Q$ are \emph{safe equivalent}  if $L_{safe}(\A^q) = L_{safe}(\A^s)$. 
\item
Two states $q,s\in Q$ are \emph{strongly-equivalent}, denoted $q \approx  s$, if $q \sim s$ and $L_{safe}(\A^q) = L_{safe}(\A^s)$. 
\item
A state $q \in Q$ is \emph{subsafe-equivalent to} a state $s$, denoted $q\precsim s$, if $q \sim s$ and $L_{safe}(\A^q) \subseteq L_{safe}(\A^s)$. 
\end{itemize}

\paragraph{Properties of a GFG-tNCW}
\begin{itemize}
\item
$\A$ is \emph{semantically deterministic} if for every state $q\in Q$ and letter $\sigma \in \Sigma$, all the $\sigma$-successors of $q$ are equivalent: for every two states $s, s'\in \delta(q,\sigma)$, we have that $s \sim s'$.
\item
$\A$ is \emph{safe deterministic} if by removing its $\alpha$-transitions, we get a (possibly not total)  deterministic automaton. Thus, for every state $q\in Q$ and letter $\sigma\in \Sigma$, it holds that $|\delta^{\bar{\alpha}}(q, \sigma)|\leq 1$.
\item
$\A$ is \emph{normal} if there are no $\bar{\alpha}$-transitions connecting different safe components. That is,
for all states $q$ and $s$ of $\A$, if there is a path of $\bar{\alpha}$-transitions from $q$ to $s$, then there is also a path of $\bar{\alpha}$-transitions from $s$ to $q$. 
\item
$\A$ is {\em nice\/} if all the states in $\A$ are reachable and GFG, and $\A$ is normal, safe deterministic, and semantically deterministic. 
\item
$\A$ is \emph{$\alpha$-homogenous} if for every state $q\in Q$ and letter $\sigma \in \Sigma$, either $\delta^\alpha(q, \sigma) =\emptyset$ or $\delta^{\bar{\alpha}}(q, \sigma) = \emptyset$. 
\item
$\A$ is \emph{safe-minimal} if it has no strongly-equivalent states. 
\item
$\A$ is \emph{safe-centralized} if for every two states $q, s\in Q$, if $q \precsim s$, then $q$ and $s$ are in the same safe component of $\A$. 
\end{itemize}

\section{Proofs of Propositions~\ref{C and CE are equivalent} and~\ref{ae is nice}}
\label{app max}

We first extend Proposition~\ref{pruned-corollary} to the setting of $\A$ and $\A_\E$:

\begin{proposition}\label{pruned-corollaryC}
	Consider states $q$ and $s$ of $\A$ and $\A_\E$, respectively, a letter $\sigma \in \Sigma$, and transitions 
	$\zug{q, \sigma, q'}$ and $\langle s, \sigma, s'\rangle$ of $\A$ and $\A_\E$, respectively. 
	If $q \sim_{\A} s$, then $q' \sim_{\A} s'$.
\end{proposition}

\begin{proof}
	If $\langle s, \sigma, s'\rangle \notin \E$, then, by the definition of $\Delta_{\E}$, it is also a transition of $\A$. Hence, since $q \sim_{\A} s$ and $\A$ is nice, in particular, semantically deterministic, Proposition~\ref{pruned-corollary} implies  that $q' \sim_{\A} s'$. 
	If $\langle s, \sigma, s'\rangle \in \E$, then, by the definition of $\Delta_\E$, it is an allowed transition of $\A$.  Therefore, there is a state $p'\in Q$ such that $s' \sim_\A p'$ and $\zug{s, \sigma, p'}\in \Delta$. As $q\sim_\A s$ and $\A$ is semantically deterministic, Proposition \ref{pruned-corollary}  implies that $q'\sim_{\A} p'$. Therefore, using the fact that $p'\sim_\A s'$, the transitivity of $\sim_\A$ implies that $q' \sim_\A s'$, and so we are done.
\end{proof}

\subsection{Proof of Proposition~\ref{C and CE are equivalent}}
We need to prove that if
	 $p$ and $s$ are states of $\A$ and $\A_\E$, respectively, with $p \sim_{\A} s$, then, $\A^s_\E$ is a GFG-tNCW equivalent to $\A^p$.
	
	We first prove that $L(\A^s_\E) \subseteq L(\A^p)$. Consider a word $w=\sigma_1\sigma_2\ldots \in L(\A^s_\E)$, and let $s_0,s_1,s_2,\ldots $ be an accepting run of $\A^s_\E$ on $w$. Then, there is $i\geq 0$ such that $s_i,s_{i+1},\ldots $ is a safe run of $\A^{s_i}_\E$ on the suffix $w[i+1, \infty]$. Let $p_0,p_1,\ldots p_i$ be a run of $\A^p$ on the prefix $w[1, i]$. Since $p_0 \sim_{\A} s_0$, we get, by an iterative application of Proposition~\ref{pruned-corollaryC}, that $p_i \sim_{\A} s_i$. In addition, as the run of $\A^{s_i}_\E$ on the suffix $w[i+1, \infty]$ is safe, it is also a safe run of $\A^{s_i}$. Hence, $w[i+1, \infty] \in L(\A^{p_i})$, and thus $p_0,p_1,\ldots, p_i$ can be extended to an accepting run of $\A^p$ on $w$. 
	
	Next, as $\A$ is nice, all of its states are GFG, in particular, there is a strategy $f^s$ witnessing $\A^s$'s GFGness. Recall that $\A$ is embodied in $\A_\E$. Therefore, every run in $\A$ exists also in $\A_\E$. Thus, as $p \sim_\A s$, we get that for every word $w\in L(\A^p)$, the run $f^s(w)$ is an accepting run of $\A^s$ on $w$, and thus is also an accepting run of $\A^s_\E$ on $w$. Hence, $L(\A^p) \subseteq L(\A^s_\E)$  and $f^s$ witnesses $\A^s_\E$'s GFGness.

\subsection{Proof of Proposition~\ref{ae is nice}}
We need to prove that for every allowed set $\E$, the GFG-tNCW $\A_\E$ is nice.
	It is easy to see that the fact $\A$ is nice implies that $\A_\E$ is normal and safe deterministic. Also, as $\A$ is embodied in $\A_\E$ and both automata have the same state-space and initial states, then all the states in $\A_\E$ are reachable.
	Finally, Proposition~\ref{C and CE are equivalent} implies that all the states in $\A_\E$ are GFG. To conclude that $\A_\E$ is nice, we prove below that it is semantically deterministic.  Consider transitions $\langle q, \sigma, s_1\rangle$ and $\langle q, \sigma, s_2\rangle$ in $\Delta_{\E}$. We need to show that $s_1 \sim_{\A_\E} s_2$. By the definition of $\Delta_\E$, 	there are transitions $\langle q, \sigma, s'_1\rangle$ and $\langle q, \sigma, s'_2\rangle$ in $\Delta$ for states $ s'_1$ and $s'_2$ such that $s_1 \sim_\A s'_1$ and $s_2 \sim_\A s'_2$. As $\A$ is nice, in particular, semantically deterministic, we have that $s'_1 \sim_\A s'_2$. Hence, as $s_1 \sim_\A s'_1$ and $s'_2 \sim_\A s_2$, we get by the transitivity of $\sim_\A$ that $s_1 \sim_\A s_2$. Then, Proposition~\ref{C and CE are equivalent} implies that $L(\A^{s_1}) = L(\A^{s_1}_{\E})$ and $L(\A^{s_2}) = L(\A^{s_2}_{\E})$, and so we get that $s_1 \sim_{\A_{\E}} s_2$. Thus, $\A_{\E}$ is semantically deterministic. 
 
}
\end{document}